\documentclass[letterpaper, 12pt]{report}                  
\usepackage[reqno]{amsmath}
\usepackage{adjustbox}
\usepackage{multirow}
\newcommand{\uprightcell}[1]{\rotatebox{-90}{#1}} 

\usepackage{lipsum}
\usepackage{textcase}
\usepackage{mathtools}
\usepackage{multicol}
\usepackage{comment}
\usepackage{geometry}
\usepackage{fancyhdr} 

\usepackage{gscale_thesis_doublespace} 

\usepackage{setspace}                           

\halftitle{AI-Driven Cardiorespiratory Signal Processing: Separation, Clustering, and Anomaly Detection}

\title{AI-Driven Cardiorespiratory Signal Processing: Separation, Clustering, and Anomaly Detection}

\author{Yasaman Torabi}
\shortauthor{Y. Torabi} 

\dept{\href{https://www.eng.mcmaster.ca/ece}{Electrical and Computer Engineering}} 
\gschoolname{\href{https://gs.mcmaster.ca/}{School of Graduate Studies}}
\univname{\href{http://www.mcmaster.ca/}{McMaster University}} 
\macaddress{Hamilton, Ontario, Canada}

\prevdegreeone{B.Sc. (Electrical Engineering),\\ Sharif University of Technology, Tehran, Iran}
\prevdegreetwo{B.Sc.} 
\degreename{Doctor of Philosophy}

\submitmonthyear{2025} 
\submitdate{\today} 
\copyrightyear{2025} 

\principaladviser{Prof. Shahram Shirani and Prof. James P. Reilly} 
\usepackage{graphicx}                                   
\usepackage{subcaption}                               
\usepackage[justification=centering]{caption} 
\usepackage[dvipsnames]{xcolor}
\usepackage[
  colorlinks=false,          
  pdfborderstyle={/S/S/W 1}, 
  citebordercolor=ForestGreen, 
]{hyperref}
\usepackage{booktabs}                                 
\usepackage{longtable}                                 
\usepackage{float}                                         
\usepackage{enumerate}   
\usepackage[shortlabels]{enumitem}                            
\usepackage[shortcuts]{extdash}                  
\usepackage{amsmath}
\usepackage{braket} 
\usepackage{amsfonts}                                
\numberwithin{equation}{chapter}              
\usepackage[backend=bibtex,style=numeric,sorting=none,defernums=false]{biblatex}
\newcommand{\figsubref}[2]{\hyperref[#1]{\ref*{#1}#2}}

\usepackage{amsmath, amssymb, amsthm}
\usepackage{algorithm}
\usepackage{algpseudocode}
\usepackage{amsmath}
\usepackage{dsfont}
\usepackage{chngcntr}

\usepackage{tikz}
\usetikzlibrary{arrows.meta,positioning,shapes.geometric}

\counterwithin{algorithm}{chapter} 
\renewcommand{\thealgorithm}{\thechapter.\arabic{algorithm}}

\usepackage{bm}
\newtheorem{theorem}{Theorem}[section]
\newtheorem{lemma}{Lemma}[section]
\theoremstyle{definition}
\newtheorem{definition}{Definition}[section]

\newtheorem{remark}{Remark}[section]
\usepackage{titling}  
\usepackage{ragged2e}

\usepackage{algorithm,algpseudocode}
\usepackage{lipsum}

\makeatletter
\newenvironment{breakablealgorithm}
  {
   \begin{center}
     \refstepcounter{algorithm}
     \hrule height.8pt depth0pt \kern2pt
     \renewcommand{\caption}[2][\relax]{
       {\raggedright\textbf{\fname@algorithm~\thealgorithm} ##2\par}%
       \ifx\relax##1\relax 
         \addcontentsline{loa}{algorithm}{\protect\numberline{\thealgorithm}##2}%
       \else 
         \addcontentsline{loa}{algorithm}{\protect\numberline{\thealgorithm}##1}%
       \fi
       \kern2pt\hrule\kern2pt
     }
  }{
     \kern2pt\hrule\relax
   \end{center}
  }
\makeatother

\begin{document}

\hypersetup{
  hidelinks,         
  colorlinks=false   
}
\beforepreface 
\null\vfill
\begin{center}
\textsl{To my beloved parents, my wonderful supervisors, and all my dearest teachers \\ who have supported me along this academic journey.}
\end{center}
\vfill
    \prefacesection{Lay Abstract}

This research applies artificial intelligence (AI) to separate, cluster, and analyze cardiorespiratory sounds. We recorded a new dataset and developed several AI models, including generative AI methods based on large language models (LLMs) for guided separation, explainable AI (XAI) techniques to interpret latent representations, variational autoencoders (VAEs) for waveform separation, a chemistry-inspired non-negative matrix factorization (NMF) algorithm for clustering, and a quantum convolutional neural network (QCNN) designed to detect abnormal physiological patterns. The performance of these AI models depends on the quality of the recorded signals. Therefore, this thesis also reviews the biosensing technologies used to capture biomedical data. Together, these studies show how AI and next-generation sensors can support more intelligent diagnostic systems for future healthcare.                                  
  \prefacesection{Abstract}

This thesis presents artificial intelligence algorithms for cardiorespiratory signal processing, focusing on blind source separation, clustering, and anomaly detection. We created a heart and lung sound dataset, called \textit{HLS-CMDS}, containing normal and abnormal cardiorespiratory recordings collected from clinical manikins using a digital stethoscope at a sampling frequency of 22 kHz. The dataset supports algorithmic development for analyzing mixed biomedical sounds. 

Beyond algorithm design, this thesis also reviews the sensing technologies that enable these analyses. It summarizes developments in microelectromechanical systems (MEMS) acoustic sensors and quantum biosensors, such as quantum dots and nitrogen-vacancy centers. It further outlines the transition from electronic integrated circuits (EICs) to photonic integrated circuits (PICs) and early progress toward integrated quantum photonics (IQP) for chip-based biosensing. 

For blind source separation, we developed \textit{LingoNMF}, a large language model-based non-negative matrix factorization (NMF) algorithm that combines the periodic nature of the biological sounds with language-guided reasoning. It improves separation performance, increasing heart sound signal to distortion ratio (SDR) and signal to interference ratio (SIR) by up to 3–4 dB. In addition, we introduced \textit{XVAE-WMT}, a masked wavelet-based variational autoencoder (VAE) with temporal-consistency loss and explainable latent-space analysis for separation of overlapping heart and lung sounds. XVAE-WMT yields SDR = 26.8 dB and SIR = 32.8 dB, with a latent-space Silhouette score of 0.345, showing consistently stronger separation and clustering performance than the other VAE variants.
\noindent
For unsupervised clustering, we proposed \textit{Chem-NMF}, a multi-layer $\alpha$-divergence matrix factorization algorithm inspired by physical chemistry phenomena. Just as catalysts reduce activation barriers and regulate the reaction rate, the algorithm controls the initialization steps and stabilizes the algorithm’s convergence.
\noindent
For anomaly detection, we designed \textit{QuPCG}, a quantum convolutional neural network that encodes physiological features into qubits to detect cardiac abnormalities through quantum-classical computation.  It achieves $93.33 \pm 2.9\%$ test accuracy, suggesting potential for binary heart-sound classification.
Together, these works demonstrate how generative AI, physics-inspired algorithms, and quantum machine learning approaches can transform the analysis of cardiorespiratory signals.

  \prefacesection{Acknowledgements}

I would like to express my gratitude to my supervisor, Professor Shahram Shirani, whose mentorship, encouragement, and insightful guidance have been central to my doctoral journey. His depth of knowledge, patience, and vision have influenced my research development. I also wish to express my appreciation to my co-supervisor, Professor James P. Reilly, for his valuable advice, constructive feedback, and continued support throughout the progression of this work. My thanks extend to my committee members, Professor Michael Noseworthy and Professor Thia Kirubarajan, for their thoughtful comments and helpful suggestions, which significantly enhanced the quality of this dissertation. Lastly, I am deeply grateful to my family for their constant love, understanding, and encouragement, which have sustained me throughout this academic journey.                 
    \referencepages 

\hypersetup{
  colorlinks=false,
  pdfborderstyle={/S/S/W 1},
}
\afterpreface                      
  \chapter{Introduction}
\section{Motivation and Background}

Cardiopulmonary diseases are among the most critical global health challenges. Accurate assessment of both heart and lung function is therefore essential for early diagnosis and prevention. Auscultation, i.e. the practice of listening to internal body sounds, has long served as a cornerstone of cardiopulmonary evaluation~\cite{C71ref3}. In 1816, René Laennec invented the first stethoscope to amplify chest sounds~\cite{C71ref5}. Since then, stethoscopes have evolved from simple mechanical tubes into advanced electronic instruments that convert acoustic waves into electrical signals for amplification and recording~\cite{C71ref6}. The emergence of digital stethoscopes further enhanced auscultation by enabling real-time visualization and data storage through cloud-based systems~\cite{Torabi2024arxiv}. Modern devices integrate microelectromechanical technologies to improve sensitivity~\cite{C71ref8}. The \textit{cardiac cycle} is the rhythmic contraction and relaxation of the atria and ventricles that maintain blood circulation. Heart sounds arise from valve motion during systole and diastole~\cite{C71ref17}, while murmurs reflect turbulent flow~\cite{C71ref20}. The \textit{respiratory cycle} alternates inspiration and expiration as air moves through the alveoli~\cite{C71ref25}. Normal breath sounds show low expiratory and higher inspiratory frequencies~\cite{C71ref30}. Crackles are brief discontinuous bursts from airway reopening~\cite{C71ref31}, and wheezes are continuous tones from airway narrowing~\cite{C71ref35}. Together, these acoustic phenomena provide a non-invasive window into cardiopulmonary physiology. Their accurate sensing and intelligent interpretation through AI form the foundation of this thesis.

\section{A Review on Biosensors}

Biosensing technologies play a central role in clinical diagnostics~\cite{Baraeinejad2025IoTResp}. Since the first oxygen biosensor developed by Clark and Lyons in 1962, major advances in microfabrication and materials science have shifted biosensing from bulky laboratory setups to compact, chip-scale platforms~\cite{Hassan2022}. Electronic integrated circuits (EICs) marked the major breakthrough in on-chip biosensing by converting biochemical interactions into measurable electrical signals. CMOS-based biosensors have been widely used for bioimpedance monitoring~\cite{Kim2020}, cancer biomarker detection~\cite{Alhoshany2017}, and glucose sensing~\cite{AlMamun2016}. However, electrical noise, limited sensitivity, and power dissipation have constrained their detection ability. In addition to CMOS, miniature microelectromechanical systems (MEMS) technologies have advanced the performance of EIC biosensors, offering superior sensitivity, greater miniaturization, and higher power efficiency compared to conventional electronic devices. MEMS-based biosensors use mechanical structures that respond to forces or mass changes induced by biomolecular interactions. For example, Mehdipoor et al.~\cite{mehdi} introduced a MEMS resonator for microfluidic analysis with high mass sensitivity. Photonic integrated circuits (PICs) are considered as a promising alternative to EICs, using light–matter interaction for high sensitivity, fast response, and immunity to electromagnetic interference~\cite{Torabi2025Microring}. Waveguide and microring resonator-based PIC biosensors have enabled label-free detection of water pollutants~\cite{Prieto2003}, viruses~\cite{Ning2023}, and antibody profiling~\cite{Bryan2024}. Yet, fabrication complexity and optical losses remain technical challenges~\cite{Sharma2021}. Recent developments in quantum technologies have introduced biosensors based on quantum dots (QDs)~\cite{Wei2022} and nitrogen-vacancy (NV) centers~\cite{Zhang2020}. Despite their superior sensitivity, these devices suffer from decoherence and integration challenges. The growing field of integrated quantum photonics (IQP) aims to merge electronic, photonic, and quantum functionalities on a single chip. Despite these advances, many challenges remain in implementing fully integrated quantum photonic sensors. Current IQP circuits face notable challenges, including maintaining quantum coherence at room temperature, efficient photon coupling, minimizing optical losses, and achieving reliable quantum state readout without extensive cryogenic infrastructure. In 2025, Kramnik et al.~\cite{Kramnik2025} demonstrated the first CMOS-compatible IQP platform, marking a key step toward scalable quantum biosensing on silicon. Figure~\ref{fig:biosensors} illustrates the development timeline of biosensing devices.

\begin{figure}[H]
    \centering
    \includegraphics[width=\linewidth, trim= 0 100 0 120, clip]{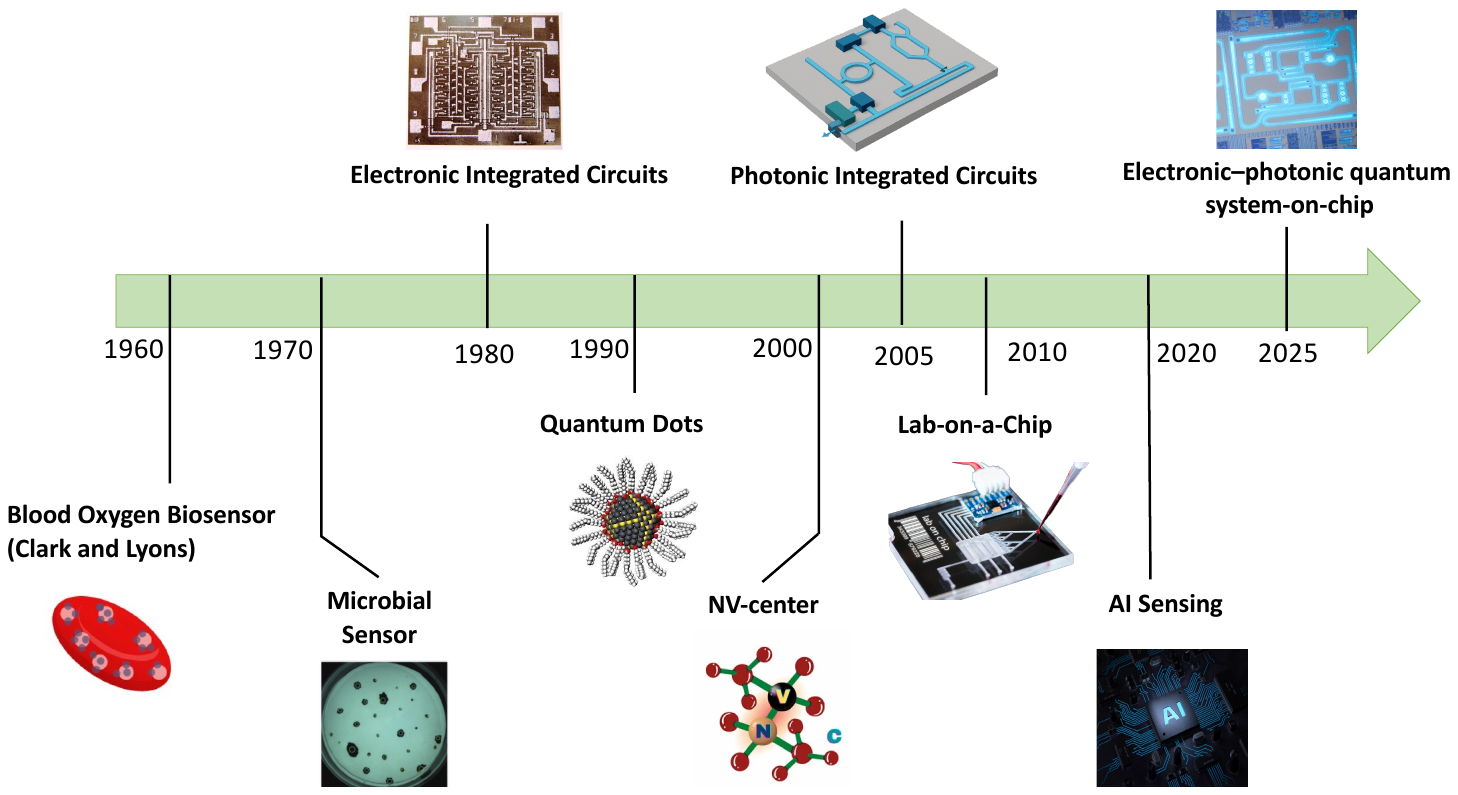}
    \captionsetup{justification=justified,font=footnotesize}
    \caption{Biosensor development timeline. It highlights key technological milestones from the first blood oxygen biosensor, integrated circuits, and quantum sensors, to recent advances toward integrated quantum–photonic systems and AI-assisted lab-on-chip biosensing.}
    \label{fig:biosensors}
\end{figure}

\section{NMF Algorithms}

Non-negative matrix factorization (NMF) is an interpretable representation learning method that decomposes data into low-rank components~\cite{Torabi2023arxiv}. Non-negativity enforces the additive, physically meaningful structure of cardiorespiratory signals and helps improve separation stability and interpretability. As a clustering-based approach, NMF generates compact feature representations that preserve the physical meaning of the data~\cite{C71ref28}. It has been successfully used in image recognition, text mining, and blind source separation, where each non-negative basis vector corresponds to a distinct underlying signal pattern~\cite{Torabi2023}. Unlike classical multi-microphone approaches such as Independent Component Analysis (ICA), which rely on spatial diversity and statistical independence across sensors, the proposed NMF-based methods
operate in a single-channel setting and exploit spectral structure and non-negativity. As a result, they are better suited to practical auscultation scenarios where only one
recording channel is available.

Numerous variants have been developed to enhance its robustness, including graph-regularized NMF~\cite{LI2025111679}, locality-preserving NMF~\cite{IMANI202510967457}, and robust distributionally-regularized NMF~\cite{GILLIS20224052}, each designed to improve clustering performance in noisy or high-dimensional datasets. More recent approaches, such as encoder–decoder NMF, incorporate autoencoder architectures to improve feature separability~\cite{Soleymanbaigi2025}. Similarly, multi-view tensor decomposition frameworks combine low-rank representation learning with clustering indicators to unify multiple modalities within a common feature space~\cite{Wang2025}. Classical NMF algorithms, however, rely heavily on initialization, cost-function design, and convergence control, which limit their adaptability when handling nonstationary biomedical data~\cite{C71ref27}. To overcome these limitations, recent studies have explored integrating large language models (LLMs) and generative AI methods with NMF to enable dynamic parameter adjustment during training~\cite{Torabi2025LLMNMF}. LLMs are capable of identifying temporal and contextual dependencies in data and can refine the factorization process iteratively~\cite{C71ref22}. For example, Zhang et al.~\cite{Zhang2024} applied LLMs for automated hyperparameter tuning, while Wanna et al.~\cite{Wanna2024} introduced prompt-based reasoning for improved component labelling and cost-function adaptation. In biomedical signal analysis, the BioSignal Copilot framework demonstrated the potential of LLMs to interpret physiological data and organize extracted features in an explainable format~\cite{C71ref24}. These LLM-based approaches integrate interpretable matrix decomposition with adaptive learning.

\section{VAE Algorithms}

Variational Autoencoders (VAEs) are generative models that learn a structured latent space of the input data and generate new signals with similar patterns~\cite{Torabi2025VariationalBSS}. Compared to standard autoencoders, VAEs use a probabilistic learning process. Different forms of VAEs, including conditional VAE~\cite{sohn2015cvae}, vector-quantized VAE~\cite{oord2017vqvae}, and residual quantized VAE~\cite{berti2024rqvae}, have been developed to improve the latent space. Explainable AI (XAI) methods are now used to study how VAEs organize information in the latent space~\cite{mishra2025explainable}. Such techniques identify which latent dimensions contribute most to reconstruction, and help in reducing model complexity while keeping accuracy. The combined use of VAE and XAI supports interpretable separation of biomedical acoustic signals.

\section{Quantum AI Algorithms}

The future of AI is moving toward quantum computing. Classical AI systems process information sequentially, while quantum AI uses qubits that can represent multiple states at once~\cite{Patil2024}. Quantum machine learning (QML) combines the probabilistic nature of quantum systems with classical optimization to improve computational efficiency~\cite{Torabi2025QuPCG}. In hybrid models, quantum circuits extract correlations in the encoded feature space and classical layers update the parameters. Variational quantum circuits (VQCs) are widely used in pattern classification and clustering~\cite{Cerezo2021}. Quantum convolutional neural networks (QCNNs) extend this concept by using layers of quantum convolution and pooling gates that learn hierarchical representations~\cite{Cong2019}. Recent studies have applied quantum neural networks to biomedical data. Ullah et al.~\cite{Ullah2022} developed a fully connected QCNN for ischemic heart disease classification, showing higher accuracy and reduced parameter count than classical CNNs. Li et al.~\cite{Li2026} proposed a quantum-inspired convolutional model for pneumonia diagnosis. Similar work explored hybrid quantum networks for EEG classification~\cite{KoikeAkino2022} and stress detection~\cite{Nath2021}. These results confirm that quantum models can process physiological signals effectively when information is encoded in a compact form.

        \setcounter{figure}{0}
        \setcounter{equation}{0}
        \setcounter{table}{0}     
  \chapter{Integrated Biosensing: From MEMS to Quantum Biosensors on Chip}

This chapter is based on the following publications, and all figures are reproduced with permission from the cited reference; each publication is released under a Creative Commons licence that permits reuse and reprinting.

\vspace{5mm}
\noindent
\cite{Torabi2024MEMSReview} Y. Torabi, S. Shirani, J. P. Reilly, and G. M. Gauvreau, 
``MEMS and ECM Sensor Technologies for Cardiorespiratory Sound Monitoring—A Comprehensive Review,'' 
in \textit{Sensors}, vol. 24, no. 21, p. 7036, 2024. 
\href{https://doi.org/10.3390/s24217036}{doi:10.3390/s24217036}.

\vspace{5mm}
\noindent
\cite{Torabi2025QuantumBiosensors} Y. Torabi, S. Shirani, and J. P. Reilly, 
``Quantum Biosensors on Chip: A Review from Electronic and Photonic Integrated Circuits to Future Integrated Quantum Photonic Circuits,'' 
in \textit{Microelectronics}, vol. 1, no. 2, p. 5, 2025. \href{https://doi.org/10.3390/microelectronics1020005}{doi:10.3390/microelectronics1020005}.

\newpage
Biosensing technologies have evolved from simple transducers to highly integrated platforms capable of detecting physiological signals. Microelectromechanical systems (MEMS) have laid the foundation for wearable and non-invasive monitoring of cardiorespiratory activity. However, as clinical diagnostics demand higher sensitivity and miniaturization, a paradigm shift toward quantum-enabled biosensing is arising. This chapter provides a perspective on the evolution of integrated biosensing from MEMS-based sensors used in cardiorespiratory monitoring to advanced quantum biosensors based on quantum dots and nitrogen-vacancy centers. By bridging electronic, photonic, and quantum technologies, we outline the trajectory toward chip-scale biosensing platforms capable of ultra-sensitive detection.

\section{MEMS Acoustic Sensing Technologies}

\subsection{Introduction}
MEMS have become essential in wearable biosensing thanks to their compact size and low power consumption. Unlike traditional acoustic transducers, MEMS sensors enable the precise capture of weak heart and lung sounds with higher sensitivity and improved manufacturability, making them suitable for continuous monitoring and AI-based diagnostics. Their integration into portable platforms allows real-time signal analysis and digital processing, positioning MEMS as a key technology in the shift from conventional auscultation to intelligent healthcare. Despite ongoing challenges in dynamic range and biological coupling, MEMS offer a promising pathway toward high-fidelity auscultation systems.

\subsection{MEMS Working Principle}

Compared to electret condenser microphones, which are widely used for general applications, MEMS sensors offer superior sensitivity, lower power consumption, and miniaturization~\cite{C71ref84}. Figure~\figsubref{fig:mems_figures}{a} and Figure~\figsubref{fig:mems_figures}{b} illustrate the working principle of a MEMS microphone. These sensors employ a deformable diaphragm that forms a capacitive structure with a fixed backplate. When acoustic pressure is applied, the diaphragm deflects, causing a change in capacitance and generating an electrical signal. 
\begin{figure}[H]
    \centering
    \includegraphics[width=0.7\linewidth]{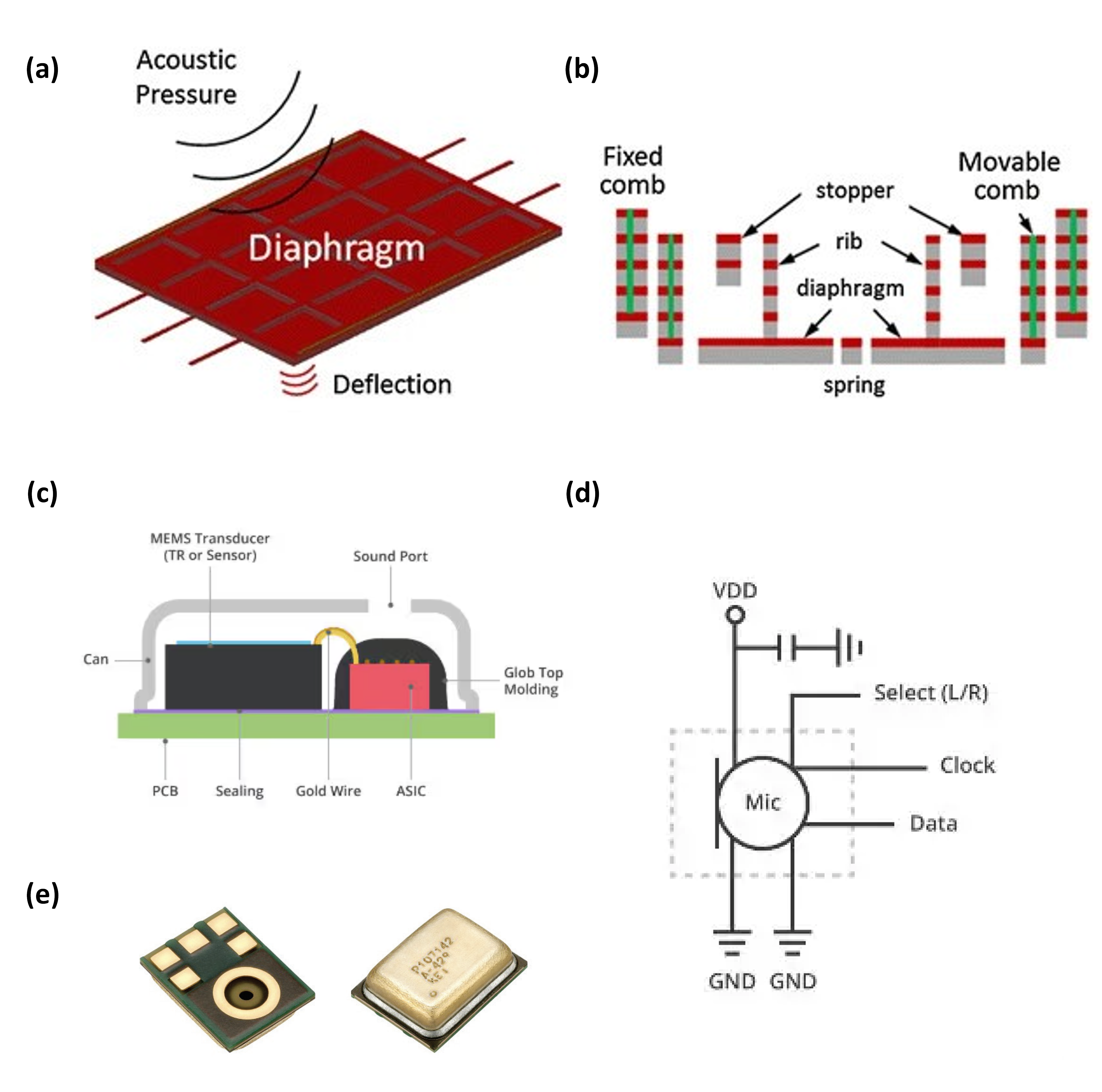}
    \captionsetup{justification=justified,font=footnotesize}
    \caption{\textbf{(a)} MEMS operating principle~\cite{C71ref90}; 
             \textbf{(b)} Cross-sectional structure of a MEMS microphone~\cite{C71ref90}; 
             \textbf{(c)} Typical MEMS microphone assembly~\cite{C71ref70}; 
             \textbf{(d)} Digital MEMS microphone application schematic~\cite{C71ref70}; 
             \textbf{(e)} Commercial MEMS microphone packages~\cite{C71ref88}. 
            }
    \label{fig:mems_figures}
\end{figure}

A representative MEMS microphone package is shown in Figure~\figsubref{fig:mems_figures}{c}. MEMS microphones may provide either analog or digital output. Digital MEMS microphones commonly utilize the I\textsuperscript{2}S protocol~\cite{C71ref88}, time-division multiplexing (TDM)~\cite{C71ref87}, or pulse density modulation (PDM)~\cite{C71ref85} for communication with processing units. These digital interfaces eliminate the need for external analog-to-digital conversion circuits, as illustrated in Figure~\figsubref{fig:mems_figures}{d} and Figure~\figsubref{fig:mems_figures}{e}. MEMS sensors are suitable for applications that demand high performance and stability, thanks to their small size, mass production capabilities, and low energy consumption.

\subsection{Recent Advances in MEMS-Based Acoustic Sensors}

In 2012, Hu et al.~\cite{C71ref98} developed a chest-worn accelerometer based on an asymmetrical gapped cantilever design to enhance the detection sensitivity of heart and lung vibrations. In 2016, Zhang et al.~\cite{C71ref99} investigated the trade-off between sensor sensitivity and the first-order resonant frequency. To address this challenge, they introduced a MEMS piezoresistive heart sound sensor with a double-beam-block configuration, optimized through analytical modelling and simulation (Figure~\figsubref{fig:mems_advanced}{a}). This architecture improved the resonant frequency while preserving high sensitivity through stress-concentration grooves. Subsequently, Wang et al.~\cite{C71ref100} proposed a bat-shaped MEMS electronic stethoscope integrating a Wheatstone bridge with piezoresistors (Figure~\figsubref{fig:mems_advanced}{b}), a configuration known for detecting minimal resistance changes with high precision~\cite{C71ref101}.

In further developments, Yilmaz et al.~\cite{C71ref102} designed a wearable stethoscope tailored for long-term respiratory monitoring. Their approach employed a diaphragm-less transducer incorporating silicone rubber and piezoelectric film to effectively capture airflow-induced acoustic signals (Figure~\figsubref{fig:mems_advanced}{c}, and Figure~\figsubref{fig:mems_advanced}{d}).

\begin{figure}[H]
    \centering
    \includegraphics[width=0.85\linewidth]{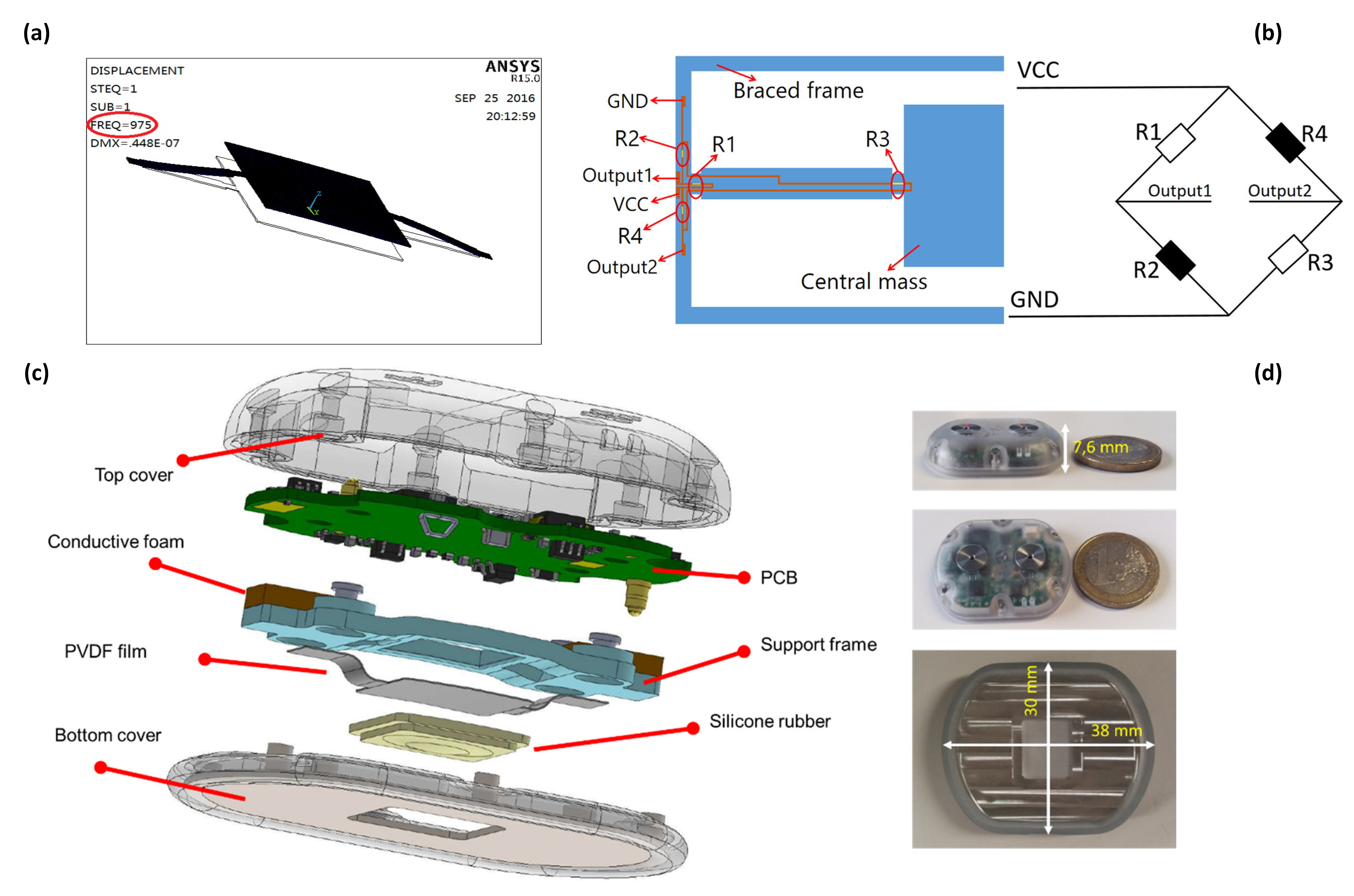}
    \captionsetup{justification=justified,font=footnotesize}
    \caption{\textbf{(a)} Coupling model analysis of MEMS piezoresistive sensor~\cite{C71ref99};
             \textbf{(b)} Resistance distribution in Wheatstone bridge configuration~\cite{C71ref100};
             \textbf{(c)} Exploded view of diaphragm-less wearable sensor module~\cite{C71ref102};
             \textbf{(d)} Assembled wearable sensor: side view with 1 Euro coin, top view, and bottom view~\cite{C71ref102}.
            }
    \label{fig:mems_advanced}
\end{figure}

In 2021, Li et al.~\cite{C71ref104} proposed a stethoscope employing a novel MEMS microstructure designed to minimize environmental noise interference. To optimize low-frequency sensitivity in heart sound acquisition, they employed a fabrication process combining Low-Pressure Chemical Vapour Deposition (LPCVD) and Deep Reactive Ion Etching (DRIE). The process began with the thermal oxidation of a Silicon-On-Insulator (SOI) wafer to establish nano-scale gaps, followed by DRIE to define the cantilever microstructure. LPCVD was subsequently used to deposit silicon dioxide, forming the sensing elements. Final integration included capping wafer bonding with through-silicon vias (TSVs) for electrical interfacing, as illustrated in Figure~\ref{fig:li_fabrication}. This fabrication strategy significantly improved device performance under clinical noise conditions, enhancing applicability in cardiology diagnostics.

\begin{figure}[H]
    \centering
    \includegraphics[width=0.75\linewidth]{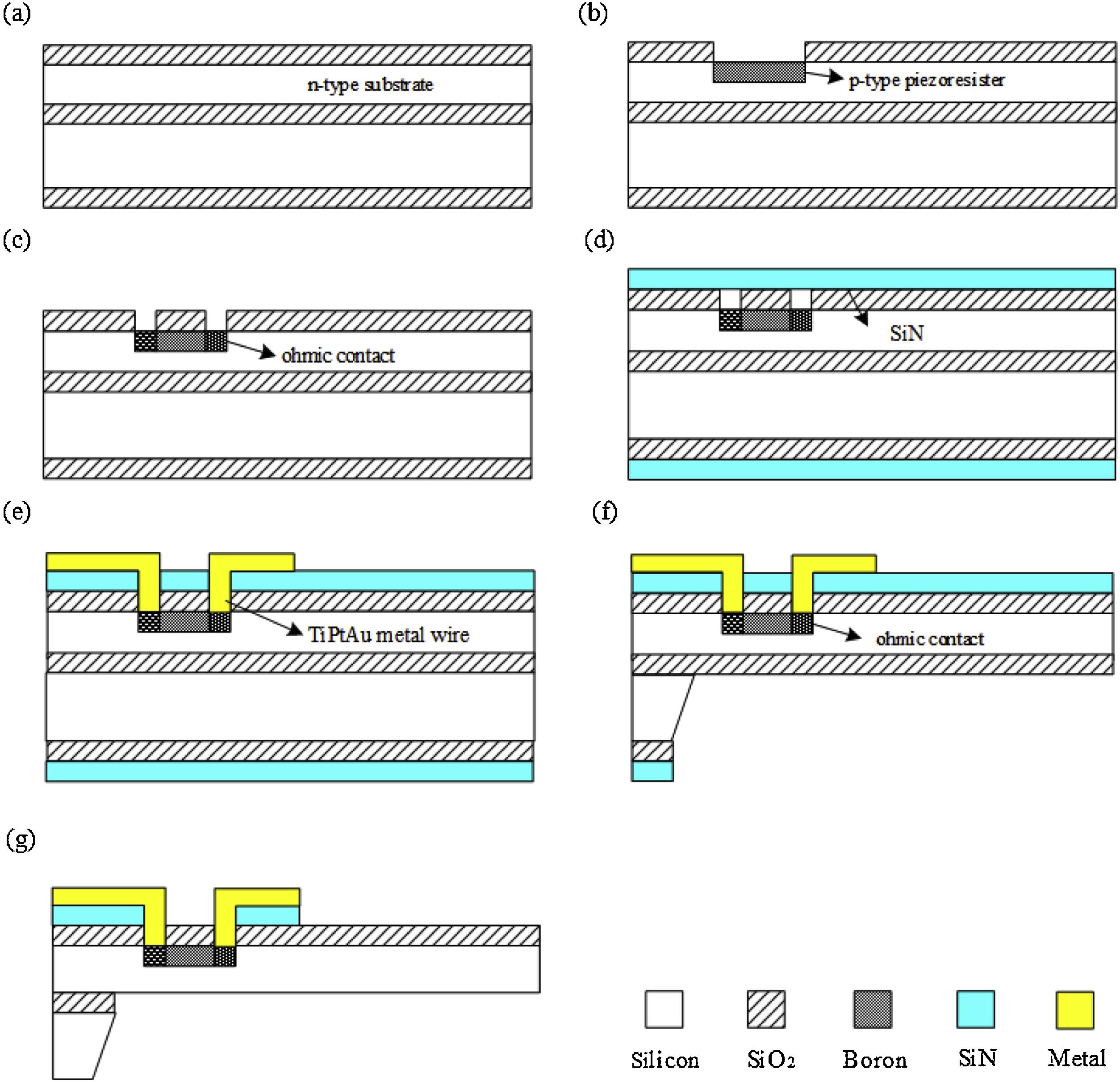}
    \captionsetup{justification=justified,font=footnotesize}
    \caption{Fabrication process of the magnetic-induction MEMS stethoscope proposed by Li et al.~\cite{C71ref104}: 
             \textbf{(a)} Wafer preparation and oxidation; 
             \textbf{(b)} Boron implantation for piezoresistors; 
             \textbf{(c)} Re-oxidation and dense boron implantation;
             \textbf{(d)} Double-sided SiN deposition; 
             \textbf{(e)} Metal sputtering for lead formation; 
             \textbf{(f)} Back cavity etching; 
             \textbf{(g)} Structural release via front etching. 
            }
    \label{fig:li_fabrication}
\end{figure}

\subsection{AI-Enabled MEMS-based Wearables}

Recent research has increasingly focused on enhancing real-time and remote monitoring capabilities for cardiorespiratory diagnostics. S.~Hoon Lee et al.~\cite{C71ref107} developed a soft, skin-conformal wearable stethoscope using nanomaterial printing of silicone elastomers and conductive hydrogels. This device employed convolutional neural networks (CNNs) for real-time sound classification, achieving high diagnostic accuracy with effective noise suppression (Figure~\figsubref{fig:ai_mems_wearable}{a}--\figsubref{fig:ai_mems_wearable}{d}). Meanwhile, S.~Hyun Lee et al.~\cite{C71ref108} introduced a flexible lung sound monitoring patch equipped with AI-based breath sound analysis. Decision tree and support vector machine (SVM) algorithms were implemented to detect wheezes and classify respiratory patterns in real time (Figure~\figsubref{fig:ai_mems_wearable}{e}--\figsubref{fig:ai_mems_wearable}{f}). In another work, Baraeinejad et al.~\cite{C71ref109} presented a multifunctional digital stethoscope integrating MEMS microphones with Internet of Things (IoT) connectivity. By incorporating digital signal processing and machine learning algorithms, the device enhanced diagnostic accuracy, noise management, and remote clinical accessibility. Collectively, these innovations underscore the evolution of wearable MEMS technologies toward miniaturization, real-time AI processing, and remote healthcare applications.

\begin{figure}[H]
    \centering
    \includegraphics[width=0.85\linewidth]{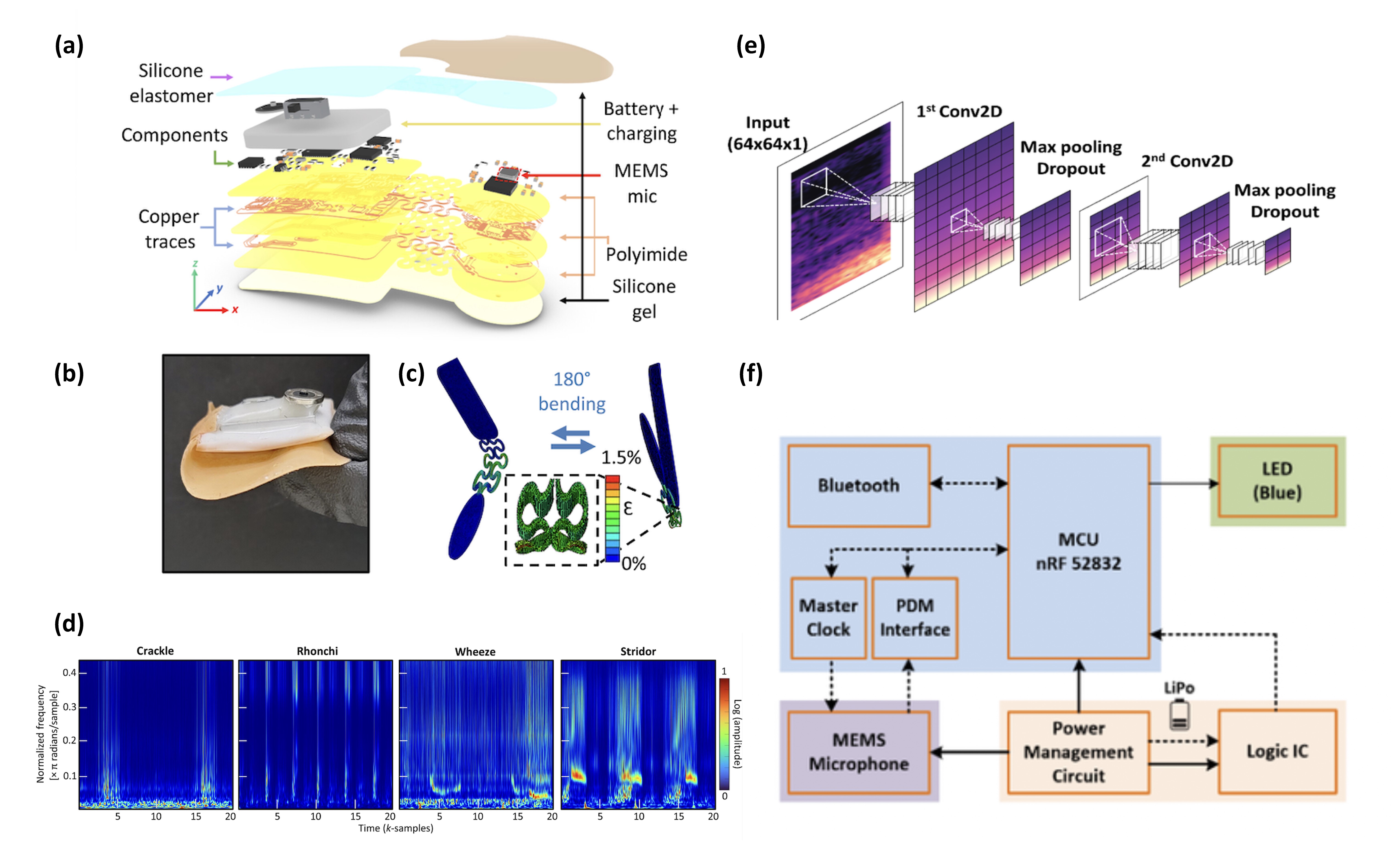}
    \captionsetup{justification=justified,font=footnotesize}
    \caption{ Examples of Intelligent Wearables: \textbf{(a)} Exploded view of soft wearable sensor by S.~Hoon Lee et al.~\cite{C71ref107}; 
             \textbf{(b)} Sensor under 180$^\circ$ bending~\cite{C71ref107}; 
             \textbf{(c)} Simulation results showing cyclic bending durability~\cite{C71ref107}; 
             \textbf{(d)} Spectrogram of cardiorespiratory sounds (crackle, rhonchi, wheeze, stridor)~\cite{C71ref107}; 
             \textbf{(e)} Embedded system architecture of sound monitoring patch by S.~Hyun Lee et al.~\cite{C71ref108}; 
             \textbf{(f)} Deep learning model architecture for respiratory sound classification~\cite{C71ref108}. 
             }
    \label{fig:ai_mems_wearable}
\end{figure}

\subsection{Conclusion}

MEMS-based acoustic biosensors are considered a pivotal technology in modern cardiorespiratory monitoring, offering significant advantages in sensitivity, miniaturization, and integration with wearable platforms. Over the past decade, innovations in MEMS transducer architecture, flexible substrates, and digital communication protocols have enabled high-fidelity acquisition of heart and lung sounds. Despite these advancements, several challenges remain, including motion-related interference, susceptibility to environmental noise, fabrication costs, and long-term biocompatibility for wearable use. The fusion of MEMS hardware with AI presents a transformative opportunity for real-time, automated diagnostics; however, fully autonomous clinical interpretation is yet to be realized. Future development will rely on interdisciplinary collaboration, combining advancements in microfabrication, digital signal processing, and generative AI to create intelligent auscultation devices that deliver reliable clinical insights beyond traditional examination methods.

\section{Quantum Biosensors on Chip}

\subsection{Introduction}
Quantum biosensors apply quantum mechanical principles to improve the detection of biochemical signals at low concentrations. Classical chip-based biosensing first advanced through electronic integrated circuits (EICs), which enabled early signal transduction using electrical measurements. However, issues such as electrical noise and thermal drift limited further sensitivity. The shift from EICs to photonic integrated circuits (PICs) introduced optical signal processing with higher stability and reduced interference. Quantum sensing builds upon this progression through quantum-based concepts, such as superposition and coherence. Quantum dots (QDs) utilize size-dependent optical emission for biological detection, while nitrogen-vacancy (NV) centers in diamond enable spin-state measurement for nanoscale magnetic and thermal sensing. PICs provide a suitable base for integrating these quantum elements into compact chip structures. The combination of quantum sensing with integrated photonics defines the direction toward integrated quantum photonics (IQPs). This chapter introduces fundamental quantum concepts, examines QD and NV biosensing, and outlines developments toward IQP-based biosensing systems.
\subsection{Fundamentals of Quantum Concepts}
In quantum information, a fundamental concept is the quantum bit (qubit). Unlike classical bits, which exist only in definite states $|0\rangle$ or $|1\rangle$, a qubit can exist in a linear combination of both states. This phenomenon is called superposition. Superposition allows quantum systems to encode more information than classical devices, which is essential in quantum-accelerated tasks such as protein folding \cite{C72ref20} and molecular structure search \cite{C72ref23}. Another cornerstone is entanglement, in which two or more particles share a quantum state that cannot be described independently. Such entangled states form the foundation for quantum communication protocols, which are increasingly being studied for biomedical applications requiring precise, correlated measurements \cite{C72ref25}. Another fundamental idea is the Heisenberg uncertainty principle, which imposes a lower bound on the simultaneous precision with which conjugate variables (e.g., position and momentum) can be known. This is fundamental to quantum measurement theory and directly influences the noise floor of biosensors and imaging systems \cite{C72ref27}. This intrinsic uncertainty enables quantum-enhanced metrology. For instance, by exploiting squeezed states, quantum sensors can detect tiny magnetic or electric field fluctuations within biological tissues \cite{C72ref29}. The time evolution of quantum systems is governed by the Schrödinger equation. Solutions to this equation form the basis for simulating quantum dynamics in biochemical systems, including reaction mechanisms and metabolic pathways \cite{C72ref32}.

One of the most well-established quantum effects in biology is quantum tunnelling, where subatomic particles cross energy barriers that are classically forbidden. In enzymatic catalysis, tunnelling enhances reaction rates far beyond what thermal activation can explain \cite{C72ref35}. Quantum models have also revisited proton tunnelling in DNA \cite{C72ref36,C72ref37}. This has implications for cancer development \cite{C72ref39} and genomic stability \cite{C72ref40}. Meanwhile, quantum coherence refers to the ability of a quantum system to maintain phase relationships between its states over time. This phenomenon is used in quantum sensors, particularly those based on NV centers, which exhibit extreme sensitivity to electromagnetic fields \cite{C72ref43}. These technologies are being explored for biomedical applications, such as magnetoencephalography (MEG) \cite{C72ref44}, brain activity monitoring \cite{C72ref45}, and single-cell spectroscopy \cite{C72ref46}. Figure~\ref{fig:quantum_neural_model} illustrates how quantum coherence may play a role in neuronal function. The diagram shows quantum interactions between protein-based qubits within a voltage-gated potassium channel. This suggests that quantum coherence within ion channels could influence neural signalling at the cellular level \cite{C72ref47}. 

\begin{figure}[H]
    \centering
    \includegraphics[width=0.75\linewidth]{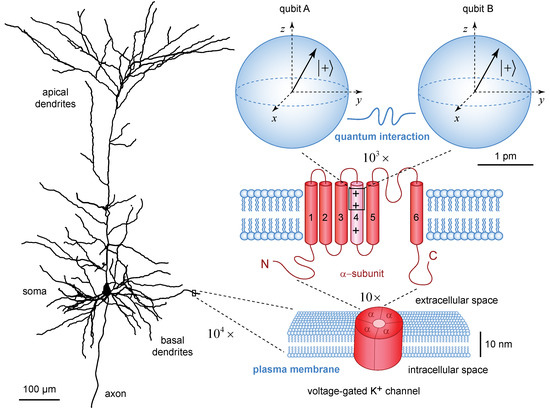}
    \captionsetup{justification=justified,font=footnotesize}
    \captionsetup{font=footnotesize}
    \caption{Quantum model of neural signaling~\cite{C72ref47}. 
    Quantum states inside channel proteins may act like qubits that can interact and remain in superposition. 
    These interactions can influence how ion channels open or close, thereby regulating the flow of potassium ions. 
    Since ion flow shapes neuronal firing and synaptic transmission, quantum effects at this nanoscale may potentially affect how neurons process and transmit information.
    }
    \label{fig:quantum_neural_model}
\end{figure}

\subsection{Quantum Dot and NV Center Biosensors}
Quantum sensing uses quantum properties to measure physical quantities with higher sensitivity than classical sensors. We focus on two quantum sensing approaches: quantum dot and NV center-based sensors. 

QD biosensors (Figure~\ref{fig:qd_basic_structure}) utilize semiconductor nanocrystals for the sensitive detection of biological targets. They emit size-tunable light when excited, and their surfaces can be modified to bind specific biomolecules. When a target molecule interacts with a QD, it changes the dot’s fluorescence, providing an optical readout. Thanks to their stable signal and ability to detect multiple targets at once, QD biosensors are widely used for biomarker detection and molecular imaging. 

\begin{figure}[H]
    \centering
    \includegraphics[width=0.6\linewidth]{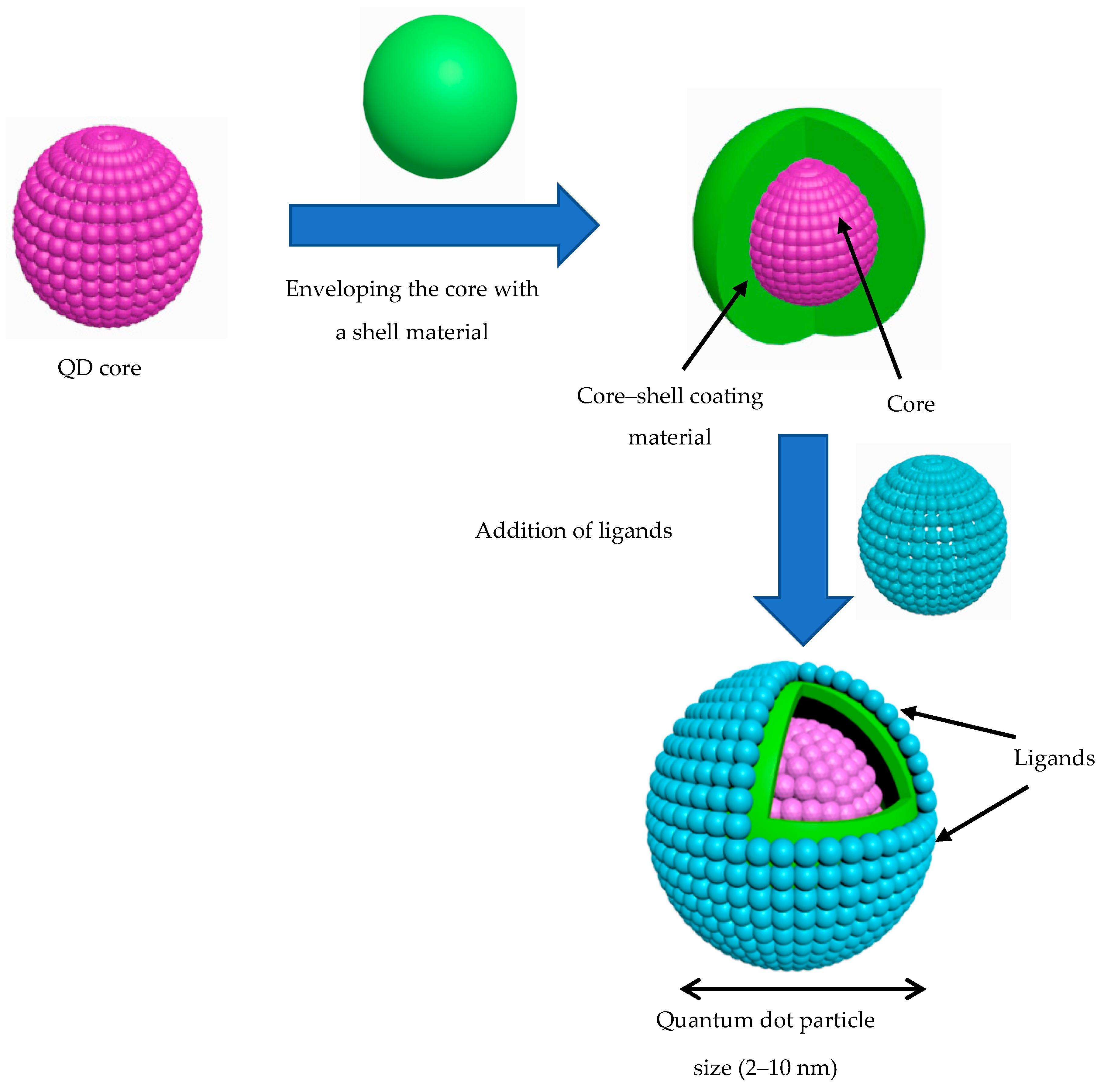}
    \captionsetup{font=footnotesize}
    \caption{Quantum dot basic structure consisting of core, shell, and surface ligands~\cite{C72ref90}.}
    \label{fig:qd_basic_structure}
\end{figure}

NV center biosensors use atomic-scale defects in diamond for highly sensitive detection of biological signals. Figure~\ref{fig:nv_center_schematic} illustrates a typical NV center formation within the diamond lattice. It can exist in either a neutral or a negatively charged state, with the \(\mathrm{NV}^{-}\) state being optically active and sensitive to external perturbations. In operation, a green laser excites the \(\mathrm{NV}^{-}\) center, and the resulting spin-dependent fluorescence is collected by a photodetector. Microwave radiation is applied to manipulate the spin state of the NV center. In practice, the microwaves drive transitions between the spin sublevels. Any external influence, such as a magnetic field or biomolecular interaction, shifts the spin resonance frequency~\cite{C72ref96}. By monitoring these resonance shifts, the NV center functions as a highly sensitive probe of magnetic, thermal, or chemical variations.

\begin{figure}[H]
    \centering
    \includegraphics[width=0.5\linewidth]{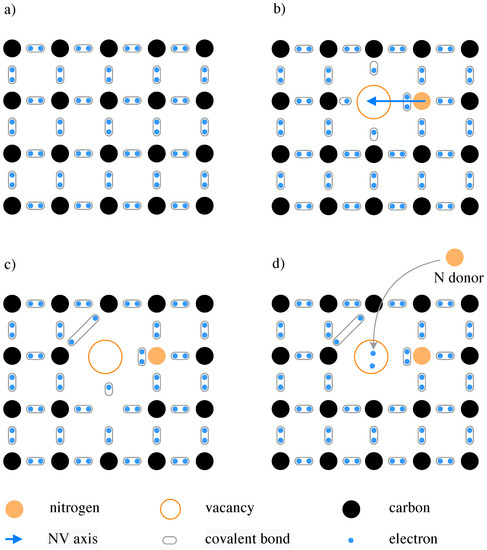}
    \captionsetup{justification=justified,font=footnotesize}
    \caption{NV center formation in diamond~\cite{C72ref96}: 
    \textbf{(a)} Nitrogen atom and nearby vacancy form the NV center; 
    \textbf{(b)} Nitrogen bonds to neighbouring atoms; 
    \textbf{(c)} NV center in the neutral state; 
    \textbf{(d)} Capture of an extra electron from a nitrogen donor produces the negatively charged state.}
    \label{fig:nv_center_schematic}
\end{figure}

The limit of detection (LOD) defines the lowest concentration of analyte that a sensor can detect. In terms of LOD, NV center-based sensors achieve higher sensitivity. Multiplexing capability refers to the sensor’s ability to simultaneously detect multiple targets in a single assay. QD biosensors generally offer superior multiplexing due to their size-tunable, spectrally distinct emissions. QD sensors enable multiplexing but encounter issues with toxicity and long-term stability. NV centers offer room-temperature operation and single-molecule resolution, but their performance is limited by photon collection efficiency and fabrication complexity. Recent studies (Table~\ref{tab:qd_nv_table}) demonstrate that both QD and NV center biosensors offer strong potential for practical applications, such as disease detection, molecular diagnostics, and single-molecule bioimaging.

\subsection{Photonic Integrated Circuits for Biosensing}

Electronic Integrated Circuits (EICs), particularly those based on CMOS technology, have long served as the foundation for biosensing devices. MEMS technologies have further advanced the performance of EIC biosensors, offering superior sensitivity, greater miniaturization, and higher power efficiency. Photonic Integrated Circuits (PICs), on the other hand, process information using light instead of electrical signals. Silicon photonic components are now considered among the most promising platforms for photonic integration (Figure~\figsubref{fig:pic_figures}{c})~\cite{C72ref109}. While EICs offer compact integration and cost-effectiveness, PICs provide greater noise immunity and speed. PIC-based biosensors are commonly realized using waveguide structures, such as ring resonators or interferometers, which monitor changes in refractive index with high sensitivity. A waveguide is a narrow optical channel that directs light along a defined path on the chip. Some of the light travelling in the waveguide extends just outside its surface, forming what is called an evanescent field. When target biomolecules attach to the surface of the waveguide, they interact with this evanescent field and cause a change in the local refractive index. This, in turn, alters the properties of the light travelling through the waveguide, such as its speed or phase. Among the most widely used waveguide-based biosensing structures are micro-ring resonators (MRRs) and Mach–Zehnder interferometers (MZIs)~\cite{C72ref111}. A ring resonator consists of a tiny closed-loop optical waveguide that traps light. It allows only specific wavelengths to resonate constructively. When biomolecules attach to the ring’s surface, they change the local refractive index, causing a shift in the resonance wavelength. An MZI splits light into two separate paths. One path is exposed to the sample, so that biomolecular binding alters its refractive index and optical path length. When the two light beams recombine, any difference in phase between the paths results in constructive or destructive interference, which can be measured as a change in output intensity. Various PIC biosensors based on MZIs and MRRs have been proposed for label-free biosensing (Table~\ref{tab:pic_comparison}). For example, Bryan et al.~\cite{C72ref10} proposed a multiplexed antibody detection MRR biosensor using SiN ring arrays (Figure~\figsubref{fig:pic_figures}{a}, and Figure~\figsubref{fig:pic_figures}{b}). Meanwhile, Vogelbacher et al.~\cite{C72ref113} developed an MZI biosensor with an integrated laser diode for bacterial protein detection (Figure~\figsubref{fig:pic_figures}{d}). In another work, Voronkov et al.~\cite{C72ref119} proposed a dual-ring resonator for liquid sensing (Figure~\figsubref{fig:pic_figures}{e}).

\begin{figure}[H]
    \centering
    \captionsetup{justification=justified,font=footnotesize}
    \includegraphics[width=0.8\linewidth]{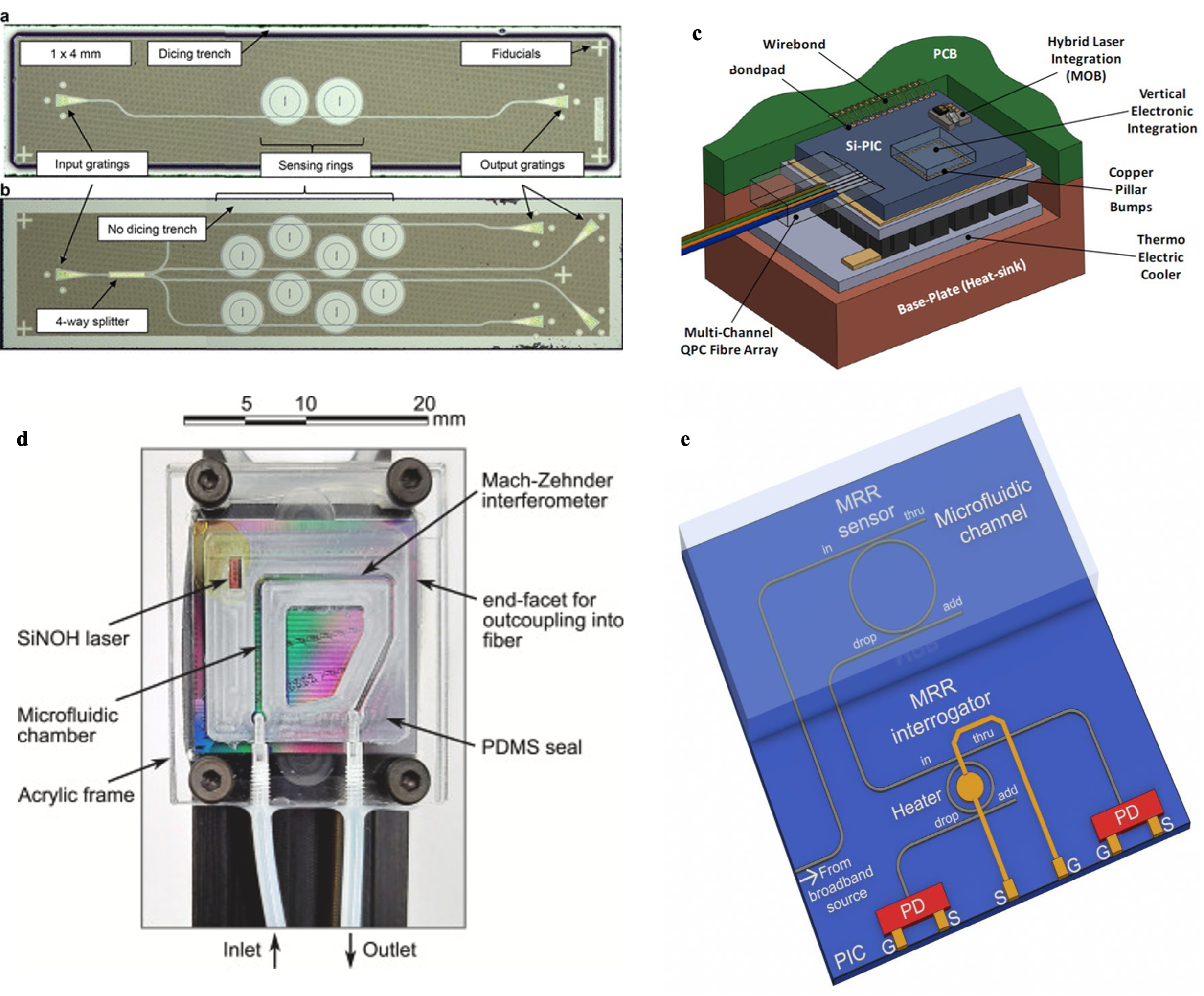}
    \caption{Examples of PIC-based biosensors: 
    \textbf{(a)} Singleplex PIC for antibody profiling, by Bryan et al.~\cite{C72ref10}; 
    \textbf{(b)} Multiplex PIC, by Bryan et al.~\cite{C72ref10}; 
    \textbf{(c)} Schematic of a Silicon PIC packaged with a multi-channel quasi-planar coupled (QPC) fibre-array, a hybrid-integrated laser source based on a micro-optic bench (MOB), and a thermo-electric cooler~\cite{C72ref120}; 
    \textbf{(d)} Implementation of the waveguide MZI photonic lab-on-a-chip biosensor, by Vogelbacher et al.~\cite{C72ref113}; 
    \textbf{(e)} Design draft of the integrated MRR-based photonic sensing system for liquid refractometry, by Voronkov et al.~\cite{C72ref119}.}
    \label{fig:pic_figures}
\end{figure}
\subsection{Toward Integrated Quantum Photonics (IQPs)}
Chip-scale quantum system architectures integrate quantum elements directly onto a single microelectronic chip using advanced semiconductor fabrication. To build compact integrated quantum photonic (IQP) chips, silicon-based approaches have been widely investigated. Despite progress on individual components, achieving full integration likely demands the use of multiple materials. So far, a chip-based IQP remains a scientific challenge. However, with the successful integration of on-chip light sources, a fully integrated silicon-based IQP holds strong potential for future microelectronics~\cite{C72ref121}. Lithium niobate (LiNbO$_3$) is a material used in integrated photonics, along with diamond, silicon carbide (SiC), and III--V semiconductors such as gallium arsenide (GaAs), indium phosphide (InP), and gallium nitride (GaN). Silicon nanostructures, in particular, are now recognized as leading materials for quantum photonic technology~\cite{C72ref122}. IQP devices, including quantum light sources, phase shifters, and single-photon detectors, form the essential building blocks for on-chip quantum information processing and sensing. Among integrated quantum photonic devices, quantum light sources and phase shifters are already compatible with microelectronic fabrication and can be integrated on silicon chips. Single-photon detectors (SPDs) can also be integrated, especially single-photon avalanche diodes (SPADs), though superconducting nanowire single-photon detectors (SNSPDs) often require cryogenic cooling, which limits their practical use. Despite these advances, many challenges remain for fully integrated quantum photonic sensors. For instance, an innovative design~\cite{C72ref133} using asymmetric waveguides for bidirectional transmission on printed circuit boards (PCBs) suggests a potential pathway to connect IQP with conventional microelectronics, indicating that significant work is still needed in this growing field~\cite{C72ref134}. In 2025, Kramnik et~al.~\cite{C72ref18} demonstrated the first commercial integrated quantum photonic chip that combines electronic, photonic, and quantum components on a single silicon platform, a major step toward IQP biosensors (Figure~\ref{fig:iqp_kramnik}).

\begin{figure}[H]
  \centering
  \includegraphics[width=\linewidth]{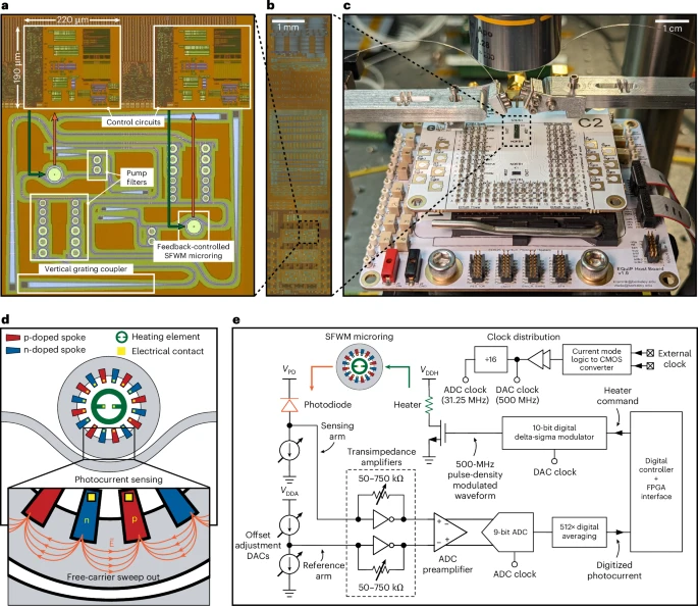}
\captionsetup{justification=justified,font=footnotesize}
  \caption{First-time electronic--photonic quantum chip, manufactured by Kramnik et~al.~\cite{C72ref18}:
  \textbf{(a)} Micrograph of electronic--photonic quantum circuit blocks with control circuits and microrings;
  \textbf{(b)} Entire 2\,mm$\times$9\,mm electronic--photonic CMOS chip;
  \textbf{(c)} CMOS chip bonded to a PCB providing power and interfaces to an FPGA;
  \textbf{(d)} Diagram of the microring with diodes for carrier sweep-out in reverse bias;
  \textbf{(e)} Schematic of the feedback-controlled microring pair source with on-chip thermal tuning and in-cavity monitoring via photodiodes.}
  \label{fig:iqp_kramnik}
\end{figure}

\subsection{Conclusion}

Quantum biosensors demonstrate clear advantages over classical sensing technologies by enabling ultra-high sensitivity, single-molecule resolution, and low-noise measurement. When evaluated alongside electronic integrated circuits (EICs) and photonic integrated circuits (PICs), each platform reveals distinct strengths and constraints. EIC biosensors offer cost efficiency and compactness but remain limited by electrical noise and multiplexing challenges. PIC-based sensors overcome many electrical constraints through optical immunity and scalable multiplexing, yet they face fabrication complexity, optical alignment issues, and nonlinear effects. Integrated quantum photonics (IQP) provides a promising pathway forward by combining discrete quantum states with photonic stability, although current IQP implementations still confront obstacles such as coherence loss, photon coupling inefficiencies, and reliance on laboratory-scale infrastructure.

Future progress toward practical IQP biosensors will depend on the co-integration of light sources, detectors, and microfluidic interfaces within CMOS-compatible platforms. Recent advances in room-temperature materials, such as diamond NV centers and silicon-based SPAD arrays, are beginning to reduce system complexity, while hybrid material strategies involving lithium niobate, silicon carbide, or two-dimensional quantum emitters may improve photon extraction and stability. Continued innovation in packaging, phase modulation, and three-dimensional integration will be critical for transitioning IQP biosensing from experimental prototypes to clinically deployable systems. With sustained development, integrated quantum photonics holds strong potential to redefine on-chip diagnostics.

        \setcounter{figure}{0}
        \setcounter{equation}{0}
        \setcounter{table}{0}
  \chapter{Heart and Lung Sounds Dataset Recorded from a Clinical Manikin}

This chapter is based on the following publication:

\vspace{5mm}
\noindent
\cite{10981596} Y. Torabi, S. Shirani and J. P. Reilly, ``Descriptor: Heart and Lung Sounds Dataset Recorded From a Clinical Manikin Using Digital Stethoscope (HLS-CMDS)," in \textit{IEEE Data Descriptions}, vol. 2, pp. 133-140, 2025, \href{https://doi.org/10.1109/IEEEDATA.2025.3566012}{doi:10.1109/IEEEDATA.2025.3566012}.

\newpage
Heart and lung sounds play a vital role in healthcare monitoring. Advances in stethoscope technology allow for the precise recording of these physiological signals. In this dataset, we used a digital stethoscope to capture heart and lung sounds from a clinical manikin. To our knowledge, this is the first dataset to include both separated and combined cardiorespiratory recordings. The recordings feature both normal and abnormal cases, and they were taken at multiple chest locations. This dataset supports research in disease prediction and deep learning–based audio signal analysis.

\section{Introduction}
Auscultation of heart and lung sounds provides clinical insight into cardiorespiratory health. However, the scarcity of high-quality datasets limits the performance of AI-based diagnostic systems. To address this gap, this work presents the \textit{Heart and Lung Sounds Clinical Manikin Dataset (HLS-CMDS)}, a collection of recordings acquired using a digital stethoscope from a clinical simulation manikin. The dataset’s design ensures controlled recording conditions and diverse pathological coverage, enabling machine-learning applications such as classification, clustering, and source separation.

\section{Collection Methods and Design}

\subsection{Setup and Environment}

Figure~\ref{fig:setup} illustrates the data acquisition process. We recorded sounds from various chest locations, depending on whether heart or lung sounds were being captured. The manikin reproduces heart and lung sounds through anatomically positioned speakers, ensuring that sound characteristics vary naturally across locations. Recording from multiple locations is essential for AI model generalization. Labelling recordings by site enables analysis of how auscultation position affects sound perception, supporting spatially aware AI models for robust classification and source separation. 

\begin{figure}[H]
    \centering
    \includegraphics[width=0.75\linewidth]{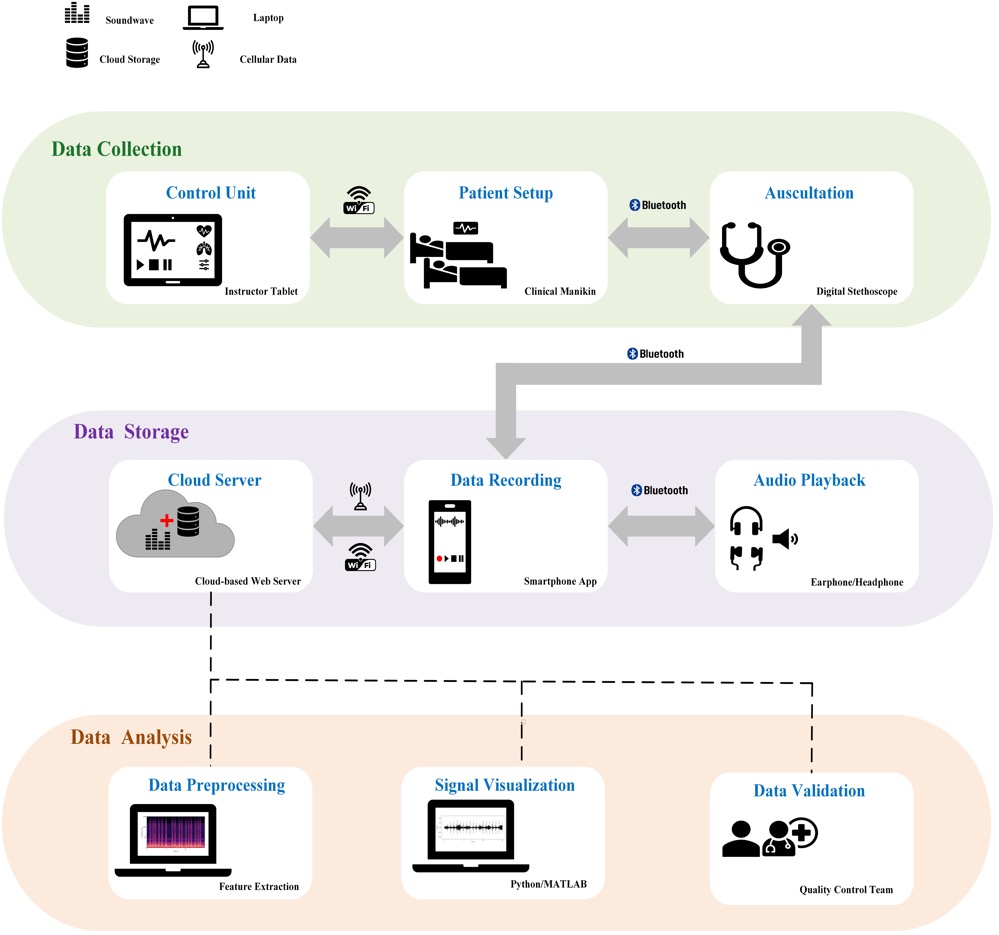}
    \captionsetup{font=footnotesize}
    \caption{Diagram of the data acquisition process.}
    \label{fig:setup}
\end{figure}

Auscultation was performed in a controlled, quiet, and noise-free environment (Figure~\figsubref{fig:setup_env}{a}). The patient simulator was positioned in a sitting posture to emulate realistic conditions for lung and heart sound recordings (Figure~\figsubref{fig:setup_env}{b}).

\begin{figure}[H]
    \centering
    \includegraphics[width=0.5\linewidth]{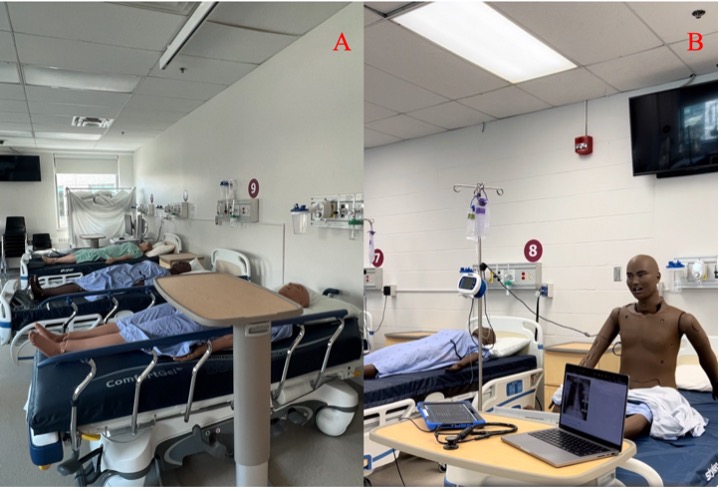}
    \captionsetup{justification=justified,font=footnotesize}
    \caption{Setup and environment: \textbf{(a)} Recording unit; \textbf{(b)} Manikin in a sitting position alongside the recording setup.}
    \label{fig:setup_env}
\end{figure}

\subsection{Clinical Skills Manikin and Control System}

We utilized the CAE Juno\textsuperscript{TM} nursing skills manikin, a mid-fidelity patient simulator designed to enhance clinical nursing skills. We used the CAE Maestro software, installed on a tablet, to control and monitor the manikin in real-time. The tablet was connected to the manikin via a secure Wi-Fi connection.

For mixture recordings, both heart and lung sounds were enabled and recorded simultaneously. For each mixture, individual heart and lung sounds were also recorded separately without altering the stethoscope position. Because the manikin repeatedly plays the same pre-recorded sounds, this approach is effectively equivalent to true simultaneous recording. Hence, our dataset provides a reliable ground truth for source separation algorithms, as each mixture contains the same heart and lung components captured individually. For single-type recordings, we first enabled only the heart sounds and recorded across standard auscultation landmarks, capturing a variety of normal and abnormal cases, including: normal, late diastolic murmur, mid-systolic murmur, late systolic murmur, atrial fibrillation, fourth heart sound (S4), early systolic murmur, third heart sound (S3), tachycardia, and atrioventricular block. Next, we enabled only lung sounds and recorded normal, wheezing, fine crackle, rhonchi, pleural rub, and coarse crackle sounds. Each sound type was collected from multiple locations.

\subsection{Digital Stethoscope and Recording}

We used the 3M\textsuperscript{TM} Littmann\textsuperscript{\textregistered} CORE Digital Stethoscope. Each recording lasted 15\,s. The stethoscope includes built-in frequency filters for selective recording of heart, lung, or mixed sounds. The amplification feature provided up to $\times$40 sound enhancement, while active noise cancellation reduced ambient interference. The stethoscope was connected to the Eko mobile application via Bluetooth for real-time monitoring, visualization, and cloud-based storage of the recordings in \texttt{.wav} format. These files were later downloaded for detailed analysis on a personal computer or laptop.

\subsection{Noise Handling}

In real-world auscultation, recordings are often affected by multiple noise sources, including ambient noise, stethoscope friction noise, and motion artifacts from patient movement. Biological noise from muscle contractions and respiration can also obscure cardiorespiratory sounds. Unlike human subjects, the manikin eliminates biological noise and motion artifacts. Recordings were performed in a quiet, controlled environment to suppress ambient interference. The stethoscope was held steadily to minimize motion and friction noise. Furthermore, the built-in frequency filters reduced irrelevant noise during acquisition.

\section{Validation and Quality}

The pre-recorded sounds used in this study were curated by CAE Healthcare, originating from real patients and synchronized with physiological parameters. In addition to this built-in validation, we conducted post-recording verification by inspecting the waveforms to confirm the absence of signal artifacts. The nursing team assisted in identifying correct auscultation landmarks and ensured adherence to clinical protocols during recording sessions. Recordings were carried out in a quiet, noise-free environment with the manikin positioned upright to minimize environmental interference. Before each recording, the stethoscope’s diaphragm was carefully placed at standard auscultation points to avoid motion artifacts. After data collection, all audio files were qualitatively inspected by listening to and comparing them against the expected patterns of heart and lung sounds. We selected a digital stethoscope with high amplification gain, active noise cancellation, and built-in frequency filtering to achieve minimal distortion. Bluetooth-based data transfer ensured reliable communication and secure storage without packet loss or corruption~\cite{c3ref28}. The selected manikin closely replicates real clinical scenarios and is approved by the U.S. National Council of State Boards of Nursing~\cite {c3ref29}. Routine maintenance and annual calibration performed by the simulation lab staff ensure the accuracy and reliability of the manikin’s sound system over time. The reproducibility of the dataset is further supported by the standardized and repeatable recording procedures used in this study. 

\section{Records and Storage}

The dataset comprises a total of 535 audio recordings (268 female, 267 male) in \texttt{.wav} format, categorized into three primary groups: 50 heart sound recordings (\texttt{HS.zip}), 50 lung sound recordings (\texttt{LS.zip}), and 3$\times$145 recordings of mixed heart and lung sounds, along with their individual heart and lung source sounds (\texttt{Mix.zip}). Each category includes a corresponding metadata file (\texttt{HS.csv}, \texttt{LS.csv}, \texttt{Mix.csv}) describing the associated audio recordings, including file name, gender, sound type, and anatomical location. The dataset includes ten heart sound types and six lung sound types, recorded from twelve anatomical locations. 

\section{Conclusion}

This dataset is suitable for a wide range of research applications, particularly for tasks involving the separation and analysis of heart and lung sounds in machine learning pipelines. Simulator-based datasets play a crucial complementary role in AI-driven auscultation. Our dataset provides a structured foundation for developing and benchmarking algorithms while underscoring the need for hybrid datasets and clinical validation. Beyond research, the dataset supports educational and clinical applications, such as feature extraction, acoustic modelling, and stethoscope sound analysis training. It enriches the limited pool of publicly available cardiorespiratory datasets and offers a high-quality resource for machine learning, signal processing, and healthcare education.

        \setcounter{figure}{0}
        \setcounter{equation}{0}
        \setcounter{table}{0}    
  \chapter{Generative AI Algorithms for Blind Source Separation}

This chapter is currently under review in the following publication:

\vspace{5mm}
Y. Torabi, S. Shirani, and J. P. Reilly, 
``Large Language Model-Enhanced Non-negative Matrix Factorization for Cardiorespiratory Sound Separation,'' 
in \textit{Biomedical Signal Processing and Control}, Elsevier.

\newpage
Generative artificial intelligence (GenAI) is becoming an important approach in biomedical signal processing, offering new ways to separate mixed physiological sounds. This chapter presents two GenAI approaches for blind source separation of heart and lung sounds. The first method, called LingoNMF, combines large language models with non-negative matrix factorization to improve unsupervised separation through prompting. The second method, called XVAE-WMT, is based on explainable artificial intelligence (XAI), which aims to make internal representations interpretable. In this approach, we analyze the latent space of a variational autoencoder to understand how the model distinguishes cardiorespiratory features, using temporal-consistency loss and wavelet inputs to preserve the structure of the signals. Together, these algorithms demonstrate how generative algorithms can support interpretable analysis of cardiorespiratory recordings.
\section{LLM-based Non-Negative Matrix Factorization}
\subsection{Introduction}
In this study, a modified version of the affine non-negative matrix factorization (NMF) approach, termed LingoNMF, is proposed to blindly separate lung and heart sounds recorded by a digital stethoscope. This method incorporates a parallel structure of multilayer units applied to the input signal and utilizes the periodic property of the signals to improve accuracy. Furthermore, LLaMA (Large Language Model Meta AI) is employed in two ways: 1) it provides an adjunct diagnostic facility of potential clinical value, and 2) it is used to optimize a fundamental frequency estimate.

\subsection{Methodology}
\subsubsection{Theoretical Background}
The standard non-negative matrix factorization (NMF) problem~\cite{Cichocki2009} states that given a non-negative data matrix $\mathbf{Y} \in \mathbb{R}_+^{I \times T}$, find two non-negative matrices  
$\mathbf{A} \in \mathbb{R}_+^{I \times J}$ and $\mathbf{X} \in \mathbb{R}_+^{J \times T}$ such that:
\begin{equation}
    \mathbf{Y} = \mathbf{A}\mathbf{X} + \mathbf{E},
\end{equation}
where $\mathbf{E} \in \mathbb{R}^{I \times T}$ represents the approximation error,  
$\mathbf{A}$ is the mixing matrix, and $\mathbf{X}$ is the source signal.
In multi-layer NMF, the basis matrix $\mathbf{A}$ is replaced by a set of cascaded matrices. 
Initially, the input matrix $\mathbf{Y}$ is approximated as $\mathbf{Y} \approx \mathbf{A}_1 \mathbf{X}_1$. 
Subsequently, the intermediate output $\mathbf{X}_1$ is treated as a new input and further decomposed as 
$\mathbf{X}_1 \approx \mathbf{A}_2 \mathbf{X}_2$. 
This iterative process continues, sequentially factoring each output into new components until a predefined stopping criterion is met. 
The final model can thus be expressed as Equation~\ref{eq:12}, where $L$ denotes the number of layers:
\begin{equation}
    \mathbf{Y} \approx \mathbf{A}_1 \mathbf{A}_2 \cdots \mathbf{A}_L \mathbf{X}.
    \label{eq:12}
\end{equation}

\subsubsection{PL-NMF Algorithm}
Figure~\figsubref{fig:plnmf}{a} shows an overview of our periodicity-based multilayer NMF algorithm (PL-NMF). The hierarchical multi-stage procedure reduces the risk of converging to local minima (Figure~\figsubref{fig:plnmf}{b}).  
\begin{figure}[H]
    \centering
    \includegraphics[width=0.9\linewidth, trim=500 200 0 200, clip]{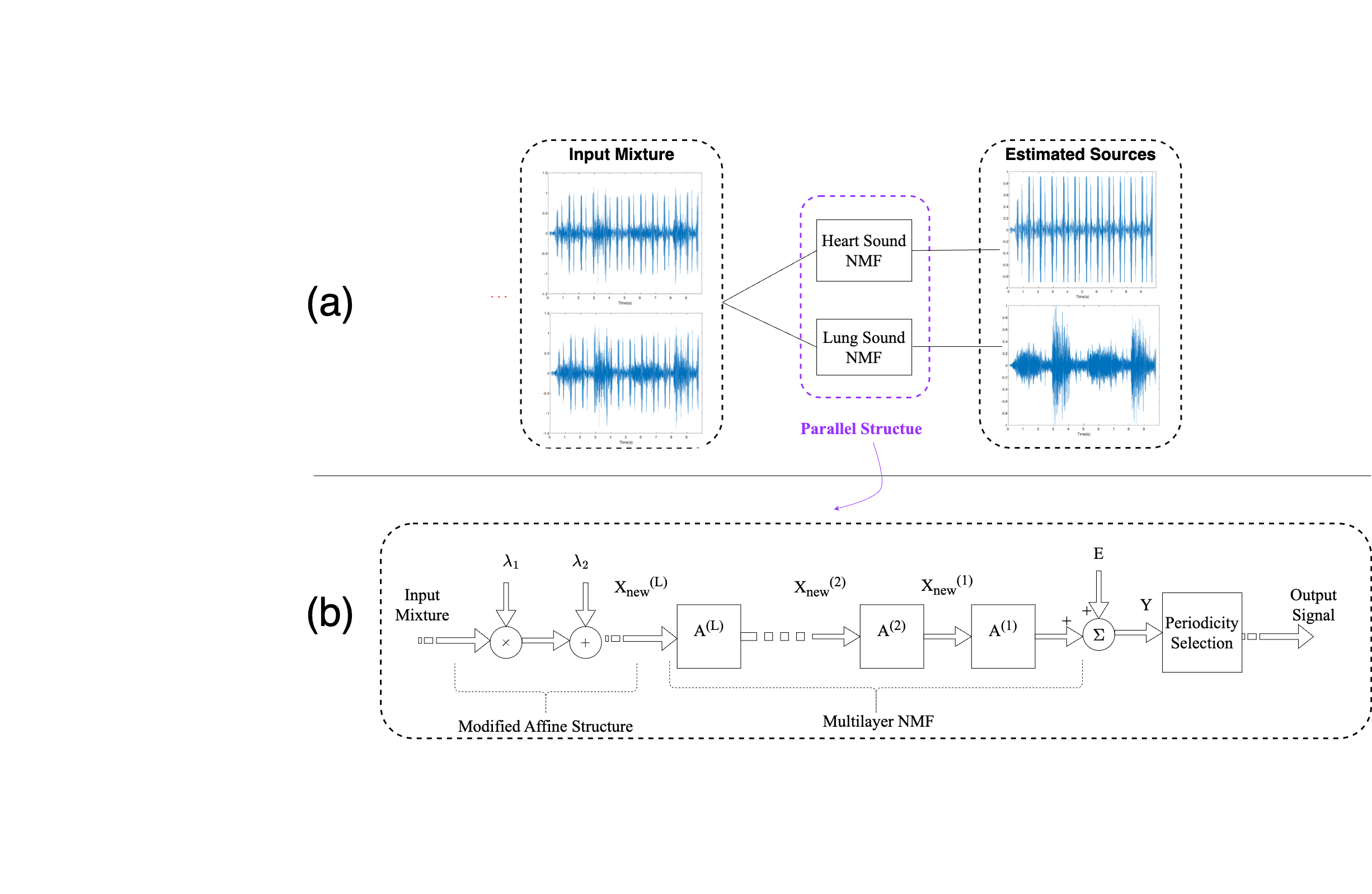}
    \captionsetup{justification=justified,font=footnotesize}
    \caption{PL-NMF overview: \textbf{(a)} Proposed algorithm; \textbf{(b)} Internal structure of the “Heart Sound NMF” and “Lung Sound NMF” blocks.}
    \label{fig:plnmf}
\end{figure}
Both the “Heart Sound NMF” and “Lung Sound NMF” blocks use the same internal multilayer structure. The distinction between the two blocks lies in their parameter settings, such as the scaling factor, offset, and number of layers. Our proposed method employs two NMF modules to separate the sources independently. Each module generates two signals, and one signal from each module is discarded based on the periodic nature of heart and lung sounds. Specifically, one module selects the output with a shorter period, corresponding to the heart sound, while the other module selects the output with a longer period, corresponding to the lung sound. This dual-module structure provides additional degrees of freedom, allowing independent tuning of parameters for each block. Algorithm~\ref{alg:plnmf} (see Appendix) summarizes the PL-NMF method. 

\textbf{Modified Affine NMF}:
The recorded heart and lung sounds may have negative values. To address this, an affine NMF approach introduces an offset to transform the input mixtures into non-negative values. Here, we propose a modified transformation by introducing an additional scaling factor $\lambda_1$ to enhance performance. 
Incorporating biological prior knowledge can further guide parameter selection. 
Since lung sounds vary more slowly and are generally more dominant than heart sounds, 
we assign two distinct scaling factors: $\lambda_{1,\text{lung}} \in (0,1)$ for the lung-detection block 
and $\lambda_{1,\text{heart}} \ge 1$ for the heart-detection block. 
Additionally, an offset term $\lambda_2$ is introduced to ensure non-negativity of the final signal:
\begin{equation}
    \lambda_{1,p} \times \min(x) + \lambda_{2,p} \ge 0, \qquad 
    p \in \{\text{heart}, \text{lung}\}.
    \label{eq:11}
\end{equation}

\textbf{Periodicity-based Parallel Analysis}: We estimate the period using the autocorrelation function. First, we compute the autocorrelation of the signal of length $T$ for all possible lag shifts $P$ (Equation~\ref{eq:13}). Next, we estimate the period $p$ by calculating the average distance between consecutive peaks $\Delta P_k$ (Equation~\ref{eq:14}):
\begin{equation}
    ACF(P) = \frac{1}{T} \sum_{t=1}^{T-P} x_t \cdot x_{t+P}, 
    \quad \forall P \in \{1,2,\ldots,T\}.
    \label{eq:13}
\end{equation}
\begin{equation}
    p = \frac{1}{N} \sum_{k=1}^{N-1} \Delta P_k.
    \label{eq:14}
\end{equation}

\subsubsection{LingoNMF Algorithm}
LingoNMF (Figure~\ref{fig:lingo}) incorporates a feedback loop driven by LLaMA. It is pretrained on a diverse collection of writings to effectively generate human-like texts~\cite{Touvron2023}.

\begin{figure}[H]
    \centering
    \includegraphics[width=0.9\linewidth, trim= 0 10 0 10, clip]{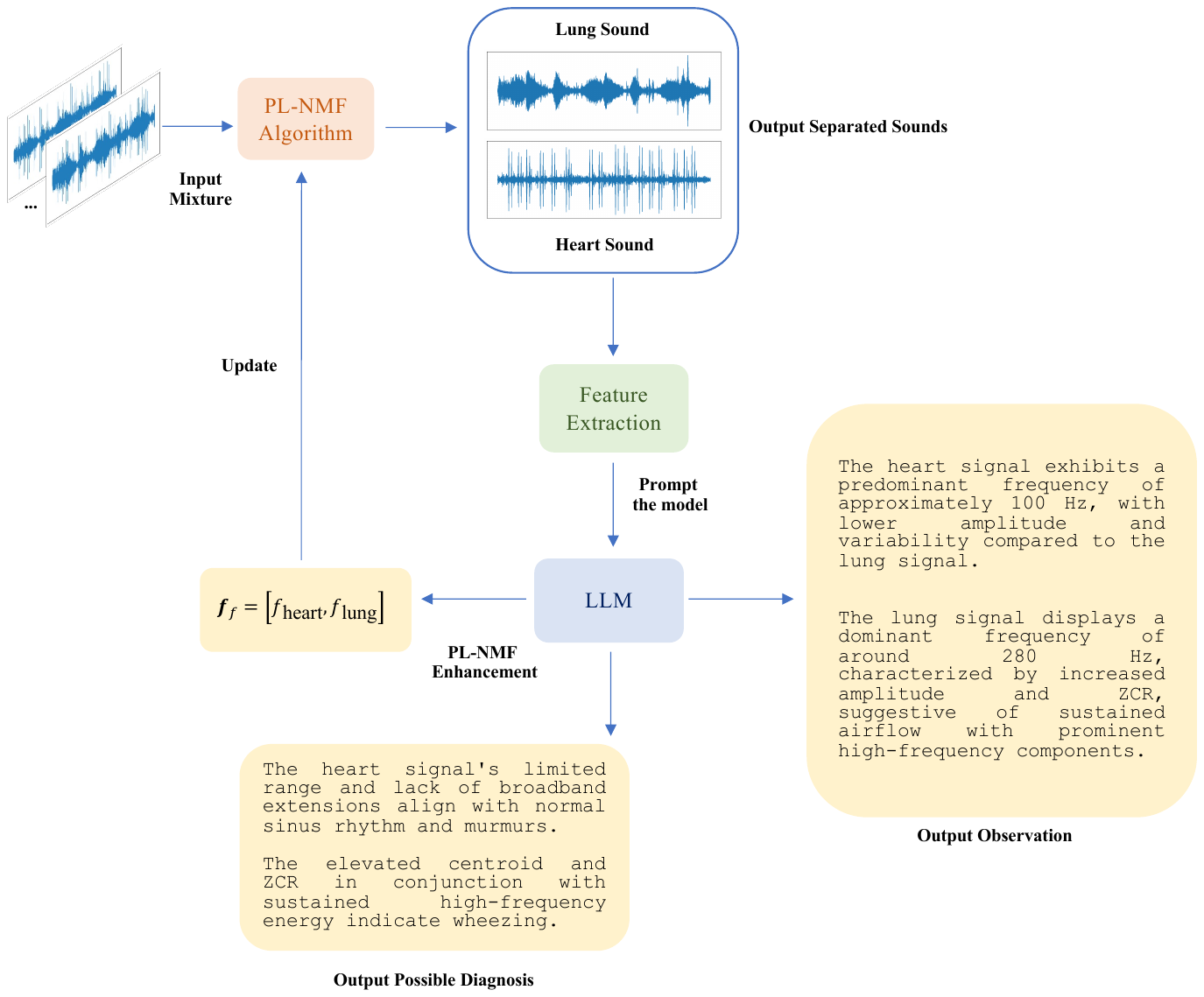}
    \captionsetup{justification=justified,font=footnotesize}
    \caption{LingoNMF overview. The input mixtures are decomposed by the PL-NMF. Extracted features are used to prompt the LLM, which outputs suggested fundamental frequency $\mathbf{f}_f$ to update the penalty term, feature-based observations, and possible diagnosis based on the observations.}
    \label{fig:lingo}
\end{figure}

The LLM plays two key roles in this framework. First, it analyzes the extracted features of the separated outputs, providing detailed insights for tasks such as anomaly detection and diagnostic interpretation. Second, the LLM feedback assists in fine-tuning a frequency-based penalty term. Algorithm~\ref{alg:lingonmf} represents LingoNMF (see Appendix). 
The fundamental frequency $f_f$ is the primary frequency at which a signal oscillates. It is estimated as the frequency $\widehat{f}_f$ (Equation~\ref{eq:16}), where $P_x(f)$ is the power spectral density (PSD) of signal $x$ at frequency $f$:
\begin{equation}
    \widehat{f}_f = \arg\max_f P_x(f).
    \label{eq:16}
\end{equation}
We define the modified cost function $D_f(\mathbf{Y}\,||\,\mathbf{A}\mathbf{X})$ as Equation~\ref{eq:15}, 
where $D(\mathbf{Y}\,||\,\mathbf{A}\mathbf{X})$ is the original temporal cost function, 
$\lambda_f$ is the scaling factor, 
$\widehat{\mathbf{f}}_f=[\widehat{f_{f,1}},\ldots,\widehat{f_{f,J}}]^\top$ denotes the vector of fundamental frequencies for each estimated signal 
(i.e., rows of $\mathbf{X}$), 
and $\mathbf{f}_f=[f_{f,1},\ldots,f_{f,J}]^\top$ is the vector of fundamental frequencies for each source. 
\begin{equation}
    D_f(\mathbf{Y}\,||\,\mathbf{A}\mathbf{X}) 
    = D_\alpha(\mathbf{Y}\,||\,\mathbf{A}\mathbf{X}) 
    + \lambda_f \|\widehat{\mathbf{f}}_f - \mathbf{f}_f\|^2 ,
    \label{eq:15}
\end{equation}

where $D_\alpha(\mathbf{Y}\,||\,\mathbf{A}\mathbf{X})$ is the $\alpha$-divergence, which generalizes common NMF cost
functions including the Euclidean, Kullback--Leibler, and Itakura--Saito divergences.
The parameter $\alpha$ controls the trade-off between robustness to noise and data fidelity, with smaller $\alpha$ being more robust and larger $\alpha$ enforcing
sharper reconstruction. Compared to the Frobenius norm, the
$\alpha$-divergence offers more flexible convergence behaviour by adapting to the
statistical structure of non-negative data. In practice, this flexibility improves
separation robustness and stability, particularly under noise and spectral mismatch,
whereas the Frobenius norm is more sensitive to outliers.

We decompose the input mixtures using the PL-NMF algorithm. We extract features, including spectral centroid, root-mean-square (RMS) energy, zero-crossing rate, variance, mean frequency, and maximum amplitude. We prompt the LLM with features, which outputs (i) suggested fundamental frequency $\mathbf{f}_f$, 
(ii) feature-based observations, and (iii) possible diagnoses based on these observations. 
We calculate $\widehat{\mathbf{f}}_f$ from the separated signals $\mathbf{X}$. We initially assume that the fundamental frequencies $\mathbf{f}_f$ correspond to normal heart and lung sounds, 
and subsequently refine these estimates using LLM feedback. We design the prompts to constrain the LLM output into numeric fields (corresponding to the estimated fundamental frequencies of heart and lung) and text fields (for feature-based observations and possible diagnoses). To further reduce hallucinations, we employ a lightweight retrieval-augmented generation (RAG) pipeline grounded in cardiorespiratory acoustics literature (Figure~\ref{fig:rag}). 

\begin{figure}[H]
    \centering
    \includegraphics[width=0.9\linewidth, trim=0 100 0 100, clip]{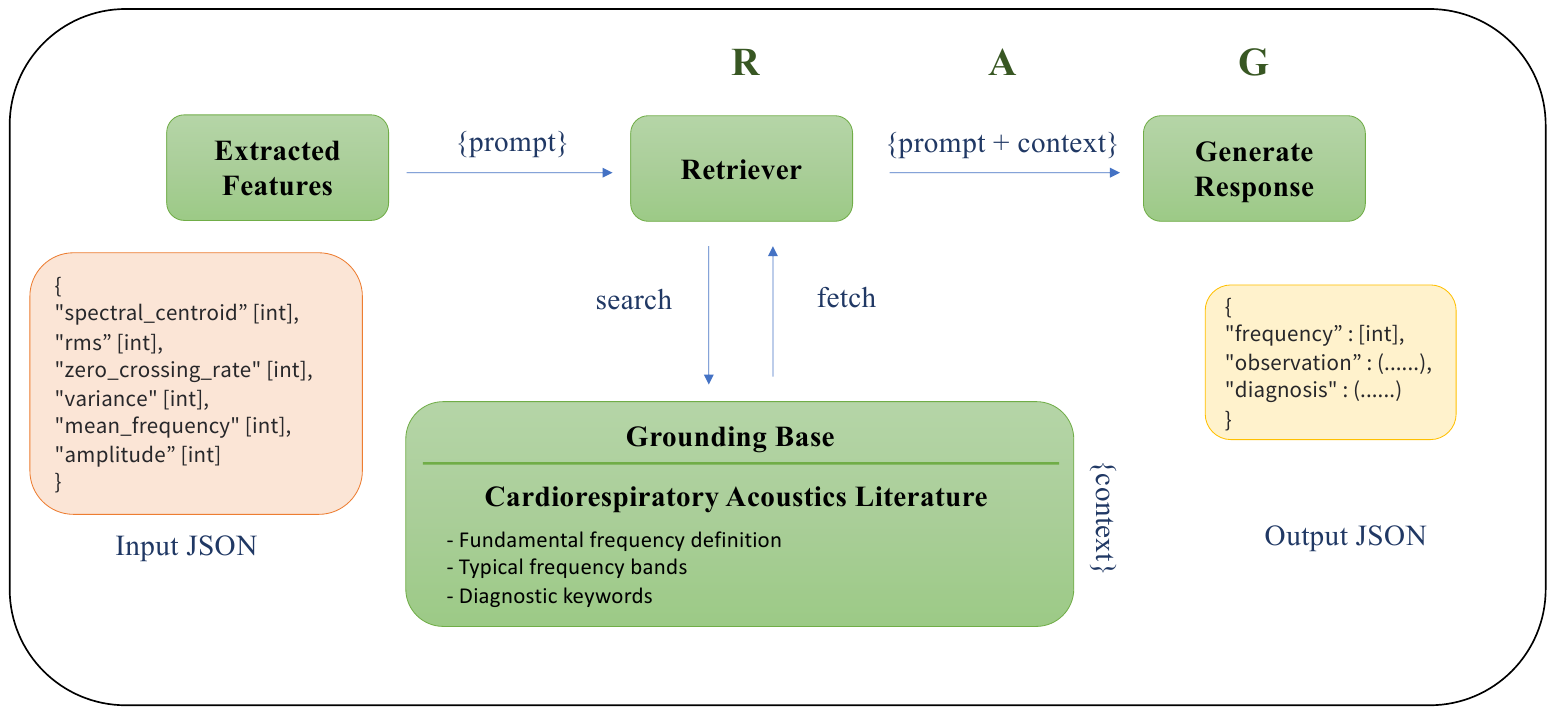}
    \captionsetup{justification=justified,font=footnotesize}
    \caption{Lightweight RAG pipeline. The retriever (R) looks up external cardiorespiratory acoustics knowledge to fetch relevant information. The augmenter (A) adds this retrieved information to the structured prompt. The generator (G) uses the LLM to produce context-aware outputs, returning physiologically plausible fundamental frequencies and diagnostic observations.}
    \label{fig:rag}
\end{figure}
\subsection{Experimental Results}
We use two datasets collected with the digital stethoscope. Dataset One contains 100 cases of two-mixture normal heart and lung sounds. Dataset Two includes 210 clinical manikin recordings, covering normal and abnormal sounds. We use the pure sounds as a reference to compute the separation performance~\cite{Torabi2024Manikin}.
The separated signal $\mathbf{s}$ can be expressed as Equation~\ref{eq:17}:
\begin{equation}
    \mathbf{s} = \mathbf{s}_{\mathrm{target}} + \mathbf{e}_{\mathrm{interference}} 
    + \mathbf{e}_{\mathrm{noise}} + \mathbf{e}_{\mathrm{artifact}} .
    \label{eq:17}
\end{equation}
We measure performance using the signal-to-distortion ratio (SDR), signal-to-interference ratio (SIR), and signal-to-artifact ratio (SAR), defined in Equation~\ref{eq:19}, Equation~\ref{eq:20}, and Equation~\ref{eq:21}, respectively:
\begin{equation}
    \mathrm{SDR} = 10 \log_{10} 
    \frac{\|\mathbf{s}_{\mathrm{target}}\|^2}
         {\|\mathbf{e}_{\mathrm{interference}} + \mathbf{e}_{\mathrm{noise}} + \mathbf{e}_{\mathrm{artifact}}\|^2}.
    \label{eq:19}
\end{equation}
\begin{equation}
    \mathrm{SIR} = 10 \log_{10} 
    \frac{\|\mathbf{s}_{\mathrm{target}}\|^2}
         {\|\mathbf{e}_{\mathrm{interference}}\|^2}.
    \label{eq:20}
\end{equation}
\begin{equation}
    \mathrm{SAR} = 10 \log_{10} 
    \frac{\|\mathbf{s}_{\mathrm{target}} + \mathbf{e}_{\mathrm{interference}} + \mathbf{e}_{\mathrm{noise}}\|^2}
         {\|\mathbf{e}_{\mathrm{artifact}}\|^2}.
    \label{eq:21}
\end{equation}
Figure~\ref{fig:stat} compares the performance of different NMF-based separation methods across two datasets. We conduct paired \textit{t}-tests for each separation metric. The null hypothesis ($H_0$) assumed no difference in mean performance between methods, 
whereas the alternative hypothesis ($H_1$) assumed that LingoNMF achieved higher mean scores. 
We set statistical significance thresholds at $p < 0.05$ and $p < 0.01$, 
corresponding to moderate and strong evidence against $H_0$, respectively. 
LingoNMF outperforms conventional approaches (i.e. Standard NMF and $\alpha$-NMF), 
showing overall gains in SAR ($p < 0.05$) and significant improvements in heart sound SDR and SIR ($p < 0.01$). 
Lung SDR and SIR also improve significantly in Dataset~Two ($p < 0.01$), 
while Dataset~One shows more modest gains ($p < 0.05$). 
The only difference between PL-NMF and LingoNMF is the presence or absence of LLM feedback. 
Therefore, we compare PL-NMF with LingoNMF for the ablation study. 
LingoNMF significantly improves heart sound SIR and SDR in both datasets ($p < 0.01$), 
and also improves lung sound separation in Dataset~Two ($p < 0.05$). 
We also observe SAR improvements for both heart and lung sounds in Dataset~One ($p < 0.01$). 
These results indicate that the LLM feedback primarily benefits heart sound separation, 
likely because heart sounds are more structured and periodic, 
making them more sensitive to accurate frequency-penalty tuning, 
whereas lung sounds are more broadband.

\begin{figure}[H]
    \centering
    \includegraphics[width=0.8\linewidth]{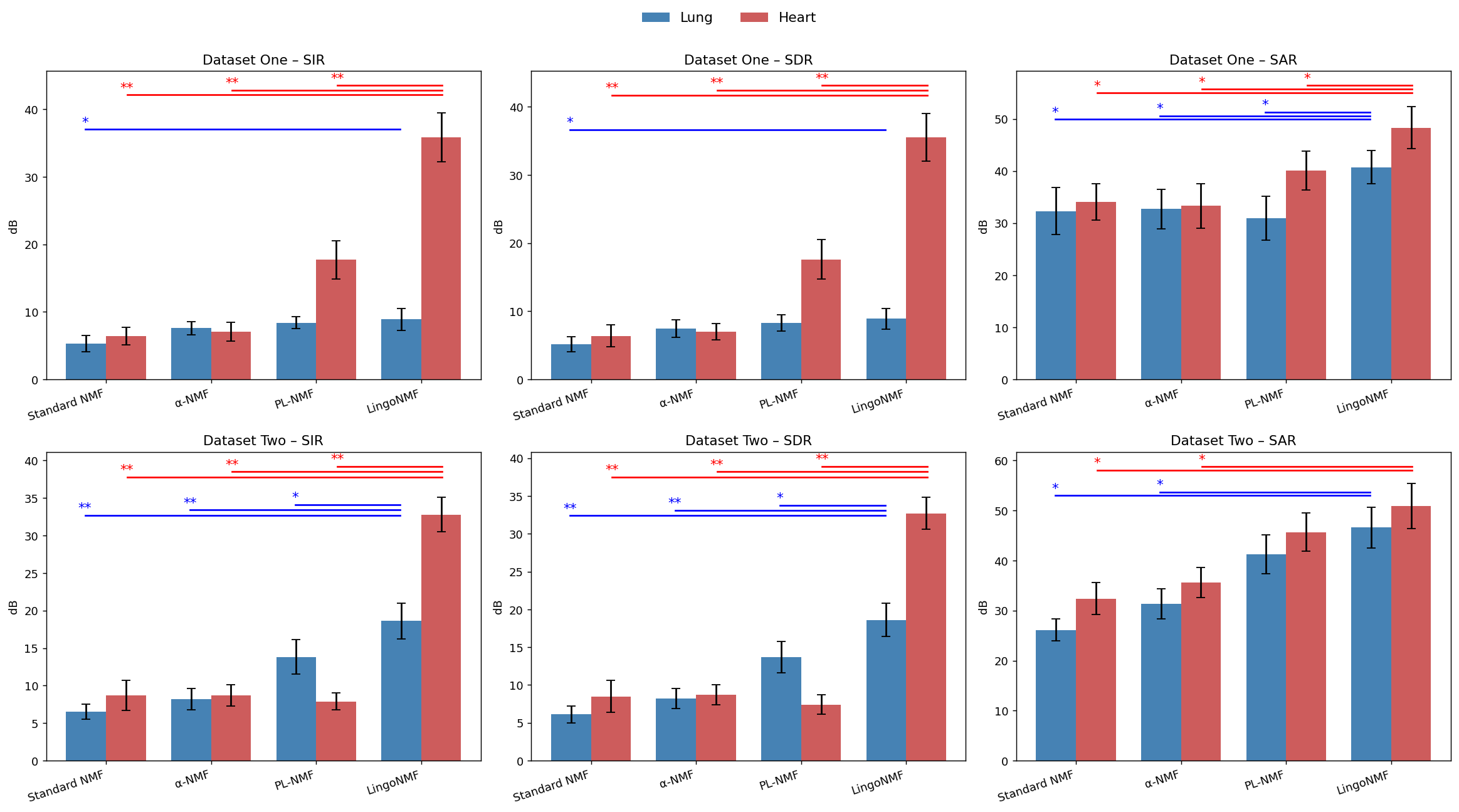}
    \captionsetup{justification=justified,font=footnotesize}
    \caption{Comparison of NMF-based separation methods across two datasets (Standard NMF, $\alpha$-NMF, PL-NMF, and LingoNMF). Bars show mean performance with 95\% confidence intervals for lung (blue) and heart (red) sounds in terms of SIR, SDR, and SAR. Horizontal lines above bar pairs denote results of paired t-tests comparing each baseline with LingoNMF. Single asterisks * represent $p<0.05$ and double asterisks ** represent $p<0.01$, indicating statistically significant improvements in separation performance.}
    \label{fig:stat}
\end{figure}
Table~\ref{tab:lingonmf_comparison} compares LingoNMF with other separation methods. It achieves higher SDR and SAR than all compared methods, while its SIR remains competitive. 

\subsection{Conclusion}
We propose a novel algorithm based on NMF for blind source separation of heart and lung sounds, combining parallel blocks and multi-layer structures. Additionally, the proposed method incorporates the periodicity of body signals. Moreover, the integration of LLMs enhances the separation results in two unique ways: by providing detailed diagnostics for analyzing the output characteristics and by dynamically optimizing a fundamental frequency penalty term in the NMF cost function through a feedback loop. Figure~\ref{fig:conc} illustrates the waveform and spectrogram of the mixed signal and the separated sounds using different NMF methods. The experimental results demonstrated that the proposed method outperformed previous methods, which highlights the potential of the natural language processing (NLP) tools to enhance medical sound analysis.

\begin{figure}[H]
    \centering
    \includegraphics[width=0.8\linewidth, trim= 0 100 0 100, clip]{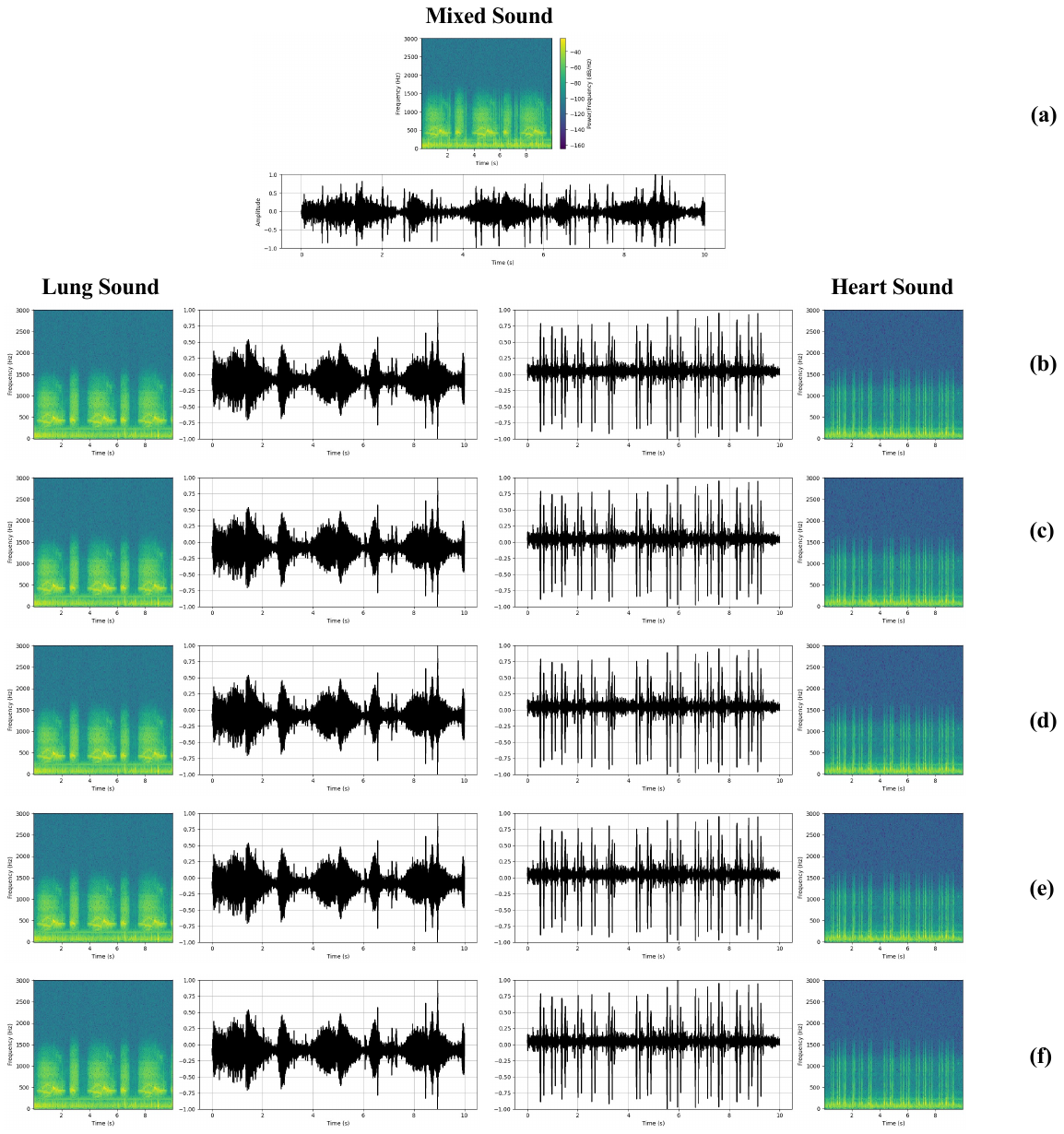}
    \captionsetup{justification=justified,font=footnotesize}
    \caption{Cardiorespiratory sound separation results: \textbf{(a)} Mixed sound with its corresponding spectrogram; \textbf{(b)} Expected source sounds; \textbf{(c)-(f)} Separated sounds using Standard NMF, $\alpha$-NMF, PL-NMF, and LingoNMF, each represented by a spectrogram and waveform. }
    \label{fig:conc}
\end{figure}
\section{Explainable Variational Autoencoder}
\subsection{Introduction}
Autoencoders are unsupervised neural models that learn to reconstruct their input through a latent bottleneck representation, and consist of an encoder and a decoder. Figure~\ref{fig:vae_arch} illustrates the overall architecture of a variational autoencoder (VAE). In this section, we introduce XVAE‑WMT, a generative AI algorithm that combines a variational autoencoder (VAE) with explainable AI (XAI), incorporating temporal consistency (TC) loss, wavelet-based inputs, and a final masked output stage. For explainability, we analyze the latent space and further apply SHAP (SHapley Additive exPlanations) to quantify the contribution of individual latent features.

\begin{figure}[H]
    \centering
    \includegraphics[width=0.8\linewidth]{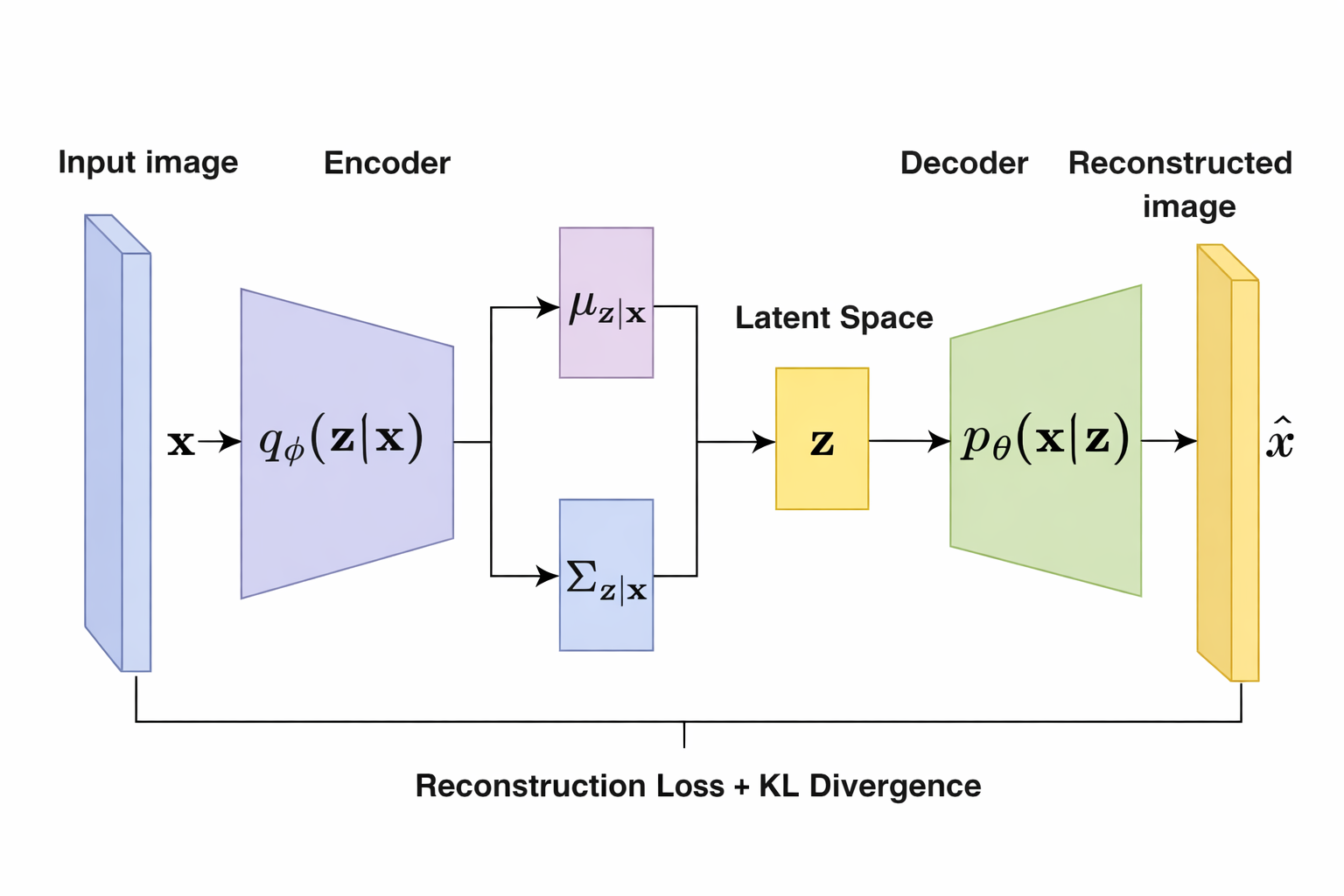}
    \captionsetup{justification=justified,font=footnotesize}
    \caption{Schematic overview of the VAE architecture. The encoder $q_\phi(z \mid x)$ maps the input mixture $x$ into a probabilistic latent representation parameterized by the mean $\mu_{z\mid x}$ and covariance $\Sigma_{z\mid x}$. 
    A latent variable $z$ is sampled from this distribution and passed to the decoder $p_\theta(x \mid z)$ to reconstruct the input signal $\hat{x}$. 
    Training is performed by minimizing the reconstruction loss together with the Kullback--Leibler (KL). Both the encoder $q(\cdot)$ and decoder $p(\cdot)$ are implemented as neural
networks trained jointly via variational inference.
}
    \label{fig:vae_arch}
\end{figure}

The VAE-based separation operates in a single-microphone setting, whereas
most conventional source separation methods rely on multiple microphones and spatial
diversity. This is enabled by learning a structured latent representation that captures
source-specific spectral and temporal patterns directly from single-channel mixtures.

\subsection{Methodology}
\subsubsection{Theoretical Background}
In a standard autoencoder, the encoder maps an input vector \( \mathbf{x} \in \mathbb{R}^n \) to a latent vector \( \mathbf{h} \in \mathbb{R}^k \) (Equation~\ref{eq:encoder}), and the decoder reconstructs the input from this latent representation (Equation~\ref{eq:decoder}). The goal of training is to minimize the reconstruction loss, given in Equation~\ref{eq:ae_loss}. Here, $\Delta$ denotes a reconstruction loss function measuring the discrepancy between the input and its reconstruction. In this work, $\Delta$ is chosen as the mean squared error (MSE).

\begin{equation}
\mathbf{h} = g(\mathbf{x}).
\label{eq:encoder}
\end{equation}
\begin{equation}
\tilde{\mathbf{x}} = f(\mathbf{h}) = f(g(\mathbf{x})).
\label{eq:decoder}
\end{equation}
\begin{equation}
\arg\min_{f,g} \big[ \Delta(\tilde{\mathbf{x}}, f(g(\mathbf{x}))) \big].
\label{eq:ae_loss}
\end{equation}
Here, the latent dimension satisfies $k \leq n$, enforcing a compressed representation of the input. The encoder $g$ and decoder $f$ are implemented as parameterized neural networks, where the encoder maps the input signal to a low-dimensional latent space and the decoder reconstructs the input from this latent representation. Although the autoencoder learns a compact latent representation, it cannot generate new data because it learns a deterministic mapping. Variational autoencoders (VAEs) address this limitation through a probabilistic framework that enables both reconstruction and generation. The encoder maps the input \( \mathbf{x} \) not to a fixed point but to a distribution in the latent space. The decoder reconstructs data by sampling from this distribution. We formulate the VAE optimization problem as a maximum likelihood estimation (MLE) task, where the goal is to maximize the log-likelihood of the observed data under the model parameters:
\begin{equation}
\max_{\theta}\,\log p_{\theta}(\mathbf{x})
\label{eq:vae_mle_objective}
\end{equation}
We start from the marginal data likelihood:
\begin{equation}
\log p_{\theta}(\mathbf{x})=\log \int p_{\theta}(\mathbf{x},\mathbf{z})\,d\mathbf{z}.
\label{eq:vae_loglike}
\end{equation}
Here, $z \in \mathbb{R}^k$ denotes the latent variable associated with the bottleneck of the autoencoder, representing the low-dimensional latent representation of the input signal. Since the integral in Equation~\ref{eq:vae_loglike} is generally intractable, we introduce an approximate posterior \(q_{\phi}(\mathbf{z}\,|\,\mathbf{x})\) and rewrite the expression by multiplying and dividing inside the integral:
\begin{align}
\log p_{\theta}(\mathbf{x})
&=\log \int \frac{p_{\theta}(\mathbf{x},\mathbf{z})}{q_{\phi}(\mathbf{z}\,|\,\mathbf{x})}\,q_{\phi}(\mathbf{z}\,|\,\mathbf{x})\,d\mathbf{z}
=\log \mathbb{E}_{\mathbf{z}\sim q_{\phi}}\!\left[\frac{p_{\theta}(\mathbf{x},\mathbf{z})}{q_{\phi}(\mathbf{z}\,|\,\mathbf{x})}\right].
\label{eq:vae_expectation}
\end{align}
Applying Jensen’s inequality yields a lower bound on the log-likelihood:
\begin{equation}
\log p_{\theta}(\mathbf{x})
\geq \mathbb{E}_{\mathbf{z}\sim q_{\phi}}\!\left[\log \frac{p_{\theta}(\mathbf{x},\mathbf{z})}{q_{\phi}(\mathbf{z}\,|\,\mathbf{x})}\right].
\label{eq:elbo_inequality}
\end{equation}
Expanding the joint distribution \(p_{\theta}(\mathbf{x},\mathbf{z})=p_{\theta}(\mathbf{x}\,|\,\mathbf{z})\,p_{\theta}(\mathbf{z})\) and separating the terms inside the logarithm, we obtain
\begin{align}
\mathbb{E}_{\mathbf{z}\sim q_{\phi}}\!\left[\log \frac{p_{\theta}(\mathbf{x},\mathbf{z})}{q_{\phi}(\mathbf{z}\,|\,\mathbf{x})}\right]
&=\mathbb{E}_{\mathbf{z}\sim q_{\phi}}\!\left[\log \frac{p_{\theta}(\mathbf{x}\,|\,\mathbf{z})\,p_{\theta}(\mathbf{z})}{q_{\phi}(\mathbf{z}\,|\,\mathbf{x})}\right] \nonumber\\
&=\mathbb{E}_{\mathbf{z}\sim q_{\phi}}\![\log p_{\theta}(\mathbf{x}\,|\,\mathbf{z})]
+\mathbb{E}_{\mathbf{z}\sim q_{\phi}}\!\left[\log\!\frac{p_{\theta}(\mathbf{z})}{q_{\phi}(\mathbf{z}\,|\,\mathbf{x})}\right].
\label{eq:elbo_expand_raw}
\end{align}
For two continuous probability distributions \(a(\mathbf{x})\) and \(b(\mathbf{x})\), the Kullback–Leibler (KL) divergence is defined as
\begin{equation}
\mathcal{D}_{\mathrm{KL}}(a\parallel b)
=\mathbb{E}_{\mathbf{x}\sim a}\!\left[\log\!\frac{a(\mathbf{x})}{b(\mathbf{x})}\right]
=\int a(\mathbf{x})\log\!\frac{a(\mathbf{x})}{b(\mathbf{x})}\,d\mathbf{x}.
\label{eq:kl_def}
\end{equation}
The second expectation term in Equation~\ref{eq:elbo_expand_raw} corresponds to the negative KL divergence between the approximate posterior \(q_{\phi}(\mathbf{z}\,|\,\mathbf{x})\) and the prior \(p_{\theta}(\mathbf{z})\):
\begin{equation}
\mathbb{E}_{\mathbf{z}\sim q_{\phi}}\!\left[\log\!\frac{p_{\theta}(\mathbf{z})}{q_{\phi}(\mathbf{z}\,|\,\mathbf{x})}\right]
=-\,\mathcal{D}_{\mathrm{KL}}\!\big(q_{\phi}(\mathbf{z}\,|\,\mathbf{x})\,\|\,p_{\theta}(\mathbf{z})\big).
\label{eq:elbo_klterm}
\end{equation}
Substituting Equation~\ref{eq:elbo_klterm} into Equation~\ref{eq:elbo_expand_raw} yields the \emph{Evidence Lower Bound} (ELBO), which serves as the objective function optimized during VAE training:
\begin{equation}
\mathrm{ELBO}
=
-\mathcal{D}_{\mathrm{KL}}\!\big(q_{\phi}(\mathbf{z}\,|\,\mathbf{x})\,\|\,p_{\theta}(\mathbf{z})\big)
+\mathbb{E}_{\mathbf{z}\sim q_{\phi}}[\log p_{\theta}(\mathbf{x}\,|\,\mathbf{z})].
\label{eq:elbo}
\end{equation}
The ELBO is maximized with respect to the model parameters $\theta$ and $\phi$, where $\theta$ denotes the parameters of the decoder network $p_\theta(x|z)$ and $\phi$ denotes the parameters of the encoder network $q_\phi(z|x)$. Maximizing this bound regularizes the latent space to match the prior in the encoder through the first KL regularization term, and it encourages accurate reconstruction of the input in the decoder through the second likelihood term. Given the assumption that the prior distribution follows a standard multivariate 
normal $p(\mathbf{z}) \equiv \mathcal{N}(\mathbf{0}, \mathbf{I})$, and that the 
approximate posterior 
$q_{\phi}(\mathbf{z}\,|\,\mathbf{x}) = \mathcal{N}(\boldsymbol{\mu}, 
\mathrm{diag}(\boldsymbol{\sigma}^2))$ 
is Gaussian with mean vector $\boldsymbol{\mu} \in \mathbb{R}^k$ and standard 
deviation vector $\boldsymbol{\sigma} \in \mathbb{R}^k$, 
the ELBO in Equation~\ref{eq:elbo} simplifies to a closed-form expression. 
The KL divergence term becomes
\begin{equation}
\mathcal{D}_{\mathrm{KL}}\!\big(q_{\phi}(\mathbf{z}\,|\,\mathbf{x})\,\|\,p_{\theta}(\mathbf{z})\big)
=\tfrac{1}{2}\sum_{i=1}^{k}\!\left(\sigma_{i}^{2}+\mu_{i}^{2}-1-\log\sigma_{i}^{2}\right),
\label{eq:kl_gauss_simplified}
\end{equation}
which regularizes the latent representation to remain close to the unit Gaussian prior. In addition, the reconstruction term $\mathbb{E}_{\mathbf{z}\sim q_{\phi}}[\log p_{\theta}(\mathbf{x}\,|\,\mathbf{z})]$ can be 
simplified under the same Gaussian assumption and  implemented as the mean squared error 
(MSE) loss between the input mixture and its reconstruction.

\subsubsection{Proposed Method}

The proposed XVAE-WMT method integrates wavelet transform preprocessing, a deep encoder–decoder architecture, and a temporal consistency (TC) regularization term to enforce smooth transitions between consecutive time steps. We further apply explainable AI (XAI) analysis to visualize and interpret the learned latent space and identify the most influential latent dimensions.

\textbf{Temporal Consistency Regularization:} We enforce smooth temporal evolution in the reconstructed spectrograms through a regularization term. This loss minimizes abrupt variations between adjacent frames in the separated outputs, defined as
\begin{equation}
\mathcal{L}_{TC} = \| \hat{\mathbf{s}}_{t+1} - \hat{\mathbf{s}}_{t} \|_2^2,
\end{equation}
where $\hat{\mathbf{s}}_t$ and $\hat{\mathbf{s}}_{t+1}$ denote consecutive reconstructed frames. We add this term to the ELBO loss function in Equation~\ref{eq:elbo}. The temporal consistency loss is incorporated into the ELBO objective with a weighting parameter $\lambda_{TC}$ to control the strength of the regularization.

\textbf{Network Architecture:} The VAE model takes a spectrogram $\mathbf{X}(t,f)$ of size $1000 \times 128$ as input and separates it into source components. It employs two parallel encoders and two decoders, one for each source. Each encoder includes three convolutional layers. The first layer uses 128 filters with a $1 \times 128$ kernel to capture spectral patterns. The second layer applies 128 filters with a $4 \times 1$ kernel to learn temporal dependencies, and the third layer employs 256 filters with a $4 \times 1$ kernel to extract higher-level correlations. The output of each encoder has a shape of $994 \times 1 \times 256$. The model flattens this feature map into a vector and passes it through two fully connected layers to estimate the latent distribution parameters: the mean $\boldsymbol{\mu} \in \mathbb{R}^{128}$ and the log-variance $\log\boldsymbol{\sigma}^2 \in \mathbb{R}^{128}$. Using the reparameterization trick, the latent vector $\mathbf{z}$ is sampled as
\begin{equation}
\mathbf{z} = \boldsymbol{\mu} + \boldsymbol{\sigma} \odot \boldsymbol{\varepsilon}, \qquad \boldsymbol{\varepsilon} \sim \mathcal{N}(\mathbf{0}, \mathbf{I}).
\end{equation}
This sampling operation uses the \emph{reparameterization trick} \cite{kingma2013auto}, which expresses the random variable $z$ as a deterministic function of the encoder outputs and an auxiliary noise variable $\epsilon \sim \mathcal{N}(0,I)$, enabling backpropagation through the stochastic sampling step during training. Each encoder therefore produces one latent vector $\mathbf{z} \in \mathbb{R}^{128}$ per input. The resulting latent representations, $\mathbf{z}_1$ and $\mathbf{z}_2$, correspond to the heart and lung sources, respectively. Each decoder receives its associated $\mathbf{z}$ and reconstructs its target spectrogram.

\textbf{Output Masking:} 
Each latent vector $\mathbf{z}_i$ obtained from the encoder is passed through 
its corresponding decoder network to generate a soft time--frequency mask 
$\mathbf{M}_i(t,f)$. The decoder consists of fully connected and deconvolution layers, 
and its final activation is a Sigmoid function that constrains mask values 
to the range $[0,1]$. The mask is then applied element-wise to the input 
mixture spectrogram $\mathbf{X}(t,f)$ as
\begin{equation}
\hat{\mathbf{S}}_i(t,f) = \mathbf{M}_i(t,f) \odot \mathbf{X}(t,f).
\end{equation}

Spectral masking is applied so that the VAE estimates \emph{how much} of the input mixture belongs to each source at every time--frequency point, rather than generating the source signals directly. This masking operation can be interpreted as a soft assignment that partitions the mixture energy across sources at each time--frequency bin. By multiplying the learned mask with the mixture spectrogram, the separated outputs are guaranteed to sum to the original mixture and remain consistent with the observed data. This design improves separation stability and avoids unrealistic signal reconstruction.

\textbf{Explainable Latent-Space Analysis:} 
We applied XAI analysis to interpret the latent representations learned by the VAE. The XAI method analyzes the learned latent space by identifying which latent dimensions most strongly influence the reconstruction and separation outputs.
Rather than assigning explicit semantic meaning, the analysis highlights dimensions associated with dominant time--frequency activity patterns in the data.
First, we visualized the latent vectors of both sources using t-SNE. Next, we performed a SHAP (SHapley Additive exPlanations) analysis to rank the importance of each latent dimension. Using these SHAP scores, we iteratively reduced the latent space.

\paragraph{Training Process and Motivation}
The VAE is trained in an unsupervised manner using sound mixtures without access to ground-truth source signals. The objective is to learn a structured latent representation that captures the underlying characteristics of each source. Wavelet-based inputs are used to preserve time--frequency structure, while output masking enforces mixture consistency. The temporal consistency regularization encourages smooth evolution of the reconstructed signals over time, reflecting the physiological continuity of cardiorespiratory sounds.

\subsection{Experimental Results}
\textbf{Dataset:} We trained the model on two datasets. We created the first dataset by mixing three databases: the Kaggle Respiratory Sound Database, the CirCor DigiScope Heart Sound Database, and the Chest Wall Recording Dataset. All recordings were resampled to 16 kHz, normalized, and segmented into 1-second frames, producing 24,383 heart and 18,144 lung segments across training iterations. Random pairs were combined to generate the mixtures. We collected the second dataset (HLS-CMDS) from a clinical manikin. After resampling, normalization, and segmentation, we randomly paired heart and lung segments, forming 25,000 mixtures. Each sound frame was transformed into a wavelet spectrogram using the Morlet mother wavelet.\\ 
\textbf{Separation Results:}
Table~\ref{tab:results_all} presents the separation performance of different VAE variants across the two datasets, demonstrating the performance gains achieved by incorporating output masking, temporal consistency loss, and wavelet-based representations. Table~\ref{tab:vae_vs_nonNMF} further compares the proposed VAE-WMT method with other baseline blind source separation approaches, showing that VAE-WMT achieves higher SIR, SAR, and SDR, highlighting the effectiveness of the proposed architecture.

\noindent
\textbf{Clustering Results:} We analyzed the clustering quality using four unsupervised metrics—Silhouette coefficient, Davies--Bouldin (DB) index, Calinski--Harabasz (CH) index, and variance (Table~\ref{tab:latent_metrics}). We applied a SHAP-based explainability analysis on the latent embeddings to estimate feature importance and rank dimensions according to their contribution to clustering separability. We evaluated clustering performance for different top-k\% of the latent dimension (Figure~\ref{fig:latent}). We retained the top 75\% of latent dimensions for final visualization and quantitative analysis, which yielded the most interpretable structure without degrading performance. 

\begin{figure}[H]
    \centering
    \includegraphics[width=\linewidth]{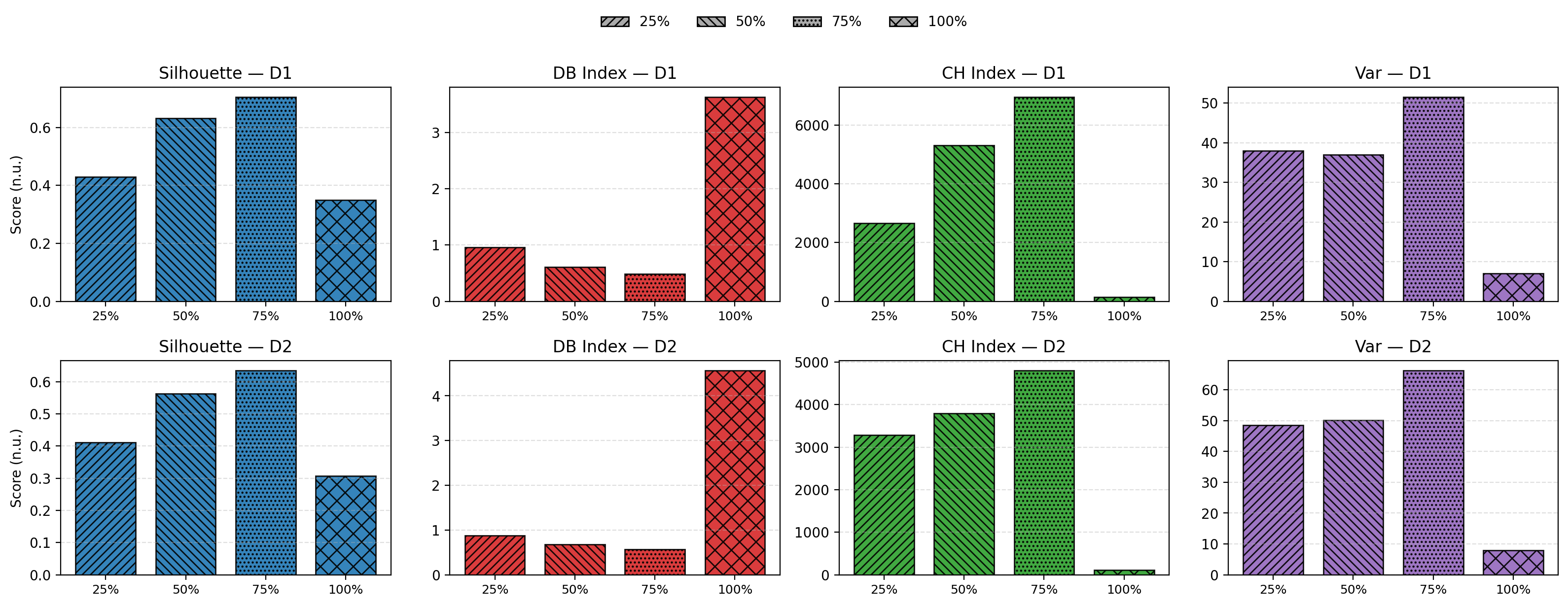}
    \captionsetup{justification=justified,font=footnotesize}
\caption{VAE clustering performance across different top-$k$\% subsets of the latent dimensions. Each bar represents the mean score for Silhouette, Davies–Bouldin (DB), Calinski–Harabasz (CH), and variance metrics 
on datasets D1 and D2. Higher Silhouette and CH, and lower DB and variance values indicate better clustering quality.}

    \label{fig:latent}
\end{figure}

Figure~\ref{fig:vae} visualizes the latent spaces of different VAE variants using t-SNE projections. Models incorporating wavelet, mixture, and temporal regularization components yield more compact and separable manifolds, 
    while simpler configurations exhibit diffuse or overlapping structures.
\begin{figure}[H]
    \centering
    \includegraphics[width=\linewidth, trim= 0 150 0 150, clip]{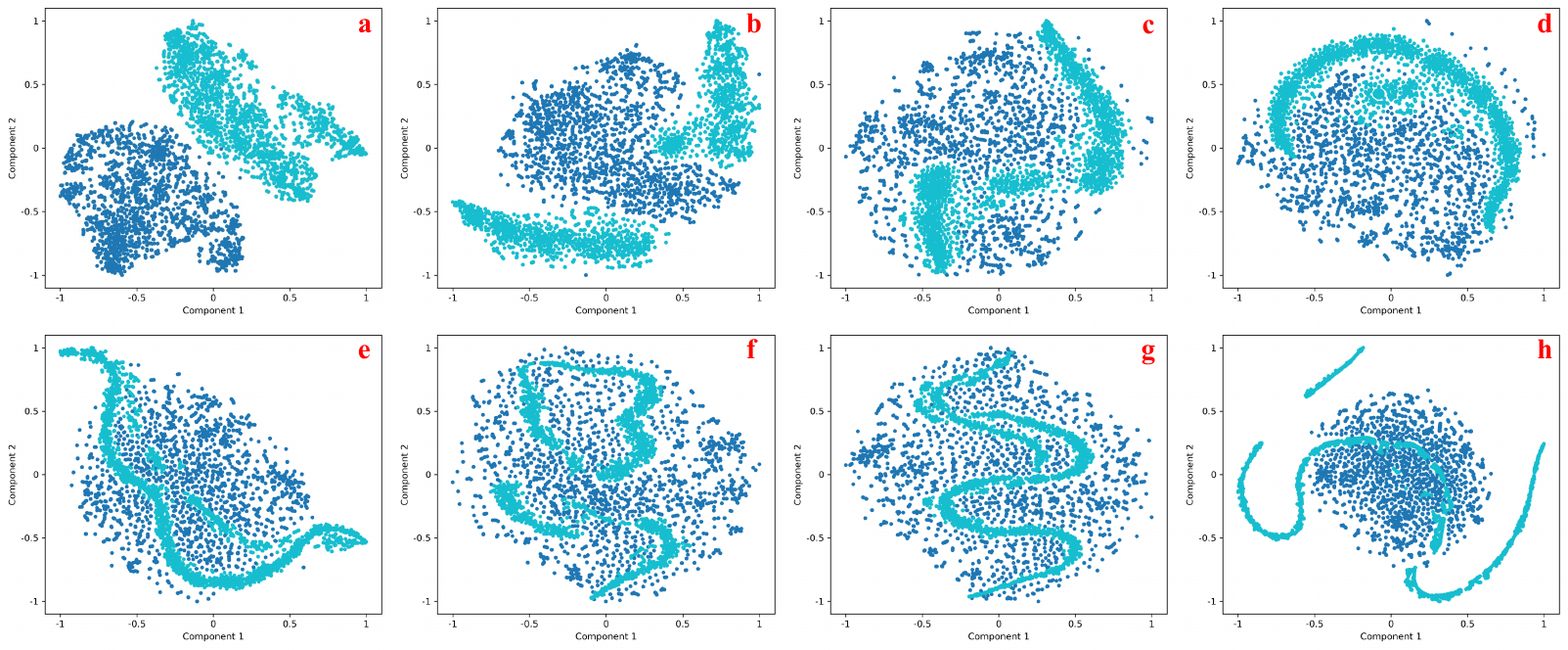}
    \captionsetup{justification=justified,font=footnotesize}
\caption{Latent space visualization of different VAE variants using t-SNE: 
    \textbf{(a)} XVAE-WMT, \textbf{(b)} XVAE-WT, \textbf{(c)} XVAE-WM, \textbf{(d)} XVAE-W, 
    \textbf{(e)} XVAE-MT, \textbf{(f)} XVAE-T, \textbf{(g)} XVAE-M, and \textbf{(h)} XVAE. 
    Each subplot shows the latent embeddings color-coded by cluster assignment.
    Two color groups (two shades of blue) correspond to the two sources being separated, 
    and the scatter patterns illustrate how each VAE variant organizes these sources in 
    latent space.}

    \label{fig:vae}
\end{figure}

\subsection{Conclusion}
We proposed an explainable variational autoencoder (VAE) for blind source separation in scenarios where access to clean source signals is restricted. By combining deep neural networks with probabilistic latent modelling, the proposed approach separates mixed cardiorespiratory recordings into low-dimensional latent representations corresponding to heart and lung sounds. The incorporation of wavelet-based inputs, output masking, and temporal consistency regularization promotes physiologically meaningful and stable separations, while XAI-based analysis enables interpretation and dimensionality reduction of the learned latent space. Experimental results on real heart and lung sound recordings demonstrate that the proposed method outperforms existing baseline approaches. In practical operation, an unseen mixture is encoded into latent variables and decoded to produce time--frequency masks that reconstruct the separated heart and lung signals without retraining.

        \setcounter{figure}{0}
        \setcounter{equation}{0}
        \setcounter{table}{0}
  \chapter{Physical Chemistry-Inspired Matrix Factorization for Clustering}
\newpage
Non-negative matrix factorization (NMF) is an unsupervised learning method that offers low-rank representations across various domains, including audio processing, biomedical signal analysis, and image recognition. The incorporation of $\alpha$-divergence in NMF formulations enhances flexibility in optimization, yet extending these methods to multi-layer architectures presents challenges in ensuring convergence. To address this, we introduce a novel approach inspired by the Boltzmann probability of the energy barriers in chemical reactions to perform convergence analysis theoretically. We introduce a novel method, called Chem-NMF, with a bounding factor which stabilizes convergence. To our knowledge, this is the first study to apply a physical chemistry perspective to analyze the convergence behaviour of the NMF algorithm.

\section{Introduction}
Data clustering plays a critical role in computer vision and pattern recognition, as it enables unsupervised organization of large-scale datasets into meaningful groups. In this context, NMF is an interpretable representation learning tool for data clustering, which can generate low-dimensional features. NMF is a widely used technique for decomposing high-dimensional data into interpretable low-rank components, and it has found applications in audio processing, biomedical signal analysis, image recognition, text mining, and blind source separation. Among divergence-based NMF approaches, the $\alpha$-divergence formulation provides a flexible framework that generalizes traditional cost functions and enhances model adaptability in different applications~\cite{Cichocki2008}. However, extending these formulations to multi-layer architectures~\cite{Cichocki2010} introduces additional mathematical complexities, requiring a deeper understanding of their theoretical properties, such as convergence~\cite{Torabi2025ChemNMF}. Several studies have investigated the convergence properties of NMF algorithms, often focusing on different divergence measures and optimization techniques, but they primarily address single-layer architectures, leaving the convergence behaviour of multi-layer NMF largely unexplored. To address these challenges, our work draws inspiration from physical chemistry, particularly the concepts of energy barriers and Boltzmann probability \cite{PIETRUCCI201732}, where systems must overcome thresholds to transition between states.

\section{Physical Chemistry Background}

In order to motivate the analogy between chemical reactions and the convergence of 
multi-layer $\alpha$-NMF, we review several basic chemical concepts \cite{Atkins2022}. In chemical reactions, the initial molecules that change are called \textit{reactants}, while the final stable molecules formed after completion are referred to as \textit{products}. The driving force behind these transformations is the \textit{Gibbs free energy}, which combines a system’s enthalpy $H$ and entropy $S$ at temperature $T$:
\begin{equation}
\Delta G = \Delta H - T \Delta S .
\end{equation}

Many reactions proceed in \textit{multi-stage reactions}, each with its own transition state and energy barrier (Figure~\figsubref{fig:reaction}{a}). A free energy diagram shows reactants moving through several intermediates before reaching a stable product state. Each stage resembles an energy basin separated by barriers. The \textit{transition state} itself is a high-energy, unstable configuration that represents the maximum energy barrier between reactants and products. Between two such barriers, a temporary species known as an \textit{intermediate} can form. The energy needed to cross the transition state is called the \textit{activation energy}. The Gibbs free energy difference between the reactants and the transition state defines the activation barrier, which controls the reaction rate:
\begin{equation}
\Delta G^{\ddagger} = G_{\text{TS}} - G_{\text{reactants}}.
\end{equation}
Catalysts lower $\Delta G^{\ddagger}$ by stabilizing the transition state (Figure~\figsubref{fig:reaction}{b}). In catalyzed reactions, the pathway is rerouted to reduce the activation barrier.
The likelihood of crossing these barriers is governed by the \textit{Boltzmann distribution}, which describes the probability of a system occupying a state with energy $E$ and thereby determines how easily the system can overcome energy barriers to reach more stable states. 
\begin{figure}[H]
    \centering
    \includegraphics[width=\textwidth]{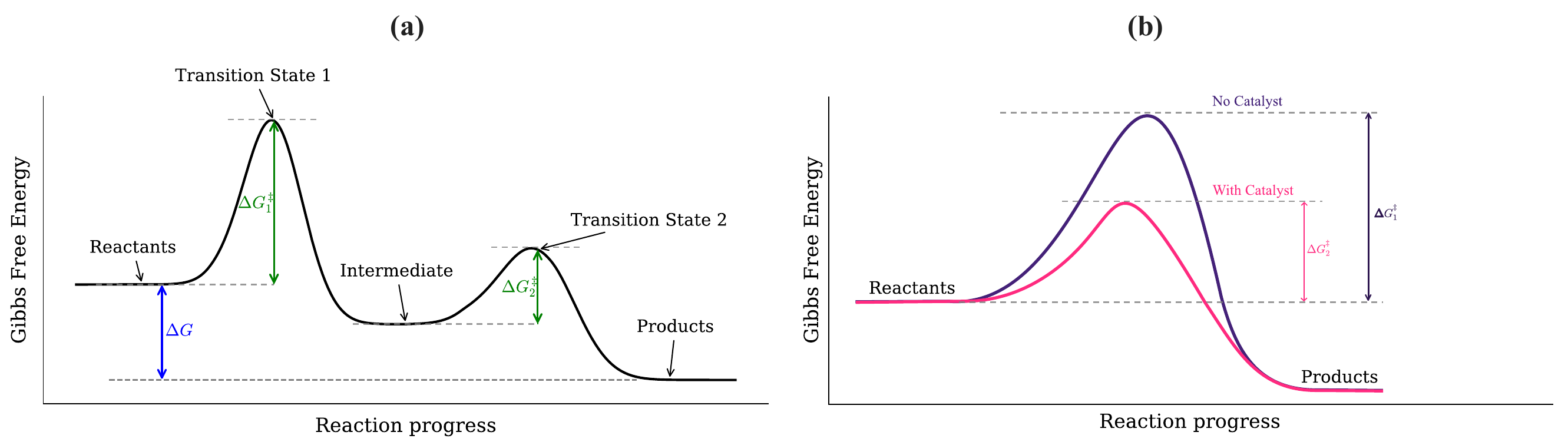}
    \captionsetup{justification=justified,font=footnotesize}
    \caption{Energy profile of a reaction progress: \textbf{(a)} Reactants, intermediate, and products. 
    The two transition states correspond to the energy maxima, with activation 
    free energies $\Delta G^{\ddagger}_{1}$ and $\Delta G^{\ddagger}_{2}$ indicated by vertical 
    arrows. The overall free energy change $\Delta G$ is shown between reactants and products; \textbf{(b)} The effect of catalyst on lowering the activation barrier.}
    \label{fig:reaction}
\end{figure}
The above chemical phenomena provide a natural analogy to the optimization process of the multi-layer $\alpha$-NMF algorithm. In this analogy, just as chemical systems move through intermediates before reaching the most stable state, multi-layer $\alpha$-NMF traverses successive layers to escape shallow minima and converge to better solutions.

\section{Proposed Method}
As shown in Figure~\ref{fig:method}, we factorize the input data, and we perform clustering on the activation maps. Next, we reconstruct the maps by multiplying the feature basis matrices, and we evaluate the clustering results. Chem-NMF is a multi-layer $\alpha$-divergence NMF algorithm that introduces a bounding factor ($BF$) as a novel mechanism to improve convergence (Algorithm~\ref{alg:chemnmf} ). Inspired by the way catalysts reduce activation energy in chemical reactions, the bounding factor is applied during the random initialization step to stabilize the search space in order to reduce the risk of overshooting and getting trapped in local minima.

\begin{figure}[H]
    \centering
    \includegraphics[width=0.7\textwidth]{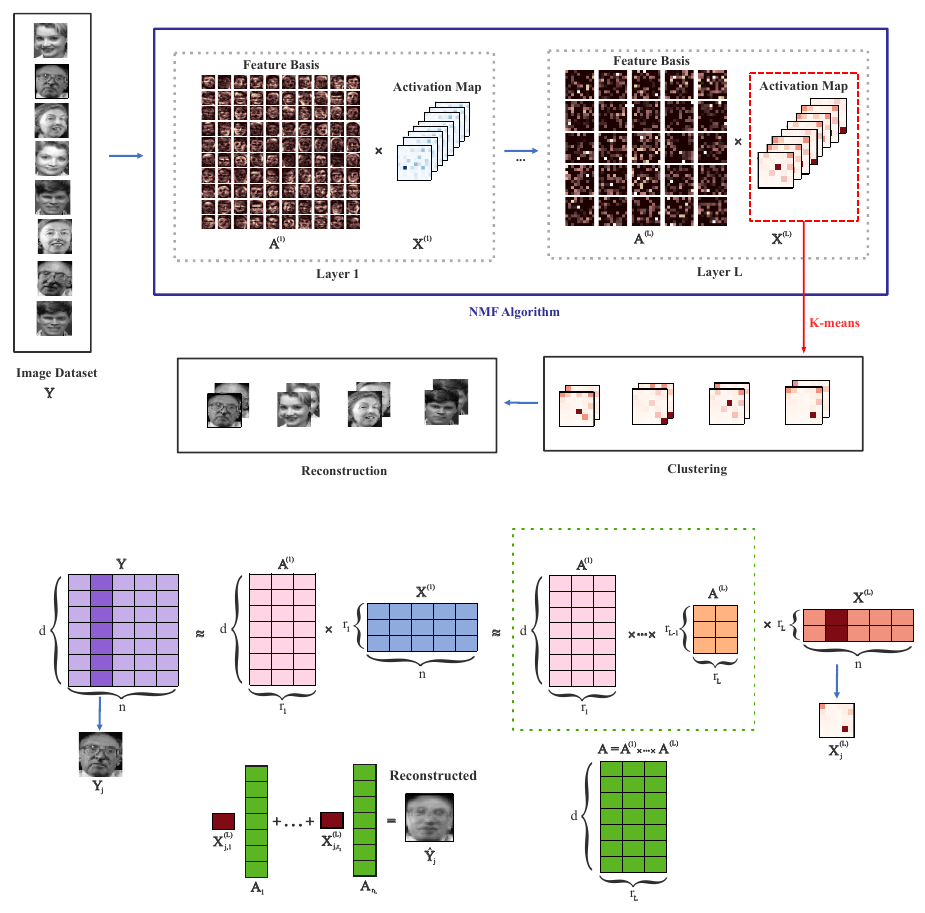}
    \captionsetup{justification=justified, font=footnotesize}
    \caption{Overview of the clustering procedure. 
    The input dataset $\mathbf{Y}$ is factorized into a feature basis $\mathbf{A}$ and activation maps $\mathbf{X}$ across multiple layers using NMF. 
    The activation maps are clustered with $k$-means, and images are reconstructed using feature basis matrices.}
    \label{fig:method}
\end{figure}

\section{Rigorous Convergence Analysis}
Let $D_{\alpha}(\mathbf{Y} \parallel \mathbf{A}\mathbf{X})$ denote the objective function based on the $\alpha$-divergence between $\mathbf{Y}$ and $\mathbf{A}\mathbf{X}$ defined as Equation~\ref{eq:div}. We show that the algorithm converges subject to its multiplicative update rule \cite{Cichocki2009}. The convergence analysis assumes that the optimization variables follow a
Boltzmann-type distribution, which enables controlled energy minimization and
stable convergence behaviour. In practice, the method is not highly sensitive to
moderate deviations from this assumption, as convergence is primarily driven by
the bounded update dynamics. As future work, this framework could be extended beyond NMF to more general non-convex optimization problems.

\begin{equation}
D_{\alpha}(\mathbf{Y} \parallel \mathbf{A}\mathbf{X}) = \frac{1}{\alpha(\alpha - 1)} 
\sum_{it} \left( y_{it}^\alpha [\mathbf{A}\mathbf{X}]_{it}^{1 - \alpha} - \alpha y_{it} + (\alpha - 1) [\mathbf{A}\mathbf{X}]_{it} \right).
\label{eq:div}
\end{equation}

\begin{theorem} 
The NMF algorithm follows the multiplicative update rules: 

\begin{equation}
    x_{jt} \leftarrow x_{jt} 
    \left( \frac{\sum\limits_{i} a_{ij} \left( \frac{y_{it}}{[\mathbf{A}\mathbf{X}]_{it}} \right)^{\alpha}}
    {\sum\limits_{i} a_{ij}} \right)^{\frac{1}{\alpha}}.
\end{equation}

\begin{equation}
    a_{ij} \leftarrow a_{ij} 
    \left( \frac{\sum\limits_{t} x_{jt} \left( \frac{y_{it}}{[\mathbf{A}\mathbf{X}]_{it}} \right)^{\alpha}}
    {\sum\limits_{t} x_{jt}} \right)^{\frac{1}{\alpha}}.
\end{equation}
\end{theorem}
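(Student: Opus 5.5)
The statement only lists the update rules, so what needs proving is that they do not increase $D_{\alpha}(\mathbf{Y}\parallel\mathbf{A}\mathbf{X})$ in Equation~\ref{eq:div}, and hence that the objective converges. I would prove this by majorization--minimization, in the style of Lee--Seung and Cichocki: build an auxiliary function $G(\mathbf{X},\mathbf{X}^{k})$ with
\[
G(\mathbf{X},\mathbf{X}^{k})\ge D_{\alpha}(\mathbf{Y}\parallel\mathbf{A}\mathbf{X}), \qquad G(\mathbf{X}^{k},\mathbf{X}^{k})=D_{\alpha}(\mathbf{Y}\parallel\mathbf{A}\mathbf{X}^{k}),
\]
and show that the $x_{jt}$ update is exactly $\arg\min_{\mathbf{X}} G(\mathbf{X},\mathbf{X}^{k})$. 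The $a_{ij}$ update then follows by symmetry: $\mathbf{Y}^{\top}\approx\mathbf{X}^{\top}\mathbf{A}^{\top}$ and the divergence is invariant under transposition.

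\textbf{Key steps.}
\begin{enumerate}
\item Rewrite each summand of Equation~\ref{eq:div} as $y_{it}\,g(q_{it}/y_{it})$, where $q_{it}=[\mathbf{A}\mathbf{X}]_{it}$ and
\[
g(u)=\frac{u^{1-\alpha}-\alpha+(\alpha-1)u}{\alpha(\alpha-1)}.
\]
A direct computation gives $g''(u)=u^{-\alpha-1}>0$ for $u>0$, so $g$ is convex for every $\alpha\neq 0,1$.
\item With $\mathbf{A}$ fixed, introduce the weights $\zeta_{ijt}=a_{ij}x^{k}_{jt}/[\mathbf{A}\mathbf{X}^{k}]_{it}$, which satisfy $\sum_j\zeta_{ijt}=1$. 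Jensen's inequality then yields
\[
G(\mathbf{X},\mathbf{X}^{k})=\sum_{it}y_{it}\sum_j \zeta_{ijt}\, g\!\left(\frac{a_{ij}x_{jt}}{\zeta_{ijt}\,y_{it}}\right)\ \ge\ D_{\alpha}(\mathbf{Y}\parallel\mathbf{A}\mathbf{X}).
\]
Equality holds at $\mathbf{X}=\mathbf{X}^{k}$, because every argument of $g$ then equals $[\mathbf{A}\mathbf{X}^{k}]_{it}/y_{it}$.
\item The function $G$ separates in the $x_{jt}$ and is strictly convex in each of them. Use $g'(u)=(1-u^{-\alpha})/\alpha$ and set $\partial G/\partial x_{jt}=0$:
\[
\sum_i a_{ij}\Bigl[1-\Bigl(\tfrac{x_{jt}}{x^{k}_{jt}}\Bigr)^{-\alpha}\Bigl(\tfrac{y_{it}}{[\mathbf{A}\mathbf{X}^{k}]_{it}}\Bigr)^{\alpha}\Bigr]=0.
\]
Solving for $(x_{jt}/x^{k}_{jt})^{\alpha}$ reproduces exactly the stated multiplicative rule.
\item Chain the inequalities:
\[
D(\mathbf{X}^{k+1})\le G(\mathbf{X}^{k+1},\mathbf{X}^{k})\le G(\mathbf{X}^{k},\mathbf{X}^{k})=D(\mathbf{X}^{k}).
\]
Since $D_\alpha\ge 0$, the sequence of objective values is monotone and bounded below, hence convergent. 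Alternating the $\mathbf{X}$- and $\mathbf{A}$-steps preserves monotonicity.
\item For the multi-layer version, apply the argument layer by layer to $\mathbf{X}_{l-1}\approx\mathbf{A}_l\mathbf{X}_l$. Each layer's subproblem then converges.
\end{enumerate}

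\textbf{Main obstacle.} The majorization argument only gives convergence of the objective values, not of the iterates to a stationary point. It also needs strictly positive entries, since zeros make the weights $\zeta_{ijt}$ and the power $u^{-\alpha}$ degenerate. To handle this, I would first rely on the bounding factor at initialization to keep all entries in a compact set bounded away from zero. I would then try a standard Zangwill-type argument on that compact set to upgrade the result to limit points being stationary. This is also the point where the Boltzmann-type assumption would be invoked, to control the layer-wise interaction that the per-layer proof in step 5 does not address.
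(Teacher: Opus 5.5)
Your proof is correct, but it takes a different route from the paper's proof of this theorem. The paper treats the statement as a \emph{derivation} of the rules. It computes $\partial D_\alpha/\partial x_{jt}=\frac{1}{\alpha}\sum_i a_{ij}\bigl[1-(y_{it}/[\mathbf{A}\mathbf{X}]_{it})^{\alpha}\bigr]$ and takes a gradient step in the reparametrized variable $\Phi(x_{jt})=x_{jt}^{\alpha}$ via the chain rule. It then picks the step size $\eta_{jt}=\alpha^2\Phi(x_{jt})/(x_{jt}^{1-\alpha}\sum_i a_{ij})$ precisely so that the additive $x_{jt}^{\alpha}$ term cancels. Taking the $1/\alpha$ power leaves the multiplicative rule, and the $a_{ij}$ rule is asserted ``similarly''.

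That argument is short and explains where the exponent $1/\alpha$ and the normalizer $\sum_i a_{ij}$ come from. However, it is pure algebra: nothing in it shows that this particular data-dependent step decreases $D_\alpha$.

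Your majorization--minimization argument obtains the same rule as the closed-form minimizer of the separable Jensen majorizer, so it derives the rule and proves monotone decrease at once. Your Steps 2--4 are in fact the paper's later auxiliary-function lemma and non-increase theorem, with the same weights $\zeta_{itj}$ and the same stationarity equation. You have therefore merged three of the paper's results into one argument. The price is that you need convexity of $g$ and strictly positive entries. Your transposition argument for the $a_{ij}$ update is also more explicit than the paper's ``similarly''.

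One caution about your final paragraph, which goes beyond what the statement asks. The Chem-NMF bounding factor only shapes the \emph{initialization} of $\mathbf{A}^{(\ell)}$ for $\ell>1$, so it cannot keep the iterates in a compact set bounded away from zero. Multiplicative updates keep entries positive at every finite step but can drive them to zero in the limit. Moreover, the scale invariance $(\mathbf{A},\mathbf{X})\mapsto(c\mathbf{A},c^{-1}\mathbf{X})$ means the iterates need not stay bounded at all without normalization. A Zangwill-type argument would therefore need an explicit normalization and an $\epsilon$-floor on the entries. The Boltzmann-type assumption has no role here; in the paper it enters only the escape-probability results.
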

\textit{Proof} Appendix~\ref{sec:thm441}.

\begin{definition}[Auxiliary Function]
A function $G(\mathbf{X}, \mathbf{X}')$ is an auxiliary function for $F(\mathbf{X})$ if it satisfies the following conditions:
\begin{enumerate}[label=\roman*]
    \item $G(\mathbf{X}, \mathbf{X}) = F(\mathbf{X})$,
    \item $G(\mathbf{X}, \mathbf{X}') \geq F(\mathbf{X})$, for all $\mathbf{X}'$.
\end{enumerate}
\end{definition}

\begin{lemma}
The function
\begin{equation}
G(\mathbf{X}, \mathbf{X}') = \frac{1}{\alpha(\alpha - 1)} 
\sum_{ijt} y_{it} \zeta_{itj} 
\left[ \left( \frac{a_{ij}x_{jt}} {y_{it} \zeta_{itj}}  \right)^{(1-\alpha)} 
+ (\alpha - 1)  \frac{a_{ij} x_{jt}} {y_{it}\zeta_{itj}}  - \alpha \right],
\end{equation}
where 
\begin{equation}
\zeta_{itj} = \frac{a_{ij} x'_{jt}}{\sum_{j=1}^{J} a_{ij} x'_{jt}},
\label{eq:zeta}
\end{equation}
is an auxiliary function for
\begin{equation}
F(\mathbf{X}) = \frac{1}{\alpha(\alpha - 1)} 
\sum_{it} \left( y_{it}^{\alpha} [\mathbf{A}\mathbf{X}]_{it}^{1-\alpha} - \alpha y_{it} + (\alpha - 1) [\mathbf{A}\mathbf{X}]_{it} \right).
\end{equation}
\end{lemma}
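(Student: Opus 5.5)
We verify the two defining conditions of an auxiliary function in turn: condition (i) by direct substitution, and condition (ii) by Jensen's inequality applied to a convex per-entry function. First, rewrite each term of $F$ entrywise. Define
\begin{equation*}
f(u) = \frac{1}{\alpha(\alpha-1)}\bigl(u^{1-\alpha} + (\alpha-1)u - \alpha\bigr), \qquad u>0 .
\end{equation*}
Then the $(i,t)$ summand of $F$ equals $y_{it}\, f\bigl([\mathbf{A}\mathbf{X}]_{it}/y_{it}\bigr)$. Similarly, the $(i,j,t)$ summand of $G$ equals $y_{it}\zeta_{itj}\, f\bigl(a_{ij}x_{jt}/(y_{it}\zeta_{itj})\bigr)$.

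The key fact is that $f$ is convex on $(0,\infty)$ for every $\alpha\neq 0,1$. Indeed,
\begin{equation*}
f''(u) = \frac{(1-\alpha)(-\alpha)}{\alpha(\alpha-1)}\,u^{-\alpha-1} = u^{-\alpha-1} > 0 .
\end{equation*}
The prefactor $1/(\alpha(\alpha-1))$ is exactly what makes this sign positive regardless of the sign of $\alpha(\alpha-1)$. The argument is cleanest if we check this first, since it avoids splitting into cases on $\alpha$.

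\textbf{Condition (i).} Set $\mathbf{X}'=\mathbf{X}$. Then $\zeta_{itj}=a_{ij}x_{jt}/[\mathbf{A}\mathbf{X}]_{it}$, so the argument of $f$ becomes $a_{ij}x_{jt}/(y_{it}\zeta_{itj}) = [\mathbf{A}\mathbf{X}]_{it}/y_{it}$, which is independent of $j$. Summing the weights gives $\sum_j \zeta_{itj}=1$, and therefore
\begin{equation*}
\sum_j y_{it}\zeta_{itj}\, f\bigl([\mathbf{A}\mathbf{X}]_{it}/y_{it}\bigr) = y_{it}\, f\bigl([\mathbf{A}\mathbf{X}]_{it}/y_{it}\bigr),
\end{equation*}
which is the $(i,t)$ term of $F$. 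Hence $G(\mathbf{X},\mathbf{X})=F(\mathbf{X})$.

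\textbf{Condition (ii).} Fix $(i,t)$. For any $\mathbf{X}'$ the weights $\zeta_{itj}\ge 0$ satisfy $\sum_j\zeta_{itj}=1$, and
\begin{equation*}
\sum_j \zeta_{itj}\,\frac{a_{ij}x_{jt}}{y_{it}\zeta_{itj}} = \frac{[\mathbf{A}\mathbf{X}]_{it}}{y_{it}} .
\end{equation*}
Jensen's inequality for the convex function $f$ then gives
\begin{equation*}
f\!\left(\frac{[\mathbf{A}\mathbf{X}]_{it}}{y_{it}}\right) \le \sum_j \zeta_{itj}\, f\!\left(\frac{a_{ij}x_{jt}}{y_{it}\zeta_{itj}}\right).
\end{equation*}
Multiplying by $y_{it}\ge 0$ and summing over $(i,t)$ yields $F(\mathbf{X})\le G(\mathbf{X},\mathbf{X}')$.

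\textbf{Degenerate entries.} The main obstacle is the boundary cases, and we treat them explicitly rather than leave them implicit. When some $\zeta_{itj}=0$ (that is, $a_{ij}x'_{jt}=0$), we adopt the convention that the corresponding term is its limit, and we note that such terms contribute nonnegatively, or are taken as zero when $a_{ij}x_{jt}=0$ as well. When $y_{it}=0$, we interpret the term directly from the original form $y_{it}^\alpha(\cdot)^{1-\alpha}$, with the standard assumption $\alpha>0$. Otherwise we simply assume positive entries, as is usual in multiplicative-update NMF analyses.
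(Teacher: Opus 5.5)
Your proof is correct and is essentially the paper's argument. Condition (i) follows by substituting $\mathbf{X}'=\mathbf{X}$, so that $\zeta_{itj}=a_{ij}x_{jt}/[\mathbf{A}\mathbf{X}]_{it}$ and the weights sum to one; condition (ii) follows from Jensen's inequality for the convex $f$ with $f''(u)=u^{-\alpha-1}$, weighted by $\zeta_{itj}$, then multiplied by $y_{it}$ and summed over $(i,t)$. Your extra treatment of zero weights, which the paper tacitly excludes by assuming positive entries, is slightly loose: nonnegativity of the limiting terms would not by itself suffice. What actually makes Jensen go through is that for $\alpha>0$ each such term tends to exactly $a_{ij}x_{jt}/\alpha$, which is what convexity of $f$ together with $f'<1/\alpha$ needs to absorb the dropped summand; for $\alpha<0$ the limit is $+\infty$, so the bound is trivial.
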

\textit{Proof} Appendix~\ref{sec:lm441}.

\begin{theorem}
$F(\mathbf{X})$ is non-increasing such that:
\begin{equation}  
    F(\mathbf{X}^{(t+1)}) \leq G(\mathbf{X}^{(t+1)}, \mathbf{X}^{(t)}) 
    \leq G(\mathbf{X}^{(t)}, \mathbf{X}^{(t)}) = F(\mathbf{X}^{(t)}).
    \label{eq:Gfunc}
\end{equation}  
\end{theorem}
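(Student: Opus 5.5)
The plan is the standard majorization–minimization argument. The auxiliary-function lemma above provides $G$, and the multiplicative update for $x_{jt}$ in the first theorem is the exact minimizer of $G(\cdot,\mathbf{X}^{(t)})$. Throughout, $\mathbf{A}$ is held fixed while $\mathbf{X}$ is updated; the $\mathbf{A}$-update is handled symmetrically by transposition.

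The chain in \eqref{eq:Gfunc} splits into three pieces.
\begin{enumerate}[label=(\alph*)]
\item The outer equality $G(\mathbf{X}^{(t)},\mathbf{X}^{(t)})=F(\mathbf{X}^{(t)})$ is condition (i) of the auxiliary function, established in the lemma. Concretely, with $\mathbf{X}'=\mathbf{X}$ we have $a_{ij}x_{jt}/\zeta_{itj}=[\mathbf{A}\mathbf{X}]_{it}$ for every $j$, and $\sum_j\zeta_{itj}=1$. The sum over $j$ then collapses to the summand of $F$.
\item The first inequality $F(\mathbf{X}^{(t+1)})\le G(\mathbf{X}^{(t+1)},\mathbf{X}^{(t)})$ is condition (ii) of the lemma, applied with $\mathbf{X}=\mathbf{X}^{(t+1)}$ and $\mathbf{X}'=\mathbf{X}^{(t)}$. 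Underlying this is Jensen's inequality for the convex function $u\mapsto \frac{1}{\alpha(\alpha-1)}\bigl(u^{1-\alpha}+(\alpha-1)u-\alpha\bigr)$ with weights $\zeta_{itj}$. So that bound may simply be invoked.
\item The real content is the middle inequality $G(\mathbf{X}^{(t+1)},\mathbf{X}^{(t)})\le G(\mathbf{X}^{(t)},\mathbf{X}^{(t)})$.
\end{enumerate}

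For the middle inequality I would show that $\mathbf{X}^{(t+1)}$, given by the update rule, is a global minimizer of $\mathbf{X}\mapsto G(\mathbf{X},\mathbf{X}^{(t)})$ over $\mathbf{X}\ge 0$. The key point is that $G$ is separable in the entries $x_{jt}$. For fixed $(j,t)$, the relevant part is
\[
g(x_{jt})=\frac{1}{\alpha(\alpha-1)}\sum_i y_{it}\zeta_{itj}\Bigl[\Bigl(\frac{a_{ij}x_{jt}}{y_{it}\zeta_{itj}}\Bigr)^{1-\alpha}+(\alpha-1)\frac{a_{ij}x_{jt}}{y_{it}\zeta_{itj}}\Bigr].
\]
I would check that $g$ is convex on $(0,\infty)$. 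The second derivative of $\frac{1}{\alpha(\alpha-1)}u^{1-\alpha}$ is $u^{-1-\alpha}>0$ for every $\alpha\neq 0,1$, and the linear term contributes nothing to it. Convexity means a stationary point is the global minimum.

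Setting $g'(x_{jt})=0$ gives
\[
\sum_i a_{ij}\Bigl(\frac{y_{it}\zeta_{itj}}{a_{ij}x_{jt}}\Bigr)^{\alpha}=\sum_i a_{ij}.
\]
Substituting $\zeta_{itj}=a_{ij}x'_{jt}/[\mathbf{A}\mathbf{X}']_{it}$ makes the ratio inside equal $y_{it}x'_{jt}/([\mathbf{A}\mathbf{X}']_{it}x_{jt})$. Solving for $x_{jt}$ then gives exactly
\[
x_{jt}=x'_{jt}\Bigl(\sum_i a_{ij}\bigl(y_{it}/[\mathbf{A}\mathbf{X}']_{it}\bigr)^{\alpha}\big/\sum_i a_{ij}\Bigr)^{1/\alpha},
\]
which is the update of the first theorem. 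Hence $G(\mathbf{X}^{(t+1)},\mathbf{X}^{(t)})=\min_{\mathbf{X}} G(\mathbf{X},\mathbf{X}^{(t)})\le G(\mathbf{X}^{(t)},\mathbf{X}^{(t)})$.

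The main obstacle is making this minimization step rigorous. The domain needs care: strict positivity of $\mathbf{A}$, $\mathbf{X}^{(t)}$ and $\mathbf{Y}$ keeps $\zeta_{itj}$ well defined and the powers finite. I would assume positive initialization, which the multiplicative form then preserves. The sign of the prefactor $1/(\alpha(\alpha-1))$ must also be handled in the cases $\alpha<0$, $0<\alpha<1$ and $\alpha>1$ to confirm convexity and not concavity; I will verify that $u^{-1-\alpha}$ absorbs the sign uniformly. Finally, the stationary point must be interior, so that no boundary minimum at $x_{jt}=0$ arises. This follows because $g'\to-\infty$ as $x_{jt}\to 0^+$ when $\alpha>0$, with the limits checked analogously for $\alpha<0$. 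Chaining (a), (b) and the minimization step yields \eqref{eq:Gfunc}, so $F$ is non-increasing.
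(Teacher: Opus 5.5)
Your proposal is correct and follows essentially the same route as the paper. The two outer relations come from the identity and upper-bound conditions of the auxiliary-function lemma, and the middle inequality comes from setting $\partial G(\mathbf{X},\mathbf{X}')/\partial x_{jt}=0$, substituting $\zeta_{itj}$, and recovering the multiplicative update. The paper stops at the stationarity condition and simply asserts that this minimizes $G$, whereas you also show that the stationary point is the global, interior minimizer of $G(\cdot,\mathbf{X}^{(t)})$, using separability, $f''(u)=u^{-1-\alpha}>0$ and the limiting behaviour of $g'$, which makes your version more rigorous.
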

\textit{Proof} Appendix~\ref{sec:thm442}.

\noindent The convergence analysis of the update rule for $a_{ij}$ is similar. 

 Although we proved that the algorithm has a non-increasing cost function that guarantees convergence, it may still get trapped in local minima. We show that multi-layer NMF with bounded initialization reduces the probability of converging to local minima.

\begin{definition}[Energy Barrier]
Let $D_{\alpha}(\mathbf{Y} \parallel \mathbf{A}\mathbf{X})$ be the cost function associated with the NMF algorithm. 
The energy barrier $\xi$ is defined as the difference between the highest cost encountered along an 
optimization path $\gamma$ and the cost at the global minimum:
\begin{equation}
\xi = \max_{\gamma} D_{\alpha}(\mathbf{Y} \parallel \mathbf{A}\mathbf{X}) 
      - D_{\alpha}(\mathbf{Y} \parallel \mathbf{A}^*\mathbf{X}^*),
\end{equation}
where $(\mathbf{A}^*, \mathbf{X}^*)$ is the global minimum solution, 
and $\gamma$ is a transition path in the optimization landscape.
\end{definition}

\begin{definition}[Boltzmann Probability]
The probability of escaping from a local minimum is given by:
\begin{equation}
P = \frac{1}{Z} e^{-\beta \xi},
\end{equation}
where $Z > 0$ is a normalization constant, $\beta > 0$ is an inverse temperature parameter 
that controls stochastic exploration, and $\xi$ is the energy barrier that must be overcome 
to escape local minima.
\end{definition}

\begin{lemma} 
Let $D_l$ represent the $\alpha$-divergence at layer $l$:
\begin{equation}
    D_l = D_{\alpha}(\mathbf{X}^{(l-1)} \parallel \mathbf{A}^{(l)} \mathbf{X}^{(l)}).
\end{equation}
Then, for all $l > 1$, we have:
\begin{equation}
    D_l \leq D_{l-1}.
\end{equation}
\end{lemma}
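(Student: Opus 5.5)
The plan is to combine the monotonicity already established for a single layer (the preceding Theorem, $F(\mathbf{X}^{(t+1)})\le F(\mathbf{X}^{(t)})$, together with the analogous statement for the $a_{ij}$ update) with a comparison at initialization. The key observation is that for $l>1$ each layer factorizes $\mathbf{X}^{(l-1)}\in\mathbb{R}_+^{J\times T}$ with a square factor $\mathbf{A}^{(l)}\in\mathbb{R}_+^{J\times J}$. The pair $(\mathbf{A}^{(l)},\mathbf{X}^{(l)})=(\mathbf{I},\mathbf{X}^{(l-1)})$ is therefore feasible, and it gives
\begin{equation*}
D_{\alpha}(\mathbf{X}^{(l-1)}\parallel \mathbf{I}\,\mathbf{X}^{(l-1)})=0\le D_{l-1},
\end{equation*}
because the $\alpha$-divergence is nonnegative and vanishes exactly when its arguments coincide. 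So the inequality is true at the level of attainable minima. What remains is to show that the algorithm, rather than some ideal minimizer, actually ends up below $D_{l-1}$.

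\textbf{Step 1.} I will record that $D_\alpha\ge 0$, with equality iff $\mathbf{Y}=\mathbf{A}\mathbf{X}$. This follows from the convexity of $u\mapsto u^{1-\alpha}$ (or its concavity, depending on the sign of $\alpha(\alpha-1)$), applied termwise to Equation~\ref{eq:div}.

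\textbf{Step 2.} I will show that the layer-$l$ iterates are non-increasing in $D_l$. Each sweep alternates the $\mathbf{X}$ and $\mathbf{A}$ multiplicative updates, and each of these is non-increasing by Equation~\ref{eq:Gfunc} and its $\mathbf{A}$-counterpart. Hence the final value satisfies $D_l\le D_l^{(0)}$, the divergence at the initialization of layer $l$.

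\textbf{Step 3.} It then suffices to prove $D_l^{(0)}\le D_{l-1}$, and this is where the bounding factor enters. I will argue that the bounded initialization keeps $(\mathbf{A}^{(l)},\mathbf{X}^{(l)})$ within a neighbourhood of the feasible point $(\mathbf{I},\mathbf{X}^{(l-1)})$, with the neighbourhood's radius controlled by $BF$. The map $(\mathbf{A},\mathbf{X})\mapsto D_\alpha(\mathbf{X}^{(l-1)}\parallel\mathbf{A}\mathbf{X})$ is continuous on the positive orthant. Since it vanishes at that feasible point, there is a radius $\delta>0$ such that every initialization within $\delta$ has divergence at most $D_{l-1}$. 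Choosing $BF$ so that the initialization lies within this radius then gives $D_l^{(0)}\le D_{l-1}$, and chaining with Step 2 yields $D_l\le D_l^{(0)}\le D_{l-1}$.

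The degenerate case $D_{l-1}=0$ needs separate treatment: there I will initialize exactly at $(\mathbf{I},\mathbf{X}^{(l-1)})$. This point is a fixed point of the multiplicative updates, because every ratio $y_{it}/[\mathbf{A}\mathbf{X}]_{it}$ equals $1$.

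I expect Step 3 to be the main obstacle. With a purely random initialization the statement can fail, so the argument must genuinely use the bounding factor, and it has to quantify how small $BF$ must be in terms of $D_{l-1}$ and of $\min_{jt}x^{(l-1)}_{jt}$, which keeps the divergence finite and continuous. My first attempt will be an explicit bound. I will write $\mathbf{A}=\mathbf{I}+\mathbf{\Delta}_A$ and $\mathbf{X}=\mathbf{X}^{(l-1)}+\mathbf{\Delta}_X$ and use the second-order behaviour of $D_\alpha$ near equality, $D_\alpha\approx\tfrac12\sum_{it} (y_{it}-[\mathbf{A}\mathbf{X}]_{it})^2/y_{it}$. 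This gives $D_l^{(0)}\lesssim C\,BF^2$, from which an admissible $BF$ can be read off.
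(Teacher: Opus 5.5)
Your Step~3 is where the argument fails, and a sharper estimate will not fix it: the bounding factor in Algorithm~\ref{alg:chemnmf} does not pull the initialization toward $(\mathbf{I},\mathbf{X}^{(l-1)})$. It sets $\mathbf{A}^{(l)}=(1-bf)\,\mathbf{A}_{rand}+bf\,\mathrm{mean}(\mathbf{A}^{(l-1)})\,\mathds{1}_{R_{l-1}\times R_l}$. This interpolates between a random matrix ($bf=0$, i.e.\ plain $\alpha$-NMF) and a constant rank-one matrix ($bf=1$), and $\mathbf{X}^{(l)}$ is drawn at random regardless of $bf$. So no value of $BF$ places the start near $(\mathbf{I},\mathbf{X}^{(l-1)})$. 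Taking $BF$ small, as your $C\,BF^2$ bound presupposes, just returns to a fully random $\mathbf{A}^{(l)}$. Because $\mathbf{X}^{(l)}$ ignores $\mathbf{X}^{(l-1)}$, including its scale (and $D_\alpha$ is homogeneous of degree one), $D_l^{(0)}$ is unrelated to $D_{l-1}$ and can exceed it by any amount. The only information Step~2 gives, $D_l\le D_l^{(0)}$, therefore cannot yield the lemma. Making Step~3 true would require replacing the initialization by a warm start near $(\mathbf{I},\mathbf{X}^{(l-1)})$, or exactly at it when $D_{l-1}=0$, where it is a fixed point and the layer does nothing. That proves a conditional statement about a different algorithm, with a threshold depending on $D_{l-1}$ and $\min_{jt}x^{(l-1)}_{jt}$. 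Square factors are also your own assumption: Chem-NMF allows arbitrary ranks $R_l$, and if $R_l<R_{l-1}$ the claim can fail even for exact minimizers. Take $D_{l-1}=0$ with $\mathbf{X}^{(l-1)}$ of rank $R_{l-1}$; then every $\mathbf{A}\mathbf{X}$ of inner dimension $R_l$ differs from $\mathbf{X}^{(l-1)}$, so $D_l>0$.

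For comparison, the paper does not use the initialization at all. It posits that each layer returns an exact minimizer of $D_\alpha(\mathbf{X}^{(l-1)}\parallel\mathbf{A}\mathbf{X})$. It then compares this with the pair $(\mathbf{A}^{(l-1)},\mathbf{X}^{(l-1)})$, which, like your identity, silently needs equal consecutive ranks, and appeals to Equation~\ref{eq:Gfunc}. Under that premise your opening observation already finishes the proof: $D_l\le D_\alpha(\mathbf{X}^{(l-1)}\parallel\mathbf{I}\,\mathbf{X}^{(l-1)})=0\le D_{l-1}$, with no need for Steps~2--3. This is cleaner than the paper's chain, whose second inequality compares divergences with different first arguments; the within-layer monotonicity of Equation~\ref{eq:Gfunc} does not provide that. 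Neither route gives the inequality for the actual multiplicative-update iterates from random or $BF$-bounded starts. What is missing is an extra premise (exact minimization or a warm start), not a finer bound. A smaller point on Step~2: the $\mathbf{A}$-counterpart of Equation~\ref{eq:Gfunc} applies only if $\widehat{\mathbf{Y}}^{(l-1)}$ is recomputed after the $\mathbf{X}$-update, whereas Algorithm~\ref{alg:chemnmf} reuses the stale value.
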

\textit{Proof} Appendix~\ref{sec:lm442}.

\begin{theorem} 
Let $P_l$ represent the probability of escaping from a local minimum
at layer $l$. Then, for all sufficiently large $l$ we have: 
\begin{equation}
    P_l \;\ge\; P_{l-1}.
\end{equation}
\end{theorem}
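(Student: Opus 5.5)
The plan is to reduce the statement to a comparison of energy barriers across layers, using the Boltzmann Probability definition together with the preceding Lemma ($D_l \le D_{l-1}$). By definition, $P_l = Z_l^{-1} e^{-\beta \xi_l}$. So it suffices to show two things for all sufficiently large $l$: that $\xi_l \le \xi_{l-1}$, and that the normalization constants are either equal across layers or satisfy $Z_l \le Z_{l-1}$. I will treat $\beta$ as fixed across layers, as the definition suggests.

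The first step is to bound the energy barrier at layer $l$ by the cost the path actually encounters. The optimization path $\gamma_l$ at layer $l$ starts from the (bounded) initialization, and by the non-increasing Theorem its cost along the multiplicative updates never exceeds the initial cost. Hence
\begin{equation*}
\xi_l \le D_\alpha\bigl(\mathbf{X}^{(l-1)} \parallel \mathbf{A}^{(l)}_0\mathbf{X}^{(l)}_0\bigr) - D_l^{*},
\end{equation*}
where $D_l^*\ge 0$ is the global minimum at layer $l$. The bounding factor $BF$ restricts the initialization so that the initial divergence is controlled by the divergence of the input layer. Since layer $l$'s input $\mathbf{X}^{(l-1)}$ is the output of layer $l-1$, the starting cost at layer $l$ can then be compared to the terminal cost at layer $l-1$, which is $D_{l-1}$.

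The second step is to combine this with the Lemma. The sequence $D_l$ is non-increasing and bounded below by $0$, so it converges to some $D_\infty$. Therefore, for $l$ large, the maximal cost along the path at layer $l$ is at most something like $D_{l-1} + \epsilon_l$, with $\epsilon_l \to 0$ under the bounded initialization. The minima $D_l^*$ also stabilize, so $\xi_l - \xi_{l-1}$ is eventually nonpositive up to vanishing terms. This is exactly where the qualifier ``sufficiently large $l$'' enters. Monotonicity of $e^{-\beta\xi}$ in $\xi$ then gives $P_l \ge P_{l-1}$.

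The main obstacle is the first step: rigorously linking the peak cost along the path at layer $l$ to the converged cost $D_{l-1}$ of the previous layer. The energy barrier is defined as a maximum over the path, not at its endpoint, so the estimate needs the bounding-factor initialization to keep the initial divergence within a controlled multiple of $D_{l-1}$. Treating $Z_l$ as layer-independent is also a modelling assumption rather than something derived. I would first handle the barrier by assuming the initialization satisfies $D_\alpha(\text{init}_l) \le D_{l-1}$ and that the global minima $D_l^*$ are eventually constant, and then relax these assumptions to asymptotic statements.
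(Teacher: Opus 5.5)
Your second step has a genuine gap. What you derive there is $\xi_l \le \xi_{l-1} + \eta_l$ with $\eta_l \to 0$, coming from $\epsilon_l \to 0$ and the stabilizing minima. That does not give $\xi_l \le \xi_{l-1}$ for all large $l$. Every constraint you derive (convergence of $D_l$, $D_l \le I_l \le D_{l-1}+\epsilon_l$ for the initial cost $I_l$, stabilizing $D^*_l$) is compatible with an oscillating barrier such as $\xi_l = c + (-1)^l/l$, and then $P_l$ oscillates forever. Since $P_l \ge P_{l-1}$ is an exact inequality, ``nonpositive up to vanishing terms'' is not enough. Your plan to relax the hypotheses to asymptotic statements fails for the same reason.

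The preceding lemma only gives $D_{l-1}-D_l \to 0$. The sign of $\xi_l-\xi_{l-1}$ depends on a \emph{comparison of rates}, which convergence cannot supply. Your first step also only upper-bounds $\xi_l$, whereas the comparison needs a lower bound on $\xi_{l-1}$ as well. Finally, the bounding factor does not yield $D_\alpha(\mathbf{X}^{(l-1)}\parallel \mathbf{A}^{(l)}_0\mathbf{X}^{(l)}_0) \le D_{l-1}$. In Chem-NMF only $\mathbf{A}^{(\ell)}$ is pulled toward $\mathbf{A}_{base}$, while $\mathbf{X}^{(\ell)}$ is initialized at random. Moreover, that quantity and $D_{l-1} = D_\alpha(\mathbf{X}^{(l-2)}\parallel \mathbf{A}^{(l-1)}\mathbf{X}^{(l-1)})$ are divergences against different data matrices of different sizes, so nothing links them.

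The paper does not try to derive the monotonicity from the algorithm. It sets $\xi_l := M_l - D_{l-1}$, using the previous layer's divergence in place of the global minimum, and keeps $Z$ and $\beta$ fixed across layers. It then postulates exactly the missing rate comparison: $M_l$ is eventually non-increasing, and $0 \le D_{l-2}-D_{l-1} \le M_{l-1}-M_l$ for $l \ge L_\Delta$. The identity $\xi_l-\xi_{l-1} = -(M_{l-1}-M_l) + (D_{l-2}-D_{l-1})$ gives $\xi_l \le \xi_{l-1}$ exactly, and the monotonicity of $x\mapsto e^{-\beta x}$ finishes the argument.

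Your route can also be closed, but only with \emph{exact} hypotheses. Write $I_l$ for the initial cost at layer $l$, which equals the path maximum by the non-increasing theorem. Assume $I_l \le D_{l-1}$ for all large $l$, and note that $D_{l-1} \le I_{l-1}$ by the same theorem applied at layer $l-1$. Then $\xi_l-\xi_{l-1} = (I_l-I_{l-1}) + (D^*_{l-1}-D^*_l) \le D^*_{l-1}-D^*_l$. This is $\le 0$ once the $D^*_l$ are eventually constant or non-decreasing, with $Z$ taken layer-independent as in the paper. That is a legitimate conditional proof, but its hypotheses are postulates of the same kind as the paper's assumption (3), not consequences of the bounding factor.
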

\textit{Proof} Appendix~\ref{sec:thm443}.
\begin{lemma} 
The escape probability $P_l$ converges to a finite value:
\begin{equation}
\lim_{l\to\infty} P_l \;=\; P_\infty.
\end{equation}
\end{lemma}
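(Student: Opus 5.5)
The plan is the monotone convergence theorem applied to the sequence $(P_l)$. We need two facts: the sequence is eventually monotone, and it is bounded.

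\emph{Monotonicity.} The preceding theorem states that $P_l \ge P_{l-1}$ for all sufficiently large $l$, say for all $l \ge l_0$. So the tail $(P_l)_{l\ge l_0}$ is non-decreasing. The finitely many terms with $l<l_0$ do not affect the limit and can be ignored.

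\emph{Boundedness.} Write $P_l = Z^{-1}e^{-\beta\xi_l}$, where $\xi_l$ is the energy barrier at layer $l$. By the definition of the energy barrier, $\xi_l$ is a maximum of the cost along a path minus the global minimum of that same cost, so $\xi_l \ge 0$. With $\beta>0$ this gives $e^{-\beta\xi_l}\le 1$, and hence
\begin{equation*}
0 < P_l \le \frac{1}{Z} .
\end{equation*}
If $Z$ is taken as a fixed normalization, this uniform upper bound is exactly what we need. If $Z$ is read as layer-dependent, $Z_l$, we would add the standing assumption that $Z_l$ is bounded below by some $Z_{\min}>0$, which again gives $P_l \le 1/Z_{\min}$. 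If $Z$ really normalizes a probability, then $P_l\le 1$ directly.

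\emph{Conclusion.} A real sequence that is eventually non-decreasing and bounded above converges to its supremum. Setting
\begin{equation*}
P_\infty = \sup_{l\ge l_0} P_l \le \frac{1}{Z}
\end{equation*}
gives a finite limit.

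For interpretation, the earlier lemma $D_l\le D_{l-1}$ shows that the layer divergences are non-increasing and bounded below by $0$ (they are $\alpha$-divergences). Hence $D_l\to D_\infty\ge 0$. One can argue that the barriers $\xi_l$, which are controlled by the $D_l$, also stabilize, so that $P_\infty = Z^{-1}e^{-\beta\xi_\infty}$. This identification is not needed for existence.

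The main obstacle is justifying that the bound is uniform in $l$: that $\xi_l\ge 0$ for every layer and that the normalization $Z$ does not degenerate as $l$ grows. Everything else is the standard monotone convergence argument.
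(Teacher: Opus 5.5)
Your argument is correct but takes a different route from the paper. The paper never uses the eventual monotonicity $P_l\ge P_{l-1}$ here. It applies monotone convergence to the underlying sequences instead. It notes that $M_l\ge 0$ is eventually non-increasing (assumption~1 in the proof of the monotonicity theorem) and that $D_l\ge 0$ is non-increasing (the lemma $D_l\le D_{l-1}$). So $M_l\to M_\infty$ and $D_l\to D_\infty$, hence $\xi_l=M_l-D_{l-1}\to M_\infty-D_\infty$, and continuity of $x\mapsto Z^{-1}e^{-\beta x}$ gives $P_\infty=Z^{-1}e^{-\beta(M_\infty-D_\infty)}$.

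You instead apply monotone convergence to $(P_l)$ itself. The paper's route gives an explicit limit and needs no a priori bound on $P_l$. It also avoids the rate-comparison assumption~3 ($0\le D_{l-2}-D_{l-1}\le M_{l-1}-M_l$) on which the monotonicity of $P_l$ rests. Your route uses less about the structure of $\xi_l$, but it inherits all of those hypotheses and identifies $P_\infty$ only as a supremum. Your ``interpretation'' remark is the paper's proof, except that stabilization of $\xi_l$ also needs convergence of $M_l$, not only of $D_l$.

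One small repair is needed. The $\xi_l$ whose monotonicity you import is $M_l-D_{l-1}$, not ``path maximum minus global minimum''. Since $M_l$ and $D_{l-1}$ are divergences at different layers, $\xi_l\ge 0$ does not follow. However, $M_l\ge 0$ and $D_{l-1}\le D_1$ give $\xi_l\ge -D_1$, so $P_l\le Z^{-1}e^{\beta D_1}$ uniformly for $l\ge 2$, and your argument goes through.
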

\textit{proof} Appendix~\ref{sec:lm443}.

\begin{theorem} 
Across multiple attempts, the multi-layer NMF algorithm has a smaller probability of remaining trapped in a local minimum compared to the single-layer NMF algorithm.
\end{theorem}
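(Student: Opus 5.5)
To prove the statement, we model the repeated attempts as independent Bernoulli trials and compare the trapping probabilities layer by layer, using the monotonicity results already established. In a single attempt, a run that reaches a local minimum escapes with probability $P$ and remains trapped with probability $1-P$. Over $N$ independent attempts (independent random initializations, each bounded by the factor $BF$), the probability of remaining trapped in every attempt is $(1-P)^N$. For single-layer NMF this is $(1-P_1)^N$. For the multi-layer algorithm with $L$ layers we take the escape probability of the whole cascade to be at least the escape probability at its deepest layer, $P_L$. This is justified because the final factor $\mathbf{X}^{(L)}$ is obtained after all layers, so escaping at layer $L$ is sufficient for the cascade to leave the local minimum. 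The trapping probability is then at most $(1-P_L)^N$.

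The core step is to show $P_L \ge P_1$. By the preceding lemma, $D_l \le D_{l-1}$ for all $l>1$. Using the Definition of Energy Barrier, the barrier at layer $l$ satisfies $\xi_l \le \max_\gamma D_l - D_l^*$. We bound this by a quantity that is nonincreasing in $l$: since each layer's cost is dominated by the previous one along the path, we use $\xi_l \le \xi_{l-1}$. By the Boltzmann Probability definition, $P_l = Z^{-1} e^{-\beta \xi_l}$ is then nondecreasing in $l$, which is the content of the preceding theorem for sufficiently large $l$.

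To bridge the initial layers where that theorem may not apply, we argue as follows. The sequence $P_l$ converges to $P_\infty$ by the preceding lemma, and it is eventually monotone nondecreasing, so $P_\infty \ge P_l$ for all large $l$. With the bounding factor controlling the initialization of each layer so that the first-layer barrier is not exceeded, we obtain $P_L \ge P_1$ for $L$ large enough. Taking this $L$, we get
\begin{equation*}
(1-P_L)^N \le (1-P_1)^N, \qquad \text{and hence} \qquad 1-(1-P_L)^N \ge 1-(1-P_1)^N .
\end{equation*}
The inequality is strict whenever $P_L > P_1$, and the gap widens with $N$ because the ratio $\bigl((1-P_L)/(1-P_1)\bigr)^N \to 0$.

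The main obstacle is the comparison at small $l$. The preceding theorem gives monotonicity only for sufficiently large $l$, and converting $D_l \le D_{l-1}$ into $\xi_l \le \xi_{l-1}$ requires a uniform bound along the optimization path, not just at the endpoints. I would first try to handle this using the bounding factor: argue that bounded initialization caps $\max_\gamma D_l$ by the starting cost of layer $l$. That starting cost is in turn bounded by the converged cost of layer $l-1$, since $\mathbf{X}^{(l-1)}$ is the input, and this would give the barrier monotonicity directly. If that fails, the fallback is to state the result for $L$ beyond the threshold supplied by the preceding theorem, together with the limit $P_\infty$.
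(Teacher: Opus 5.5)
The gap is the step $P_L \ge P_1$, on which your whole comparison rests, and nothing proved before the theorem delivers it. Lemma~4.4.2 orders only the converged costs $D_l$. That does not give $\xi_l \le \xi_{l-1}$, because the barrier is a difference ($\xi_l = M_l - D_{l-1}$ in the appendix, or $\max_\gamma D_l - D_l^*$ in your notation). Lowering the subtracted cost \emph{raises} the barrier unless the path maximum drops by at least as much. That is exactly why Theorem~4.4.3 needs the extra hypothesis $0 \le D_{l-2}-D_{l-1} \le M_{l-1}-M_l$, and why it only yields $P_l \ge P_{l-1}$ for $l \ge L^*$.

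Eventual monotonicity plus convergence (Lemma~4.4.3) then compares $P_L$ with $P_{L^*}$, never with $P_1$. Nothing rules out $P_1 > P_\infty$, so your fallback does not close the gap. The bounding-factor bridge has no formal content either. Since the multiplicative updates are monotone (Theorem~4.4.2), the path maximum at layer $l$ is indeed just the starting cost of that layer. But that starting cost is $D_\alpha(\mathbf{X}^{(l-1)} \parallel \mathbf{A}^{(l)}\mathbf{X}^{(l)})$ evaluated at a random initialization: $BF$ only blends $\mathbf{A}^{(l)}$ toward a constant matrix and leaves $\mathbf{X}^{(l)}$ fully random. So the starting cost is not controlled by the converged cost of layer $l-1$, which is a divergence against a different target, $\mathbf{X}^{(l-2)}$.

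The paper does not derive this comparison either; its proof simply asserts ``Since $P_\infty > P_1$'' and uses it as a hypothesis. If you make that hypothesis explicit, your argument closes at once, and more simply than you planned. Since $P_L \to P_\infty > P_1$, convergence alone gives $P_L > P_1$ for all large $L$, with no need for barrier monotonicity or the bounding factor.

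Beyond that, your route models ``attempts'' differently. The paper identifies the $l$-th attempt with the run at layer $l$ and writes the multi-layer survival probability as $\prod_{l=1}^{n}(1-P_l)$ (Lemma~4.4.4). It splits the product at $l_\varepsilon$ to bound it by $C_\varepsilon\bigl(1-(P_\infty-\varepsilon)\bigr)^{n-l_\varepsilon+1}$, and so beats $(1-P_1)^n$ only for sufficiently large $n$, at an equal number of layer runs. You instead take $N$ independent restarts of a fixed $L$-layer cascade and lower-bound the cascade's escape probability by $P_L$. This is consistent with the paper's model, since $\prod_{l\le L}(1-P_l) \le 1-P_L$, and it gives the strict inequality for every $N$. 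The cost is an added independence-across-restarts assumption and a comparison of $NL$ layer runs against $N$.
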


\begin{proof}
Let $L_e$ denote the number of attempts at which the algorithm escapes a local minimum. 
For each layer $l \in \mathbb{N}$, define the survival event as:
\begin{equation}
S_l = \{L_e > l\}, 
\end{equation}
which means the process has not yet escaped any local minimum by layer $l$.
\begin{remark}
We interpret each attempt as one run of the algorithm at a given layer. 
Thus, the $l$-th attempt corresponds to applying the algorithm at layer $l$. 
In the multi-layer setting, the algorithm proceeds through successive layers, 
while in the single-layer setting, all attempts are confined to the same layer. 
The total number of attempts is denoted by $n$, meaning the algorithm has been 
applied up to layer $n$.
\end{remark}

\begin{lemma}  
For all $n \geq 1$, the survival probability $\mathbb{P}(S_n)$ satisfies:
\begin{equation}
   \mathbb{P}(S_n) = \prod_{l=1}^{n} (1-P_l). 
\end{equation}
\end{lemma}
\textit{proof} Appendix~\ref{sec:lm444}. \\
By Lemma 4.4.3, $\lim\limits_{l\to\infty} P_l \;=\; P_\infty.$ 
The formal definition of the limit implies:
\begin{equation}
\forall \, \varepsilon > 0, \; \exists \, l_\varepsilon \in \mathbb{N} 
\;\; \text{such that} \;\; \forall\, l \ge l_\varepsilon 
\;\Longrightarrow\; P_\infty - P_l \le \varepsilon.
\end{equation}
Recall Lemma~4.4.4 and split the product at $l_\varepsilon$. Then, for any $n \ge l_\varepsilon$ we have:
\begin{align}
\mathbb{P}(S_n)
&= \prod_{l=1}^{n}(1-P_l) \notag = \underbrace{\Big(\prod_{l<l_\varepsilon}(1-P_l)\Big)}_{C_\varepsilon}\,
   \prod_{l=l_\varepsilon}^{n}(1-P_l) \notag \\
&\le C_\varepsilon \prod_{l=l_\varepsilon}^{n}\bigl(1-(P_\infty-\varepsilon)\bigr)
\qquad\text{(since $P_l \ge P_\infty - \varepsilon$)} \notag \\
&= C_\varepsilon\,\bigl(1-(P_\infty-\varepsilon)\bigr)^{\,n-l_\varepsilon+1}.
\label{eq:bound}
\end{align}
Hence,
\begin{equation}
\mathbb{P}(S_n) \;\le\; C_\varepsilon \,\bigl(1-(P_\infty-\varepsilon)\bigr)^{\,n-l_\varepsilon+1}.
\end{equation}
Let $\widehat{S}_n$ denote the survival event in the single-layer case. Then:
\begin{equation}
\mathbb{P}(\widehat{S}_n) = (1-P_1)^n.
\end{equation}
Since $P_\infty > P_1$, for any $\varepsilon \in (0,\, P_\infty - P_1)$ we have:
\begin{equation}
    1-(P_\infty-\varepsilon) < 1-P_1.
\end{equation}
Thus, for sufficiently large $n$ we obtain:
\begin{equation}
\mathbb{P}(S_n) \;\le\; 
C_\varepsilon \,\bigl(1-(P_\infty-\varepsilon)\bigr)^{\,n-l_\varepsilon+1}
\;<\; (1-P_1)^n \;=\; \mathbb{P}(\widehat{S}_n).
\end{equation}

This implies that across multiple attempts, the multi-layer NMF algorithm has a smaller probability of remaining trapped in a local minimum compared to the single-layer NMF algorithm. 
\qedhere \text{ QED.}
\end{proof}

\section{Experimental Results}

For image recognition, we employ the ORL face \cite{orl} and the MNIST digit \cite{mnist} datasets. The ORL dataset consists of 400 grayscale facial images, all resized to $32 \times 32$ pixels. The MNIST dataset contains 70{,}000 grayscale images of handwritten digits. For our experiments, we construct a balanced subset of 400 samples, each normalized to $28 \times 28$ pixels. In addition, we cluster time-frequency spectrograms of heart and lung sounds using the HLS-CMDS dataset. The lung subset consists of 50 recordings, divided into 6 classes. The heart subset contains 50 recordings of cardiac sounds, categorized into 10 classes. Figure~\ref{fig:bf-alpha} illustrates the sensitivity of pattern recognition with respect to the boundary factor ($BF$) and the divergence parameter $\alpha$. At $BF=0$, Chem-NMF reduces to the baseline $\alpha$-NMF. Adding $BF$ improves performance, particularly at intermediate values of $\alpha$. 

\begin{figure}[H]
    \centering
    \includegraphics[width=0.5\textwidth]{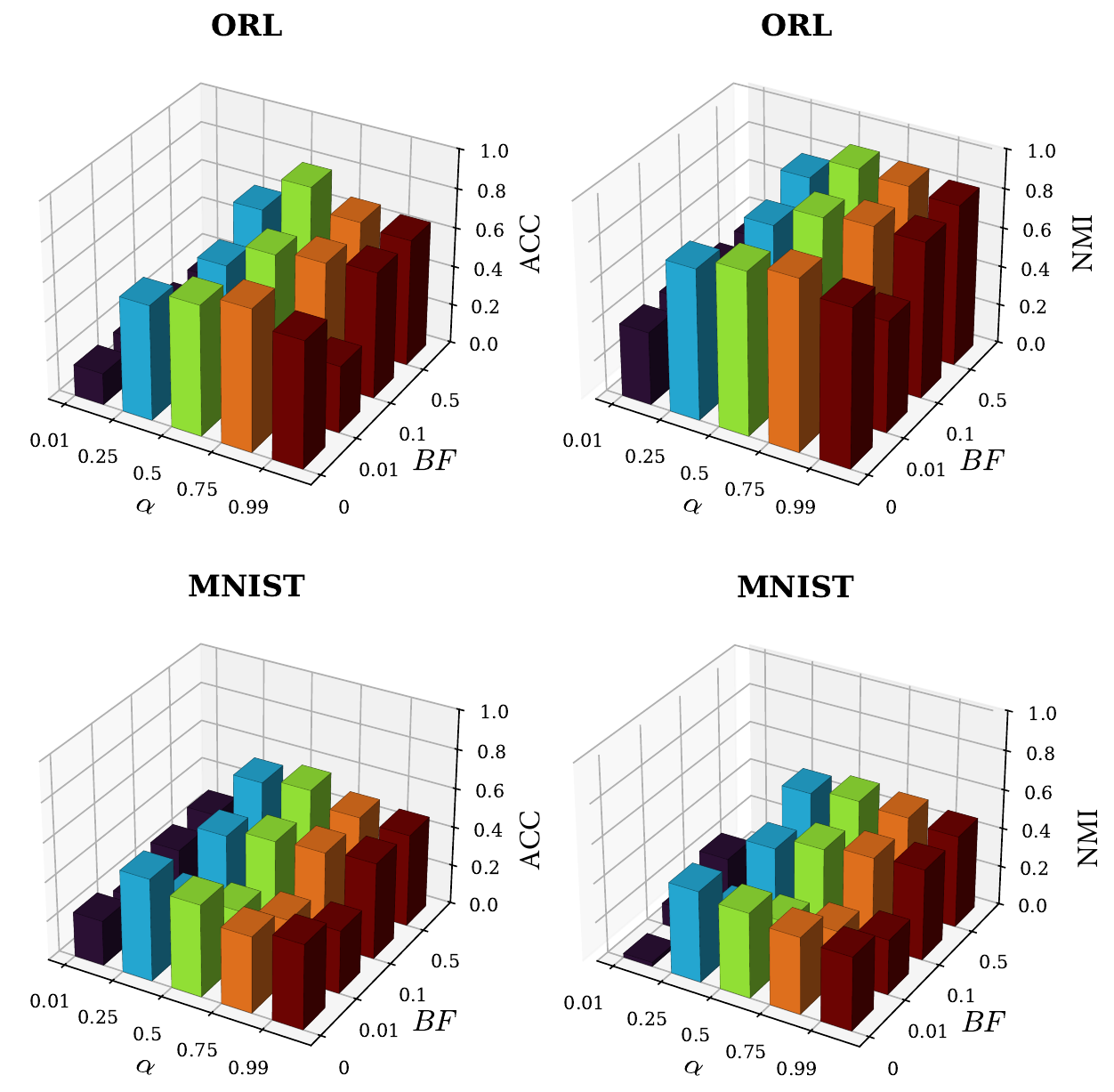}
    \captionsetup{justification=justified, font=footnotesize}
\caption{
Effect of the bounding factor ($BF$) and divergence parameter $\alpha$ on clustering performance for the ORL and MNIST datasets. 
The four subplots show how accuracy (ACC) and normalized mutual information (NMI) change as $BF$ and $\alpha$ vary. 
The top row corresponds to the ORL dataset (ACC left, NMI right), and the bottom row corresponds to the MNIST dataset. 
}

    \label{fig:bf-alpha}
\end{figure}

In addition, we perform unsupervised clustering on lung and heart sound datasets. We transform the recordings into time–frequency spectrograms, factorize the data into a low-rank representation, and cluster data using K-means, Gaussian mixture models (GMM), agglomerative clustering, and spectral clustering. We compare clustering performance without and with NMF feature extraction (Figure~\ref{fig:ablation}). 

\begin{figure}[H]
    \centering
    \includegraphics[width=0.8\textwidth]{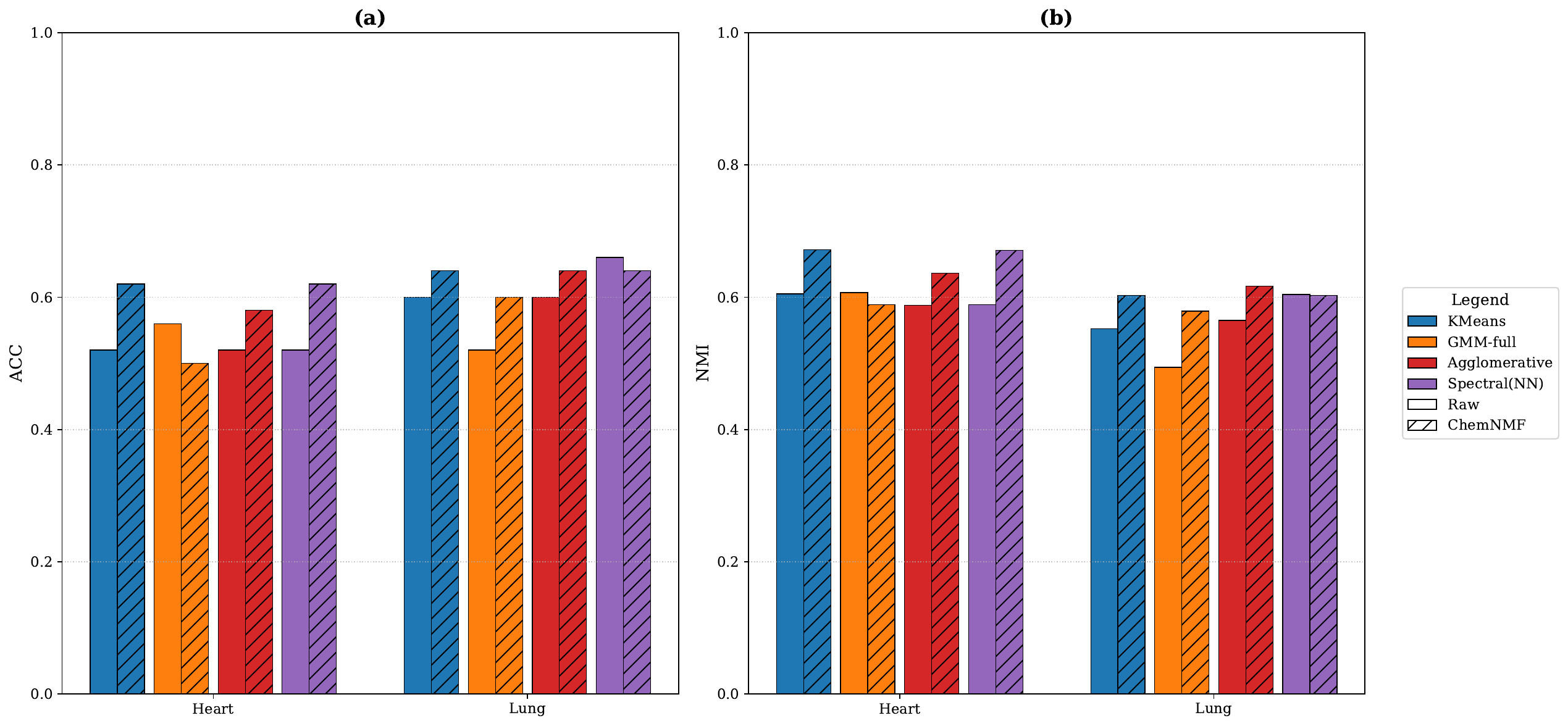}
    \captionsetup{justification=justified, font=footnotesize}
\caption{
Ablation study on the effect of feature extraction on clustering performance of cardiovascular sounds based on: \textbf{(a)} ACC and \textbf{(b)} NMI measures. 
The plots compare clustering performance \emph{with} and \emph{without} feature extraction across four clustering algorithms.}

    \label{fig:ablation}
\end{figure}

\section{Conclusion}
In this chapter, we introduced Chem-NMF, a multi-layer $\alpha$-divergence NMF framework inspired by energy barriers in chemical reactions. By incorporating a bounding factor analogous to a chemical catalyst, the method stabilizes convergence, reduces overshoot, and improves clustering performance compared to Regular NMF and $\alpha$-NMF. Theoretical analysis confirmed a lower probability of staying in local minima, while experiments on image and biomedical datasets demonstrated reasonable clustering accuracy. 
        \setcounter{figure}{0}
        \setcounter{equation}{0}
        \setcounter{table}{0} 
  \chapter{Quantum Convolutional Neural Network for Detecting Abnormal Patterns}

\newpage

This work introduces a hybrid quantum–classical convolutional neural network (QCNN) designed to classify the third heart sound (S3) and murmur abnormalities in heart sound signals. The approach transforms one-dimensional phonocardiogram (PCG) signals into compact two-dimensional images through a combination of wavelet feature extraction and adaptive threshold compression methods. We compress the cardiac-sound patterns into an 8-pixel image so that only 8 qubits are needed for the quantum stage. Preliminary results on the HLS-CMDS dataset demonstrate 93.33$\%$ classification accuracy on the test set, and 97.14$\%$ on the train set, suggesting that quantum models can efficiently capture temporal–spectral correlations in biomedical signals. 

\section{Introduction}

Heart sounds recorded as phonocardiograms (PCGs) contain temporal and spectral patterns linked to valve activity and cardiac rhythm. Classical signal-processing techniques, such as Fourier and wavelet transforms, have been used to extract features from PCGs \cite{Kannan2025}. Deep learning methods, such as convolutional neural networks (CNNs)  \cite{Patwa2025} and recurrent neural networks (RNNs) \cite{Hsieh2025} can detect murmurs with high precision. Transfer learning and attention mechanisms have further improved recognition rates on small datasets \cite{Alotaibi2025}. However, such networks still require strong computational hardware. These requirements restrict their real-time use on portable diagnostic devices \cite{Gholizade2025}. Quantum computing provides a new way to process data through qubits that can represent several states simultaneously. Quantum machine learning (QML) combines this property with classical optimization to improve computational efficiency. In hybrid networks, quantum circuits extract correlations in feature space while classical layers update the parameters. Variational quantum circuits (VQCs) \cite{Chen2024} and hybrid quantum neural networks have been applied to pattern classification, clustering, and feature selection \cite{Wang2024}. Quantum convolutional neural networks (QCNNs) extend this idea by using layers of quantum convolution and pooling gates that mimic hierarchical feature extraction \cite{Long2025}. These models have been used for biomedical classifications, such as breast cancer diagnosis \cite{Xiang2024}. Despite this progress, no earlier study has applied QCNNs to heart sounds. PCG signals differ from images and ECG traces because they contain short acoustic pulses with complex frequency variation. A quantum model must therefore compress and encode the information efficiently before processing. This study aims to demonstrate that a compact hybrid model can detect cardiac abnormalities with limited computational cost. Figure~\ref{fig:qcnn} illustrates our proposed method. The main contributions are:

1. Transforming sounds into images, which enables image processing methods for pattern recognition.

2. Introducing a compression pipeline using Wavelet transform, downsampling, and binarization that compresses images into eight pixels suitable for 8-qubit quantum encoding.

3. Designing a quantum convolutional neural network that learns hierarchical features on Qiskit simulators.

\begin{figure}[H]
    \centering
    \includegraphics[width=0.8\textwidth]{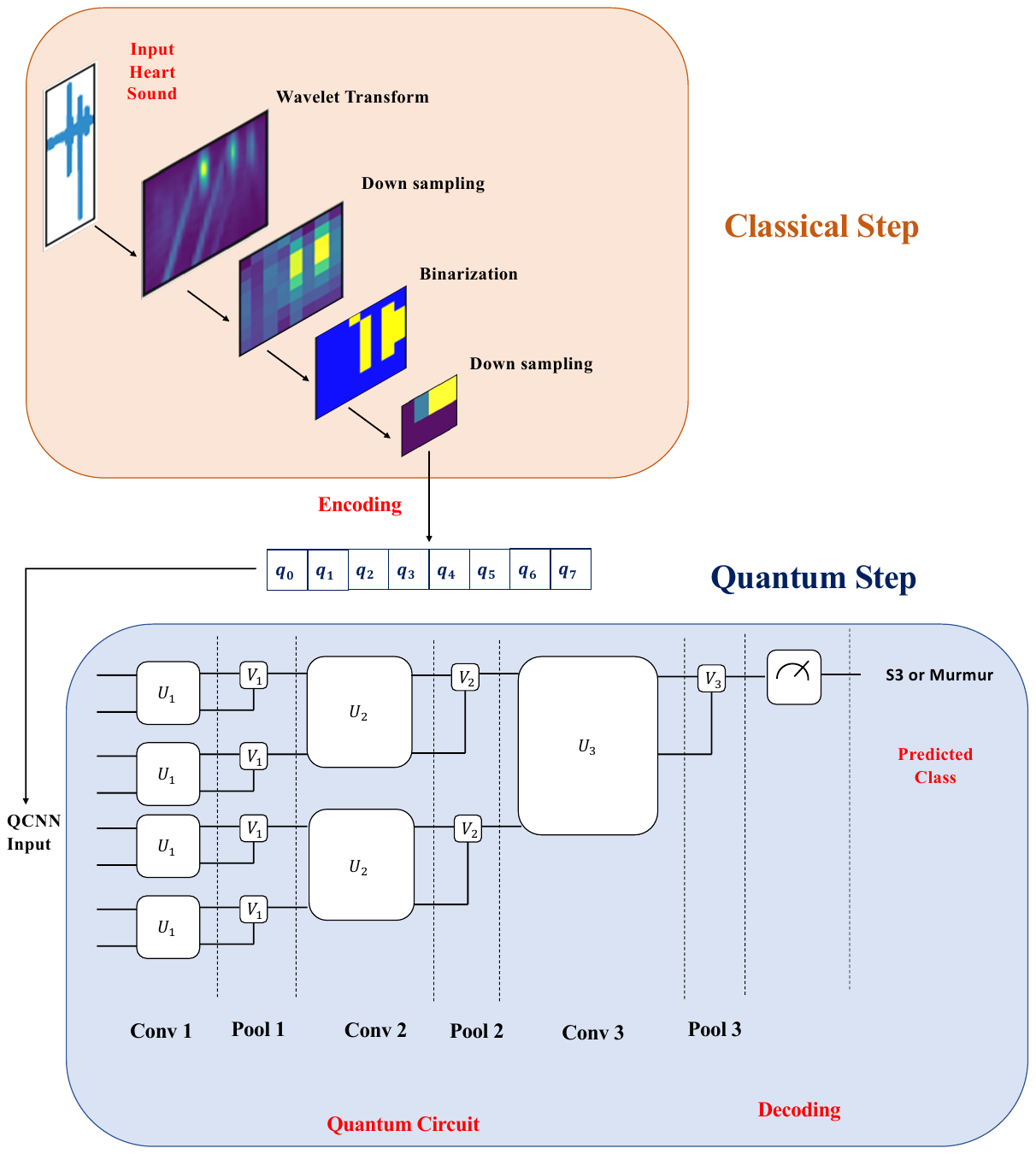}
    \captionsetup{justification=justified, font=footnotesize}
    \caption{Overview of the proposed hybrid quantum–classical model for heart sound classification (QuPCG). The classical step applies wavelet transform, downsampling, and binarization to generate compact quantum-ready 8-qubit feature maps. The quantum step encodes these features into qubits and processes them through three stages of successive quantum convolution and pooling layers, with the final measurement used to predict the class (S3 or murmur).}
    \label{fig:qcnn}
\end{figure}

\section{Methodology}
\subsection{Theoretical Background}
Heart-sound recordings are non-stationary signals whose frequency content changes with time. The wavelet transform produces a two-dimensional representation of the heart-sound signals suitable for image-based analysis. Convolutional neural networks (CNNs) extract patterns from these maps through local filtering and pooling. Each convolution layer computes the correlation between an input map and a kernel. Quantum machine learning (QML) extends these operations to quantum space. The information is encoded into qubits that exist in superposition (Equation~\ref{eq2}). The information is then processed through parameterized unitary gates $U(\theta)$ that evolve the state (Equation~\ref{eq3}). Measurement of an observable $O$ gives an expected value (Equation~\ref{eq4}), which serves as the network output. These expectation values feed the optimizers that adjust the rotation vector ${\theta}$ to minimize loss. This mechanism allows compact circuits to model complex nonlinear relationships beyond classical feature spaces.

\begin{equation}
\ket{\psi} = \alpha\ket{0} + \beta\ket{1}, \qquad |\alpha|^2 + |\beta|^2 = 1.
\label{eq2}
\end{equation}
\begin{equation}
\ket{\psi'} = U(\theta) \, \ket{\psi}.
\label{eq3}
\end{equation}
\begin{equation}
\braket{O} = \langle \psi' | \, O \, | \psi' \rangle.
\label{eq4}
\end{equation}

\subsection{Dataset Segmentation}
We used the heart sounds dataset (HLS-CMDS) collected from clinical manikins~\cite{10981596}. In a phonocardiogram (PCG), the first and second heart sounds (S1, S2) correspond to the closure of the heart valves, marking the start and end of each cardiac cycle. The third heart sound (S3) appears as a low-frequency vibration after S2, while murmurs present as prolonged oscillations. We focused on the classification of S3 versus murmur abnormal sounds. We segmented the recordings based on cardiac cycles (Figure~\ref{fig:pcg}). Each signal was resampled to 4 kHz and analyzed to identify the heart-sound peaks.

\begin{figure}[H]
    \centering
    \includegraphics[width=0.8\textwidth,trim= 0 40 0 20,clip]{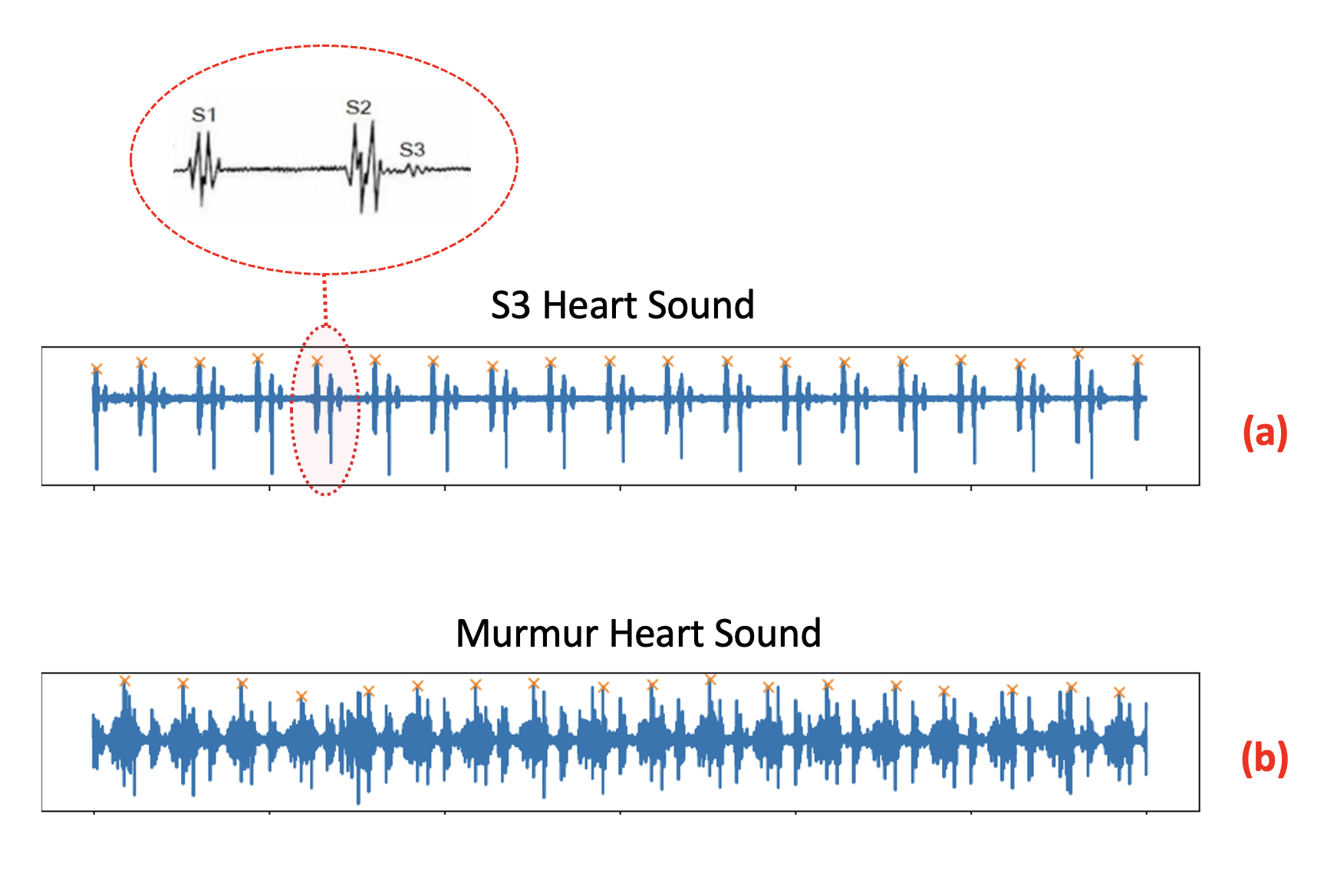}
    \captionsetup{justification=justified, font=footnotesize}
\caption{PCG signal segments: \textbf{(a)} Heart sound with additional S3; \textbf{(b)} Murmur heart sound. Both panels show pathological examples used as the two abnormal classes in the classification task.}

    \label{fig:pcg}
\end{figure}

\subsection{From Time-Series Signals to Energy Map Images}

We transformed each segment into a two-dimensional time–frequency representation using the wavelet transform with a complex morlet mother wavelet. We applied 128 scales to compute the scalogram and visualized the magnitude of the complex coefficients as an image. Unlike conventional approaches that process long time-series data, our method converts heart sounds into compact wavelet images, allowing us to apply image-based pattern recognition techniques directly. This transformation enables effective compression, which is an essential advantage when operating with a limited number of qubits. The wavelet-based representation also preserves transient events such as murmur patterns more effectively. The scalograms were resized to 32 × 32 pixels to provide a consistent input format for the next compression stage.

\subsection{Feature Compression}
As shown in Figure~\ref{fig:FC}, we compress each 32 × 32 scalogram into a compact quantum-ready format. Max-pooling with 4 × 4 kernel downsamples the image to 8 × 8. We binarize the map into high and low-energy regions. The 8 × 8 maps are reduced to eight representative values to align with the 8-qubit architecture.

\begin{figure}[H]
    \centering
    \includegraphics[width=0.8\textwidth, trim=0 120 0 120, clip]{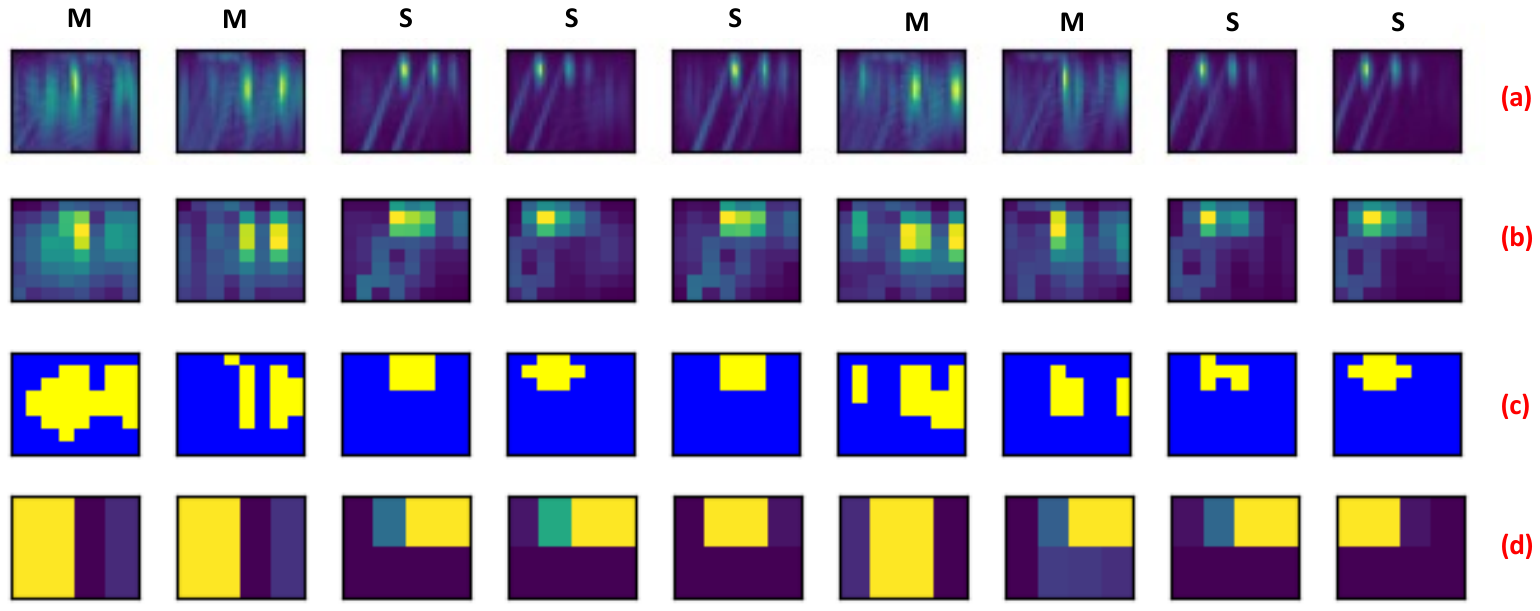}
    \captionsetup{justification=justified, font=footnotesize}
\caption{Progressive compression of wavelet scalograms for quantum encoding: \textbf{(a)} Original 32 × 32 time–frequency images of murmur and S3 sounds; \textbf{(b)} 8 × 8 max-pooled maps; \textbf{(c)} Binarized energy patterns highlighting dominant regions; \textbf{(d)} Final 8-value representations matched to the 8-qubit QCNN input format. Here, M denotes murmur segments and S denotes S3 segments, and each column corresponds to different signal segments extracted from different patients.}

    \label{fig:FC}
\end{figure}

\subsection{Quantum Encoding and QCNN Design}
We mapped each normalized pixel intensity to the rotation angle of a single-qubit gate. Figure~\figsubref{fig:QCNNs}{a} shows the feature-mapping circuit implemented, where each qubit undergoes Hadamard and phase-rotation gates to embed classical features into the quantum state space. The QCNN contained three sets of alternating convolutional and pooling layers. Figure~\figsubref{fig:QCNNs}{b} illustrates parametrized unitary gates $U(\theta)$ used in the convolution circuit. We apply a pooling layer after the convolutional layer to reduce the dimensions of the quantum circuit. Figure~\figsubref{fig:QCNNs}{c} shows the two-qubit pooling circuit $V(\theta)$. This layer merges the information of two qubits into one. The final measured qubit provides the classification feature. After each quantum execution, the expectation value of the Pauli-$Z$ observable was measured, and the loss between predicted and actual labels was computed to update the gate parameters iteratively.

\begin{figure}[H]
    \centering
    \includegraphics[width=\textwidth, trim=0 100 20 95, clip]{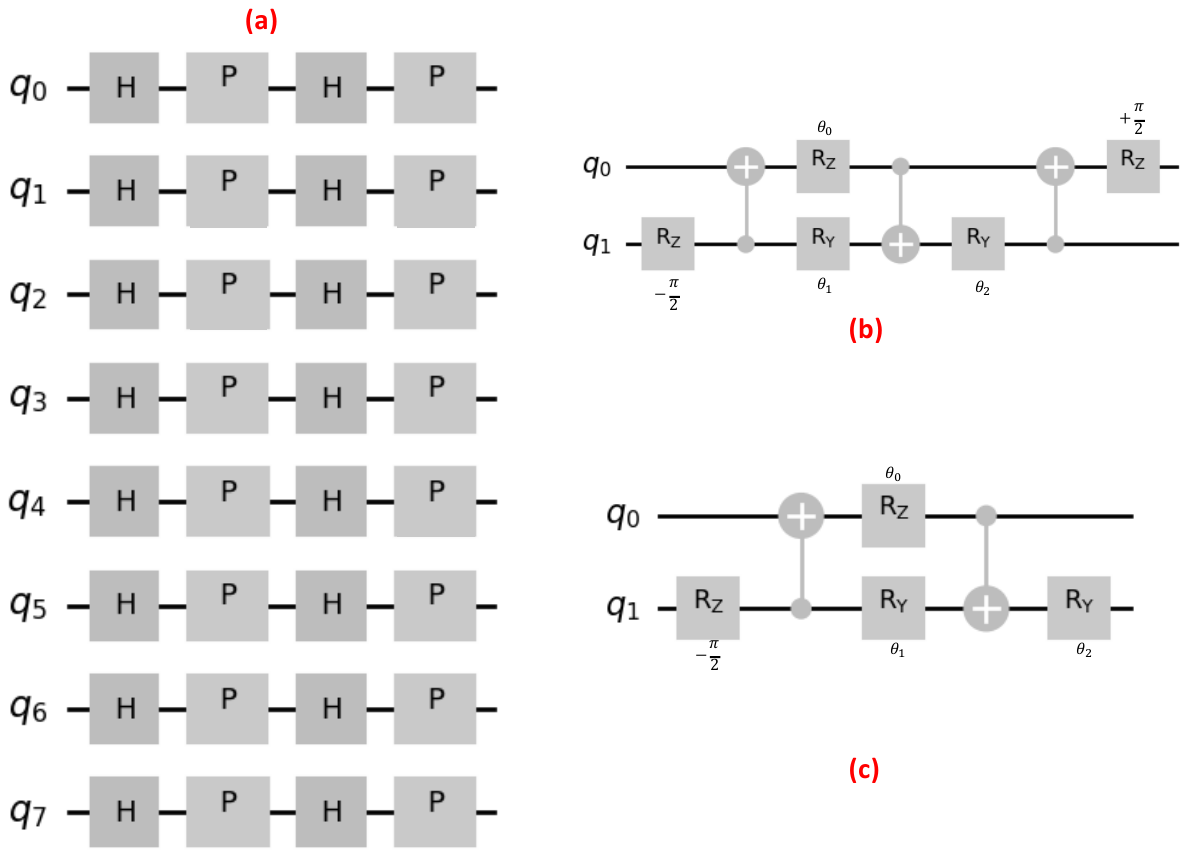}
    \captionsetup{justification=justified, font=footnotesize}
    \caption{Quantum convolutional neural network components: \textbf{(a)} Feature-mapping circuit; \textbf{(b)} Parametrized two-qubit unitary circuit used in the convolutional layer; \textbf{(c)} Parametrized two-qubit unitary circuit used in the pooling layer.}
    \label{fig:QCNNs}
\end{figure}

\section{Experimental Settings and Results}
We implemented the quantum circuit in Qiskit version 0.45. The network employed 8 qubits with a circuit depth of 3 layers. We experimented with three approaches for converting PCG time-series signals into two-dimensional representations suitable for the QCNN. Table~\ref{tab:QCNN_Performance} compares the performance of different QCNNs under various signal-to-image preprocessing strategies. In the first method (I-QuPCG), the raw time-series signals were directly converted into grayscale images; however, this approach yielded poor performance, as the temporal information was largely lost during compression. The second approach (M-QuPCG) used the Short-Time Fourier Transform (STFT) to generate Mel spectrograms, which captured more meaningful frequency patterns and slightly improved accuracy. Finally, the W-QuPCG applied a wavelet transform to obtain time–frequency maps that preserved the features of biomedical signals more effectively. In addition, W-QuPCG demonstrates the fastest convergence and achieves the highest accuracy with the lowest loss, confirming its superior learning efficiency, as shown in Figure~\ref{fig:QCNN_Performance}. Building upon this, we further increased the number of training epochs to 1000, resulting in the enhanced W-QuPCG$^{+}$, which achieved the best overall performance and demonstrated the optimal configuration of the proposed model.

\begin{table}[H]
\centering
\footnotesize
\captionsetup{font=footnotesize}
\caption{Performance comparison of different implemented QCNNs (mean~$\pm$~std).}
\label{tab:QCNN_Performance}
\begin{tabular}{lcccc}
\hline
\textbf{Method} & \textbf{Accuracy (Train)} & \textbf{Accuracy (Test)} & \textbf{Loss (Train)} & \textbf{Loss (Test)} \\
\hline
I-QuPCG   & 51.06~$\pm$~2.3\% & 47.62~$\pm$~2.8\% & 1.00~$\pm$~0.25 & 1.10~$\pm$~0.54 \\
M-QuPCG   & 74.29~$\pm$~5.2\% & 53.33~$\pm$~4.0\% & 0.80~$\pm$~0.13 & 0.91~$\pm$~0.25 \\
W-QuPCG   & 91.43~$\pm$~2.9\% & 80.00~$\pm$~6.1\% & 0.69~$\pm$~0.12 & 0.73~$\pm$~0.03 \\
\textbf{W-QuPCG$^{+}$} & 97.14~$\pm$~4.6\% & 93.33~$\pm$~2.9\% & 0.42~$\pm$~0.62 & 0.45~$\pm$~0.12 \\
\hline
\end{tabular}
\end{table}

\begin{figure}[H]
    \centering
    \includegraphics[width=\textwidth, trim=0 20 0 20, clip]{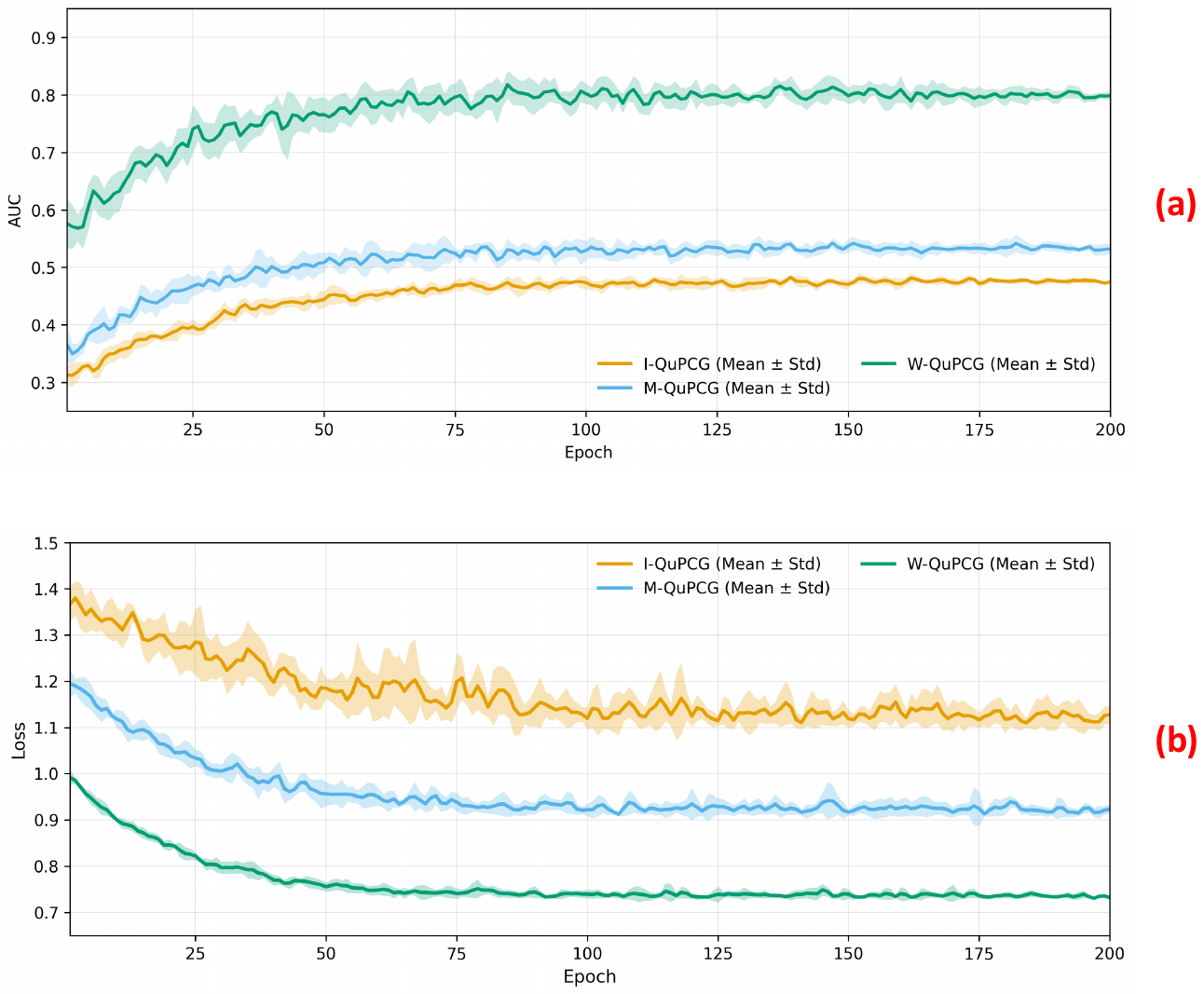}
    \captionsetup{justification=justified, font=footnotesize}
    \caption{Performance analysis of different QuPCG variants: \textbf{(a)} Accuracy; \textbf{(b)} Loss function value.}
    \label{fig:QCNN_Performance}
\end{figure}

\section{Discussion}
The results indicate that the hybrid quantum–classical model can distinguish abnormal heart sound patterns using very limited pixels. The wavelet transformation converts signals into energy maps, which allow the quantum circuit to analyze correlations that are difficult to capture in the time domain. The 8-qubit QCNN performed an acceptable separation between murmur and normal sounds. This supports the idea that small-scale quantum models can perform pattern recognition tasks efficiently when guided by well-engineered preprocessing and compression stages. Although the study demonstrates promising performance, there are practical challenges that must be addressed. The use of simulated backends cannot capture all physical noise sources that would occur on real quantum hardware. Future work should extend this research to real devices and compare the effect of decoherence on model stability. The dataset size is another limitation since quantum models benefit from diverse and balanced samples. Expanding the dataset to include multiple cardiac conditions and patient populations will improve generalization. In future studies, additional encoders, adaptive pooling circuits, and integration with explainable modules may help to clarify how quantum layers learn medical features.
\section{Conclusion}
This work presented a hybrid quantum–classical convolutional network for the classification of abnormal heart sound patterns. The method combines wavelet-based feature extraction with adaptive compression and quantum encoding to create a compact learning model. The results show that even with a small number of qubits, quantum circuits can process biomedical data effectively when designed with domain-specific constraints. These findings support the potential of quantum learning for diagnostic applications and motivate further investigation using real quantum processors and larger clinical datasets.
        \setcounter{figure}{0}
        \setcounter{equation}{0}
        \setcounter{table}{0} 
  \chapter{Conclusion}

This thesis presented a study on artificial intelligence and quantum methods for cardiopulmonary signal analysis and sensing. The work combined classical signal processing and next-generation intelligent sensing to enhance the separation, classification, and acquisition of heart and lung sounds. The studies covered both algorithmic development and sensor-level design, from electronic and photonic systems to integrated quantum biosensors. At the algorithmic level, non-negative matrix factorization (NMF) was revisited in combination with large language models (LLMs) to improve blind source separation of heart and lung sounds. The results confirmed that adding contextual feedback from LLMs improved the separation of overlapping signals. However, the model still depends on the interpretability of the language model feedback. While the LLaMA-based module showed reliable responses, its general-purpose nature limits precision for clinical sound data. Future work will include domain adaptation of medical LLMs, prompt optimization, and robustness tests using real patient recordings. Extending validation to clinical datasets will also reduce the gap between controlled experiments and real diagnostic environments.

The second contribution of this thesis introduced a generative AI model based on the variational autoencoder (VAE) architecture. The proposed model separated mixtures of heart and lung sounds in an unsupervised manner by mapping them into a probabilistic latent space. The addition of a temporal-consistency loss helped the model maintain smooth signal transitions and capture cyclic patterns in cardiopulmonary sounds. Using wavelet-based spectrograms as input allowed the network to represent transient frequency components effectively. The current encoder–decoder network may be improved by alternative latent sampling or quantum-inspired encoding schemes to increase efficiency. In addition, validation on large, clinically diverse datasets is required to test robustness across patient age groups and disease conditions. Future research can also explore hybrid models that combine probabilistic VAEs with explainable AI (XAI) tools to improve interpretability.

The third stream of this research examined the use of quantum artificial intelligence for biomedical signal processing. Quantum computing offers a new computational model that can process multiple states in parallel using qubits. Quantum machine learning (QML) unites this principle with classical optimization to extract correlations from high-dimensional biomedical data. In this work, hybrid quantum–classical neural networks, including quantum convolutional neural networks (QCNNs), were explored for phonocardiogram (PCG) classification. The design employed quantum convolution and pooling gates to analyze temporal and spectral patterns within compact qubit spaces. Compared to classical CNNs, QCNNs can reduce the number of trainable parameters while maintaining accuracy. This property is valuable for edge devices with limited resources.
Nevertheless, practical challenges remain. Current quantum simulators are restricted by qubit count and noise, which limit model scalability. The encoding of analog biomedical signals into quantum states also requires efficient normalization and compression. Future work should focus on hardware–software co-design, where quantum circuits and classical preprocessing are optimized jointly. In addition, improved quantum feature maps and parameterized circuits could enhance performance on real-world datasets. With further optimization, QCNNs can play a key role in compact, energy-efficient diagnostic systems.

The final part of this thesis focused on biosensing devices that enable data collection for AI-based analysis. The review covered the progression from classical electronic integrated circuits (EICs) to photonic integrated circuits (PICs) and quantum biosensors. EIC-based sensors provided the first on-chip biomedical platforms but face issues such as electrical noise and limited multiplexing. PIC-based sensors solved some of these constraints by using light–matter interaction, offering higher sensitivity and immunity to electromagnetic interference. Quantum biosensors, including quantum dots and nitrogen-vacancy (NV) centers, extend sensing to the quantum regime by detecting single molecules and weak magnetic fields. Despite their sensitivity, quantum biosensors still require advances in material quality, and coherence control. The future direction toward integrated quantum photonics (IQP) combines EIC, PIC, and quantum sensing elements on a single chip. Such integration could allow room-temperature operation, CMOS compatibility, and scalable manufacturing for next-generation diagnostic systems.

In summary, this thesis demonstrated how classical and quantum AI algorithms can enhance biomedical signal analysis and how future biosensing devices can support these advances through on-chip integration. The findings suggest several key directions for future research. First, algorithmic robustness must be improved through real clinical validation, domain-specific language models, and explainable generative architectures. Second, sensing platforms must move toward hybrid integration of electronic, photonic, and quantum materials. Finally, the convergence of AI and quantum hardware can form the foundation for intelligent diagnostic systems capable of real-time, noninvasive, and personalized healthcare monitoring. The progress in this field points to a future where intelligent quantum biosensors and AI-driven algorithms work together on portable platforms to provide accurate and accessible medical assessment for all patients.
        \setcounter{figure}{0}
        \setcounter{equation}{0}
        \setcounter{table}{0}

\begin{appendix}
		
	\label{appendix_Tables}
\chapter{Tables}

\begin{table}[H]
\centering
\scriptsize
\captionsetup{font=footnotesize}
\caption{Comparison of LingoNMF and other separation methods.}
\hspace*{-3cm} 
\label{tab:lingonmf_comparison}
\begin{tabular}{c p{5 cm} c c c p{7 cm}}
\toprule
\textbf{Ref.} & \textbf{Method} & \textbf{SIR [dB]} & \textbf{SAR [dB]} & \textbf{SDR [dB]} & \textbf{Description} \\
\midrule
\cite{CanadasQuesada2017} & Spectro-Temporal Clustering NMF & 24.1 & 17.9 & 17.3 & NMF variant using spectro-temporal clustering \\
\cite{Wang2023_SNMF_DCNN_MCUNet} & SNMF-DCNN-MCUNet & 28.9 & 14.9 & 14.4 & Supervised NMF with multi-channel U-Net deep CNN \\
\cite{Xie2019_UBSS_HOS_SR} & BSS + HOS & 13.3 & 14.5 & 9.9 & Blind source separation using higher-order statistics \\
\cite{Lei2018_LSTM} & LSTM & 26.7 & 13.9 & 13.3 & Recurrent neural network (long short-term memory) \\
\cite{UNet_ref_if_any} & U-Net & 21.2 & 11.7 & 10.9 & Convolutional encoder–decoder model \\
\cite{Wang2023_ECNet} & ECNet & 31.1 & 15.1 & 14.8 & Hybrid NMF–deep learning framework \\
\cite{Tsai2020_PCDAE} & PC-DAE & 14.9 & 16.7 & 12.5 & Periodic-coded deep autoencoder\\
\cite{Xie2021_AuxFuncNMF} & NMF + Auxiliary Function & 14.7 & 8.1 & 7.1 & NMF variant for reverberant recordings \\
\midrule
\textbf{This Work} & \textbf{LingoNMF} & \textbf{22.4} & \textbf{25.2} & \textbf{22.3} & NMF with large language model (LLM) guidance \\
\bottomrule
\end{tabular}
\vspace{2mm}
\end{table}

\begin{table}[H]
\centering
\scriptsize
\setlength{\tabcolsep}{4.5pt}
\renewcommand{\arraystretch}{1.1}
\captionsetup{font=footnotesize}
\caption{Comparison of VAE sound separation performance.}
\label{tab:results_all}
\begin{tabular}{lcccccccc}
\toprule
\textbf{Method} & \multicolumn{4}{c}{\textbf{Dataset One}} & \multicolumn{4}{c}{\textbf{Dataset Two}} \\
\cmidrule(lr){2-5}\cmidrule(lr){6-9}
& SDR \,$\uparrow$& SIR\,$\uparrow$ & SAR \,$\uparrow$& Execution Time\,$\downarrow$ & SDR\,$\uparrow$ & SIR\,$\uparrow$ & SAR\,$\uparrow$ & Execution Time\,$\downarrow$ \\
\midrule
VAE        & 8.7  & 7.9  & 17.2 & 66.3 & 6.3  & 6.0  & 21.0 & 44.3 \\
VAE-M      & 10.8 & 9.9  & 18.2 & 68.6 & 12.4 & 12.0 & 24.5 & 45.0 \\
VAE-T     & 15.9 & 14.9 & 20.0 & 68.3 & 8.2  & 8.0  & 22.0 & 44.7 \\
VAE-MT   & 17.1 & 19.2 & 20.3 & 69.8 & 14.0 & 13.5 & 26.2 & 45.1 \\
VAE-W      & 18.1 & 14.9 & 19.1 & 44.2 & 11.7 & 11.1 & 23.1 & 43.6 \\
VAE-WM    & 22.2 & 26.1 & 23.8 & 46.3 & 15.5 & 16.8 & 28.7 & 43.9 \\
VAE-WT   & 25.8 & 30.0 & 28.1 & 45.8 & 16.8 & 17.2 & 30.2 & 44.1 \\
\textbf{VAE-WMT}
           & \textbf{26.8} & \textbf{32.8} & \textbf{28.6} & \textbf{46.8}
           & \textbf{15.1} & \textbf{20.7} & \textbf{17.6} & \textbf{44.3} \\
\bottomrule
\end{tabular}
\end{table}

\begin{table}[H]
\centering
\scriptsize
\setlength{\tabcolsep}{6pt}
\renewcommand{\arraystretch}{1.1}
\captionsetup{font=footnotesize}
\caption{Comparison of the proposed VAE method with other baselines.}
\label{tab:vae_vs_nonNMF}
\begin{tabular}{lccc}
\toprule
\textbf{Method} & \textbf{SIR (dB)} & \textbf{SAR (dB)} & \textbf{SDR (dB)} \\
\midrule
UBSS (HOS + SR) \cite{Xie2019_UBSS_HOS_SR} & 13.3 & 14.5 & 9.9 \\
LSTM \cite{Lei2018_LSTM}                  & 26.7 & 13.9 & 13.3 \\
U-Net \cite{UNet_ref_if_any}              & 21.2 & 11.7 & 10.9 \\
ECNet \cite{Wang2023_ECNet}               & 31.1 & 15.1 & 14.8 \\
PC-DAE \cite{Tsai2020_PCDAE}              & 14.9 & 16.7 & 12.5 \\
\midrule
\textbf{VAE-WMT}                          & \textbf{32.8} & \textbf{28.6} & \textbf{26.8} \\
\bottomrule
\end{tabular}

\vspace{2pt}
\begin{flushleft}
\scriptsize
\justifying
\textit{Note:} UBSS (HOS + SR): underdetermined blind source separation using higher-order statistics and sparse representation; LSTM: long short-term memory network; U-Net: U-shaped convolutional neural network; ECNet: embedding centroid network; PC-DAE: periodicity-coded deep autoencoder; VAE-WMT: variational autoencoder with wavelet transform, output masking, and temporal consistency.
\end{flushleft}
\end{table}

\begin{table}[H]
\centering
\scriptsize
\setlength{\tabcolsep}{4pt}
\renewcommand{\arraystretch}{1.1}
\captionsetup{font=footnotesize}
\caption{Comparison of VAE Latent space clustering performance.}
\label{tab:latent_metrics}
\begin{tabular}{lcccccccc}
\toprule
\textbf{Algorithm} & \multicolumn{4}{c}{\textbf{Dataset One}} & \multicolumn{4}{c}{\textbf{Dataset Two}} \\
\cmidrule(lr){2-5}\cmidrule(lr){6-9}
& Silh.\,$\uparrow$ & DB\,$\downarrow$ & CH\,$\uparrow$ & Var\,$\downarrow$
& Silh.\,$\uparrow$ & DB\,$\downarrow$ & CH\,$\uparrow$ & Var\,$\downarrow$ \\
\midrule
VAE        & 0.324 & 4.433 & 114.7 & 0.604 & 0.259 & 4.473 & 126.2 & 0.799 \\
VAE-M      & 0.274 & 4.868 & 107.9 & 0.623 & 0.345 & 4.117 & 128.8 & 0.659 \\
VAE-T     & 0.280 & 4.469 & 122.7 & 0.676 & 0.311 & 4.743 & 116.5 & 0.616 \\
VAE-MT   & 0.310 & 4.511 & 119.2 & 0.649 & 0.322 & 4.257 & 126.1 & 0.648 \\
VAE-W      & 0.271 & 4.437 & 127.3 & 0.713 & 0.237 & 3.084 & 229.9 & 1.075 \\
VAE-WM    & 0.311 & 4.743 & 116.5 & 0.616 & 0.310 & 4.511 & 119.2 & 0.649 \\
VAE-WT   & 0.322 & 4.257 & 126.1 & 0.648 & 0.271 & 4.437 & 127.3 & 0.713 \\
\textbf{VAE-WMT}
           & \textbf{0.345} & \textbf{4.117} & \textbf{128.8} & \textbf{0.659}
           & \textbf{0.324} & \textbf{4.433} & \textbf{114.7} & \textbf{0.604} \\
\bottomrule
\end{tabular}
\end{table}

\begin{table}[H]
\centering
\scriptsize 
\captionsetup{font=footnotesize}
\caption{Examples of Quantum Dot and NV Center Biosensors}

\setlength{\fboxrule}{0.8pt}

\begin{adjustbox}{angle=90}
\fbox{%
    \begin{tabular}{p{1cm} p{4.5cm} p{4.2cm} p{6.2cm} p{1cm}}
    \textbf{Type} & \textbf{Application} & \textbf{Specification} & \textbf{Working Principle} & \textbf{Ref} \\
    \midrule
    \multirow{6}{*}{\centering \uprightcell{\bfseries Quantum Dot Biosensors}}
    & Streptomycin antibiotics detection 
    & LOD: 0.0033 pg/mL; Linear range: 0.01--812.21 pg/mL 
    & QDs increase electrode conductivity and active area, enabling sensitive detection via electron transfer changes upon target binding. 
    & \cite{C72ref92} \\
    
    & Anticancer (flutamide) drug detection in urine 
    & LOD: 0.0169 $\mu$M; Linear range: 0.05--590 $\mu$M 
    & NCQDs enhance electron transfer and surface area; quantum confinement improves electrocatalytic sensing. 
    & \cite{C72ref93} \\
    
    & Ovarian cancer biomarker detection 
    & LLOQ: 0.001 U/mL; Linear range: 0.001--400 U/mL 
    & Antibody-modified GQD nano-ink enables electrochemical detection via antigen--antibody binding. 
    & \cite{C72ref94} \\
    
    & Cancer metastasis biomarkers 
    & LOD: 0.03 ng/mL
    & GQDs enhance signal via peroxidase-like activity; hybrid nanocarriers amplify current from sandwich immunoassay. 
    & \cite{C72ref95} \\
    
    & Rheumatoid arthritis detection
    & Linear range: $1\times10^{4}$--$10$ cells/mL; LOD: 2 cells/mL 
    & QDs and perovskite enhance charge separation, boosting photocurrent for ultrasensitive detection. 
    & \cite{C72ref91} \\
    
    & ClO$^{-}$ ion detection; imaging in cells 
    & LOD: 13 nM; Linear range: 0.05--1.8 $\mu$M 
    & Sulfhydryl groups enable selective and rapid fluorescence quenching by ClO$^{-}$. 
    & \cite{C72ref14} \\
    \midrule
    
    \multirow{4}{*}{\centering \uprightcell{\bfseries NV Center Biosensors}}
    & Magnetic field detection 
    & Sensitivity: 1 nT/$\sqrt{\mathrm{Hz}}$ 
    & Optical spin-state readout with pulsed noise filtering. 
    & \cite{C72ref15} \\
    
    & HIV-1 RNA detection 
    & LOD: $8.2\times10^{-19}$ M 
    & Microwave-modulated NV fluorescence separates signal from background, enabling record-low detection limits. 
    & \cite{C72ref97} \\
    
    & SARS-CoV-2 RNA detection 
    & LOD: few hundred RNA copies; $<1$\% false negatives 
    & DNA-generated magnetic tags detected via NV magnetometry. 
    & \cite{C72ref99} \\
    
    & Free radical detection in living cells 
    & Single-cell resolution; real-time monitoring 
    & NV spin relaxation changes with local magnetic noise from free radicals. 
    & \cite{C72ref98} \\
    \end{tabular}%
}
\end{adjustbox}

\label{tab:qd_nv_table}
\end{table}

\begin{table}[H]
\centering
\scriptsize 
\captionsetup{font=footnotesize}
\caption{Examples of MZI and MRR Photonic Integrated Circuit (PIC) Biosensors}

\setlength{\fboxrule}{0.8pt}

\begin{adjustbox}{angle=90}
\fbox{%
    \begin{tabular}{p{1cm} p{5.5cm} p{4.8cm} p{4.5cm} p{1cm}}
    \textbf{Type} & \textbf{Working Principle} & \textbf{Specification} & \textbf{Application} & \textbf{Ref} \\
    \midrule
    \multirow{6}{*}{\centering \uprightcell{\bfseries MZI-based PIC}}
    & Optical phase shift between split paths due to molecular binding 
    & Sensitivity: $6.8 \times 10^{-6}$ RIU; LOD: $\sim$1 ng/mL; Footprint: $3.5 \times 0.6$ mm$^{2}$
    & Streptavidin--biotin sensing 
    & \cite{C72ref113} \\

    & Spiral MZI with enhanced evanescent interaction 
    & LOD: 0.25 pg/mm$^{2}$; Spiral length: 1.8 mm; Channel width: 190 $\mu$m  
    & Antibody--antigen detection 
    & \cite{C72ref114} \\

    & 3D laser-written MZI with orthogonal fluidic arm 
    & LOD: $1\times10^{-4}$ RIU; Spatial resolution: 10 $\mu$m; Channel: 150 $\mu$m 
    & Spatially resolved label-free sensing 
    & \cite{C72ref115} \\

    & Rib waveguide MZI (Si$_3$N$_4$)
    & LOD: $7\times10^{-6}$ RIU; Sensor length: 15 mm; Core: 250 nm  
    & Water pollutant detection 
    & \cite{C72ref8} \\

    \midrule
    \multirow{5}{*} {\centering \uprightcell{\bfseries MRR-based PIC}}
    & Subwavelength grating ring -- resonance shift from antigen binding 
    & LOD: 1.31 fM; Detection time: 15 min; Q-factor: $\sim$30,000 
    & SARS-CoV-2 \& influenza detection 
    & \cite{C72ref9} \\

    & SiN ring array with photonic packaging 
    & Q-factor: $>4\times10^{4}$
    & Respiratory antibody profiling 
    & \cite{C72ref10} \\

    & Slotted plasmonic ring with enhanced field confinement 
    & Sensitivity: 1609 nm/RIU; Shift: 29.6 nm (for $\Delta$RI = 0.0184); Slot: 10 nm 
    & Water pollutant detection 
    & \cite{C72ref116} \\

    & Rectangular semi-ring waveguide biosensor 
    & LOD: 0.12 ng/mL; Linear: 0.1--100 ng/mL; Width: 800 nm 
    & Hepatitis B virus detection 
    & \cite{C72ref117} \\

    & Photonic crystal ring + PN phase shifter 
    & Tuning range: 1.3--1.7 $\mu$m; Radius: 6 $\mu$m; Phase shift: $\pi$ 
    & Tunable biomedical sensing 
    & \cite{C72ref118} \\

    & Racetrack dual-ring interrogation 
    & Sensitivity: 110 nm/RIU; Delay: 10 ns @100 MHz; Tolerance: $\pm$8 nm 
    & Liquid refractometry 
    & \cite{C72ref119} \\
    \end{tabular}%
}
\end{adjustbox}

\label{tab:pic_comparison}
\end{table}

		\setcounter{figure}{0}
		\setcounter{equation}{0}
		\setcounter{table}{0}
\label{appendix_Algorithms}
\chapter{Algorithms}
\vspace{-1 cm}

\begin{breakablealgorithm}
\caption{PL-NMF}
\label{alg:plnmf}
\begin{spacing}{1}
\small
\begin{algorithmic}[1]
\Require mixture matrix $\mathbf{Y}\in\mathbb{R}_+^{2\times T}$, 
$\{\lambda_1,\lambda_2,\alpha,L\}_{\text{heart}}$, 
$\{\lambda_1,\lambda_2,\alpha,L\}_{\text{lung}}$
\Ensure Estimated signals $\mathbf{x}_{\text{heart}},\mathbf{x}_{\text{lung}}\in\mathbb{R}_+^{1\times T}$

\For{$j = 1$ \textbf{to} $2$}
    \If{$j = 1$}
        \State $\{\lambda_1,\lambda_2,\alpha,L\} \leftarrow \{\lambda_1,\lambda_2,\alpha,L\}_{\text{heart}}$
    \Else
        \State $\{\lambda_1,\lambda_2,\alpha,L\} \leftarrow \{\lambda_1,\lambda_2,\alpha,L\}_{\text{lung}}$
    \EndIf

    \State Initialize $\mathbf{A}_1,\mathbf{A}_2,\ldots,\mathbf{A}_L\in\mathbb{R}_+^{2\times2}$, $\mathbf{X}\in\mathbb{R}_+^{2\times T}$
    \State $\mathbf{Y} \leftarrow \lambda_1.\mathbf{Y} + \lambda_2$
    \Repeat
        \For{each layer $l = 1$ \textbf{to} $L$}
            \State $\mathbf{A}_l \leftarrow \mathbf{A}_l \!\cdot\!
            \left(
                \dfrac{\sum_{t=1}^{T} x_{jt}
                \left[
                    \left(\dfrac{y_{it}}
                    {[\mathbf{A}_1\mathbf{A}_2\ldots\mathbf{A}_l\mathbf{X}]_{it}}\right)^{\!\alpha}\right]}
                {\sum_{t=1}^{T} x_{jt}}
            \right)^{\!1/\alpha}$
            \State$\mathbf{X} \leftarrow \mathbf{X} \!\cdot\!
            \left(
                \dfrac{\sum_{i=1}^{I}
                [\mathbf{A}_l^\top\!\ldots\!\mathbf{A}_2^\top\!\mathbf{A}_1^\top]_{ij}
                \left[
                    \left(\dfrac{y_{it}}
                    {[\mathbf{A}_1\mathbf{A}_2\ldots\mathbf{A}_l\mathbf{X}]_{it}}\right)^{\!\alpha}\right]}
                {\sum_{i=1}^{I}[\mathbf{A}_l^\top\!\ldots\!\mathbf{A}_2^\top\!\mathbf{A}_1^\top]_{ij}}
            \right)^{\!1/\alpha}$
            \State Normalize $\mathbf{A}_l,\mathbf{X}$
        \EndFor
    \Until a stopping criterion is met

    \For{each $\mathbf{X}$ row $i = 1$ \textbf{to} $2$}
        \State Calculate $p_i$ using Equation~\ref{eq:14}
    \EndFor

    \If{$j = 1$}
        \State $\mathbf{x}_{\text{heart}} = \mathbf{X}(\arg\min_i p_i,:)$
    \Else
        \State $\mathbf{x}_{\text{lung}} = \mathbf{X}(\arg\max_i p_i,:)$
    \EndIf
\EndFor
\end{algorithmic}
\end{spacing}
\vspace{0.5 cm}
\end{breakablealgorithm}

\vspace{1cm}
\begin{breakablealgorithm}
\caption{LingoNMF}
\label{alg:lingonmf}
\begin{spacing}{1}
\small
\begin{algorithmic}[1]
\Require mixture matrix $\mathbf{Y}\in\mathbb{R}_+^{2\times T}$, 
$\{\lambda_1,\lambda_2,\alpha,L\}_{\text{heart}}$, 
$\{\lambda_1,\lambda_2,\alpha,L\}_{\text{lung}}$
\Ensure Estimated signals $\mathbf{x}_{\text{heart}},\mathbf{x}_{\text{lung}}\in\mathbb{R}_+^{1\times T}$

\For{$j = 1$ \textbf{to} $2$}
    \If{$j = 1$}
        \State $\{\lambda_1,\lambda_2,\alpha,L\} \leftarrow \{\lambda_1,\lambda_2,\alpha,L\}_{\text{heart}}$
    \Else
        \State $\{\lambda_1,\lambda_2,\alpha,L\} \leftarrow \{\lambda_1,\lambda_2,\alpha,L\}_{\text{lung}}$
    \EndIf

    \State Initialize $\mathbf{A}_1,\mathbf{A}_2,\ldots,\mathbf{A}_L\in\mathbb{R}_+^{2\times2}$, $\mathbf{X}\in\mathbb{R}_+^{2\times T}$
    \State $\mathbf{Y} \leftarrow \lambda_1.\mathbf{Y} + \lambda_2$
    \Repeat
        \For{each layer $l = 1$ \textbf{to} $L$}
            \State $\mathbf{A}_l \leftarrow \mathbf{A}_l \!\cdot\!
            \left(
                \dfrac{\sum_{t=1}^{T} x_{jt}
                \left[
                    \left(\dfrac{y_{it}}
                    {[\mathbf{A}_1\mathbf{A}_2\ldots\mathbf{A}_l\mathbf{X}]_{it}}\right)^{\!\alpha}\right]}
                {\sum_{t=1}^{T} x_{jt}}
            \right)^{\!1/\alpha}$
            \State$\mathbf{X} \leftarrow \mathbf{X} \!\cdot\!
            \left(
                \dfrac{\sum_{i=1}^{I}
                [\mathbf{A}_l^\top\!\ldots\!\mathbf{A}_2^\top\!\mathbf{A}_1^\top]_{ij}
                \left[
                    \left(\dfrac{y_{it}}
                    {[\mathbf{A}_1\mathbf{A}_2\ldots\mathbf{A}_l\mathbf{X}]_{it}}\right)^{\!\alpha}\right]}
                {\sum_{i=1}^{I}[\mathbf{A}_l^\top\!\ldots\!\mathbf{A}_2^\top\!\mathbf{A}_1^\top]_{ij}}
            \right)^{\!1/\alpha}$
            \State Normalize $\mathbf{A}_l,\mathbf{X}$
            \State Compute $\widehat{\mathbf{f}}_{f}$ from $\mathbf{X}$ using Equation~\ref{eq:16}
            \State Extract features from $\mathbf{X}$
            \State Prompt LLM $\rightarrow$ receive $\mathbf{f}_{f} = [f_{\mathrm{heart}}, f_{\mathrm{lung}}]$
        \EndFor
    \Until a stopping criterion is met

    \For{each $\mathbf{X}$ row $i = 1$ \textbf{to} $2$}
        \State Calculate $p_i$ using Equation~\ref{eq:14}
    \EndFor

    \If{$j = 1$}
        \State $\mathbf{x}_{\text{heart}} = \mathbf{X}(\arg\min_i p_i,:)$
    \Else
        \State $\mathbf{x}_{\text{lung}} = \mathbf{X}(\arg\max_i p_i,:)$
    \EndIf
\EndFor
\end{algorithmic}
\end{spacing}
\vspace{0.5 cm}
\end{breakablealgorithm}

\vspace{1cm}

\begin{algorithm} [H]
\caption{Chem-NMF}
\label{alg:chemnmf}
\begin{spacing}{1}
\begin{algorithmic}[1]
\Require Input data $\mathbf{Y}\in\mathbb{R}_+^{I\times T}$, layer rank $\mathbf{R}=[R_1,..,R_L]$, $\alpha$, $bf$
\Ensure Output activation map $\mathbf{X}^{(L)} \in \mathbb{R}^{R_L \times T}$, basis features $\mathbf{A}_{tot} \in \mathbb{R}^{I \times R_L}$

\State $\mathbf{Y}^{(0)} = \mathbf{Y}$; \quad $\mathbf{A}_{tot} = \mathbf{I}_I$
\For{$\ell=1$ \textbf{to} $L$}
    \If{$\ell=1$}
        \State Random $\mathbf{A}^{(1)} \in \mathbb{R}_+^{I\times R_1}$, $\mathbf{X}^{(1)} \in \mathbb{R}_+^{R_1\times T}$
    \Else
        \State Random $\mathbf{X}^{(\ell)} \in \mathbb{R}_+^{R_\ell\times T}, \mathbf{A}_{rand}\in \mathbb{R}_+^{R_{\ell-1}\times R_\ell}$
        \State $\mathbf{A}_{base} = \mathrm{mean}(\mathbf{A}^{(\ell-1)}) \cdot \mathds{1}_{R_{\ell-1}\times R_\ell}$
        \State $\mathbf{A}^{(\ell)} = (1-bf)\mathbf{A}_{rand} + bf\mathbf{A}_{base}$
    \EndIf
    \Repeat

    \State $\widehat{\mathbf{Y}}^{(\ell-1)} \gets \mathbf{A}^{(\ell)} \mathbf{X}^{(\ell)}$

    \State $\mathbf{X}^{(\ell)} \gets \mathbf{X}^{(\ell)} \odot
        \left(
          \frac{ (\mathbf{A}^{(\ell)})^\top 
                 \left( \mathbf{Y}^{(\ell-1)} \oslash \widehat{\mathbf{Y}}^{(\ell-1)} \right)^{\alpha} }
               { (\mathbf{A}^{(\ell)})^\top \mathds{1}_I \, \mathds{1}_T^\top }
        \right)^{1/\alpha}$
    
    \State $\mathbf{A}^{(\ell)} \gets \mathbf{A}^{(\ell)} \odot
        \left(
          \frac{ \left( \mathbf{Y}^{(\ell-1)} \oslash \widehat{\mathbf{Y}}^{(\ell-1)} \right)^{\alpha}
                 (\mathbf{X}^{(\ell)})^\top }
               { \mathds{1}_I \, (\mathbf{X}^{(\ell)} \mathds{1}_T)^\top }
        \right)^{1/\alpha}$

    \Until a stopping criterion is met
    \State $\mathbf{A}_{tot} \gets \begin{cases}
        \mathbf{A}^{(1)}, & \ell=1 \\
        \mathbf{A}_{tot}\mathbf{A}^{(\ell)}, & \ell>1
    \end{cases}$
    \State $\mathbf{Y}^{(\ell)} \gets \mathbf{X}^{(\ell)}$
\EndFor
\State \Return $\mathbf{A}_{tot}, \mathbf{X}^{(L)}, \{\mathbf{A}^{(\ell)}\}, \{\mathbf{X}^{(\ell)}\}$
\end{algorithmic}
\end{spacing}
\vspace{0.5 cm}
\end{algorithm}

		\setcounter{figure}{0}
		\setcounter{equation}{0}
		\setcounter{table}{0}
\label{appendix_Proofs}
\chapter{Mathematical Proofs}

\section{Theorem 4.4.1}
\label{sec:thm441}
\begin{proof}

We differentiate the cost function $D_{\alpha}(\mathbf{Y} \parallel \mathbf{A}\mathbf{X})$ with respect to $x_{jt}$:
\begin{equation}
\frac{\partial D}{\partial x_{jt}} = \frac{1}{\alpha} \sum_{i} a_{ij} \left[ 1 - \left( \frac{y_{it}}{[\mathbf{A}\mathbf{X}]_{it}} \right)^{\alpha} \right].
\end{equation}

We employ a projected (transformed) gradient descent approach:
\begin{equation}
\Phi(x_{jt}) \leftarrow \Phi(x_{jt}) - \eta_{jt} \frac{\partial D}{\partial \Phi(x_{jt})},
\label{eq:grad}
\end{equation}

where we define $\Phi(x) = x^{\alpha}$, and choose the learning rate as:

\begin{equation}
\eta_{jt} = \frac{\alpha^2 \Phi(x_{jt})}{x_{jt}^{1-\alpha} \sum_{i} a_{ij}}.
\label{eq:eta}
\end{equation}

Applying this transformation and using the chain rule, we obtain:
\begin{equation}
\frac{\partial D}{\partial \Phi(x_{jt})} 
= \frac{\partial D}{\partial x_{jt}} \cdot \frac{\partial x_{jt}}{\partial \Phi(x_{jt})} 
= \frac{1}{\alpha} \sum_{i} a_{ij} \left[ 1 - \left( \frac{y_{it}}{[\mathbf{A}\mathbf{X}]_{it}} \right)^{\alpha} \right] \cdot \frac{1}{\alpha x_{jt}^{\alpha-1}}.
\label{eq:devgrad}
\end{equation}

Since $\Phi(x_{jt}) = x_{jt}^{\alpha}$, we substitute Equation~\ref{eq:eta} and Equation~\ref{eq:devgrad} into Equation~\ref{eq:grad}, yielding:
\allowdisplaybreaks
\begin{align}
x_{jt} & \leftarrow \Phi^{-1} \left( \Phi(x_{jt}) - \eta_{jt} \frac{\partial D}{\partial \Phi(x_{jt})} \right) \nonumber \\
&\leftarrow \left( x_{jt}^{\alpha} - \frac{\alpha^2 x_{jt}^{\alpha}}{x_{jt}^{1-\alpha} \sum_{i} a_{ij}} \cdot \frac{\partial D}{\partial \Phi(x_{jt})} \right)^{1/\alpha} \nonumber \\
&\leftarrow \left( x_{jt}^{\alpha} - \frac{\alpha^2 x_{jt}^{\alpha}}{x_{jt}^{1-\alpha} \sum_{i} a_{ij}} \cdot \frac{1}{\alpha} \sum_{i} a_{ij} \left[ 1 - \left( \frac{y_{it}}{[\mathbf{A}\mathbf{X}]_{it}} \right)^{\alpha} \right] \cdot \frac{1}{\alpha x_{jt}^{\alpha-1}} \right)^{1/\alpha} \nonumber \\
&\leftarrow \Bigg( x_{jt}^{\alpha} - \frac{\alpha^2 x_{jt}^{\alpha}}{x_{jt}^{1-\alpha} \sum_{i=1}^{I} a_{ij}} \cdot \frac{1}{\alpha} \sum_{i} a_{ij} \cdot \frac{1}{\alpha x_{jt}^{\alpha-1}} \nonumber \\ 
&\quad + \frac{\alpha^2 x_{jt}^{\alpha}}{x_{jt}^{1-\alpha} \sum_{i} a_{ij}} \cdot \frac{1}{\alpha} \sum_{i} a_{ij} \left( \frac{y_{it}}{[\mathbf{A}\mathbf{X}]_{it}} \right)^{\alpha} \cdot \frac{1}{\alpha x_{jt}^{\alpha-1}} \Bigg)^{1/\alpha} \nonumber \\ 
&\leftarrow \left( x_{jt}^{\alpha} - x_{jt}^{\alpha} + \frac{x_{jt}^{\alpha}}{\sum_{i=1}^{I} a_{ij}} \cdot \sum_{i} a_{ij} \left( \frac{y_{it}}{[\mathbf{A}\mathbf{X}]_{it}} \right)^{\alpha} \right)^{1/\alpha} \nonumber \\
&\leftarrow \left( \frac{x_{jt}^{\alpha}}{\sum_{i} a_{ij}} \cdot \sum_{i} a_{ij} \left( \frac{y_{it}}{[\mathbf{A}\mathbf{X}]_{it}} \right)^{\alpha} \right)^{1/\alpha}. \nonumber
\end{align}
\begin{equation}
    x_{jt} \leftarrow x_{jt} \left( \frac{\sum\limits_{i} a_{ij} \left( \frac{y_{it}}{[\mathbf{A}\mathbf{X}]_{it}} \right)^{\alpha}}{\sum\limits_{i} a_{ij}} \right)^{\frac{1}{\alpha}}.
\end{equation}

Similarly, we derive the update rule for $a_{ij}$ as:  

\begin{equation}
a_{ij} \leftarrow a_{ij} \left( \frac{\sum\limits_{t=1}^{T} x_{jt} \left( \frac{y_{it}}{[\mathbf{A}\mathbf{X}]_{it}} \right)^{\alpha}}{\sum\limits_{t=1}^{T} x_{jt}} \right)^{\frac{1}{\alpha}}. 
\end{equation}
\qedhere \text{ QED.}
\end{proof}
\section{Lemma 4.4.1}
\label{sec:lm441}
\begin{proof} 
We have two conditions:

\textbf{(i) Identity Condition:} \textit{Setting $\mathbf{X}' = \mathbf{X}$ in the auxiliary function $G(\mathbf{X}, \mathbf{X}')$ recovers the original $F(\mathbf{X})$, such that  $G(\mathbf{X}, \mathbf{X}) = F(\mathbf{X})$.}

Setting $\mathbf{X}' = \mathbf{X}$, we simplify $\zeta_{itj}$ as:
\[
\zeta_{itj} = \frac{a_{ij} x_{jt}}{\sum_{j=1}^{J} a_{ij} x_{jt}} 
= \frac{a_{ij} x_{jt}}{[\mathbf{A}\mathbf{X}]_{it}}.
\]

Substituting $\zeta_{itj}$ into $G(\mathbf{X}, \mathbf{X})$ and simplifying, we get:
\begin{align}
G(\mathbf{X}, \mathbf{X}) 
&= \frac{1}{\alpha(\alpha-1)} \sum_{ijt} 
y_{it} \frac{a_{ij} x_{jt}}{[\mathbf{A}\mathbf{X}]_{it}} 
\left[ \left( \frac{[\mathbf{A}\mathbf{X}]_{it}}{y_{it}} \right)^{1-\alpha} 
+ (\alpha-1) \frac{[\mathbf{A}\mathbf{X}]_{it}}{y_{it}} - \alpha \right] \nonumber\\
&= \frac{1}{\alpha(\alpha-1)} \sum_{it} 
y_{it} \frac{\sum_{j=1}^{J} a_{ij} x_{jt}}{[\mathbf{A}\mathbf{X}]_{it}} 
\left[ \left( \frac{[\mathbf{A}\mathbf{X}]_{it}}{y_{it}} \right)^{1-\alpha} 
+ (\alpha-1) \frac{[\mathbf{A}\mathbf{X}]_{it}}{y_{it}} - \alpha \right] \nonumber\\
&= \frac{1}{\alpha(\alpha-1)} \sum_{it} 
y_{it} \frac{[\mathbf{A}\mathbf{X}]_{it}}{[\mathbf{A}\mathbf{X}]_{it}} 
\left[ \left( \frac{[\mathbf{A}\mathbf{X}]_{it}}{y_{it}} \right)^{1-\alpha} 
+ (\alpha-1) \frac{[\mathbf{A}\mathbf{X}]_{it}}{y_{it}} - \alpha \right] \nonumber\\
&= \frac{1}{\alpha(\alpha-1)} \sum_{it} 
\Big( [\mathbf{A}\mathbf{X}]_{it}^{1-\alpha} y_{it}^\alpha 
+ (\alpha-1) [\mathbf{A}\mathbf{X}]_{it}^\alpha - \alpha y_{it} \Big) 
= F(\mathbf{X}).
\end{align}

\qedhere \text{ QED.}

\vspace{1em}
\textbf{(ii) Upper Bound Condition:}  
\textit{The auxiliary function $G(\mathbf{X}, \mathbf{X}')$ provides an upper bound on $F(\mathbf{X})$, such that $G(\mathbf{X}, \mathbf{X}') \geq F(\mathbf{X})$.}

\begin{definition}[Jensen’s Inequality]
Let $f(z)$ be a convex function. For any weights $w_j \geq 0$ such that $\sum_{j} w_j = 1$, we have:
\begin{equation}
    f\left(\sum_{j} w_j z_j\right) \leq \sum_{j} w_j f(z_j).
\end{equation}
\end{definition} 
We consider the function associated with the $\alpha$-divergence:
\begin{equation}
    f(z) = \frac{1}{\alpha(\alpha-1)} \left[ z^{1-\alpha} + (\alpha -1)z - \alpha \right].
\end{equation}
Its first derivative is:
\begin{equation}
    f'(z) = \frac{1}{\alpha(\alpha-1)} \left[ (1-\alpha)z^{-\alpha} + (\alpha-1) \right].
\end{equation}
Differentiating again, we obtain:
\begin{equation}
    f''(z) = \frac{1}{\alpha(\alpha-1)} \left[ -\alpha(1-\alpha)z^{-\alpha-1} \right].
\end{equation}
Rewriting this:
\begin{equation}
    f''(z) = z^{-\alpha-1}.
\end{equation}

Since $z^{-\alpha-1} \geq 0$ for $z > 0$, we conclude $f(z)$ is convex. Applying Jensen’s inequality, we obtain:
\begin{equation}
    f\left( \sum_{j}\frac{a_{ij} x_{jt}}{y_{it}} \right) 
    \leq \sum_{j} \zeta_{itj} f\left(\frac{a_{ij} x_{jt}}{y_{it}\zeta_{itj}} \right),
\end{equation}
where the weights $\zeta_{itj}$ are defined as:
\begin{equation}
    \zeta_{itj} = \frac{a_{ij} x'_{jt}}{\sum_{j=1}^{J} a_{ij} x'_{jt}}, 
    \quad \sum_{j} \zeta_{itj} = 1, \quad \zeta_{itj} \geq 0.
\end{equation}

Multiplying both sides by $y_{it}$ and summing over all $i$ and $t$, we get:
\begin{equation}
   F(\mathbf{X}) = \sum_{it} y_{it} f\left( \sum_{j}\frac{a_{ij} x_{jt}}{y_{it}} \right) 
   \leq \sum_{itj} y_{it} \zeta_{itj} f\left(\frac{a_{ij} x_{jt}}{y_{it}\zeta_{itj}} \right) 
   = G(\mathbf{X}, \mathbf{X}').
\end{equation}
\qedhere $\square$ \text{ QED.}
\end{proof}

\section{Theorem 4.4.2}
\label{sec:thm442}
\begin{proof} 

Consider the function associated with the $\alpha$-divergence:
\begin{equation}
    f(z) = \frac{1}{\alpha(\alpha-1)} \left[ z^{1-\alpha} + (\alpha -1)z - \alpha \right].
\end{equation}
Its first derivative is:
\begin{equation}
    f'(z) = \frac{1}{\alpha(\alpha-1)} \left[ (1-\alpha)z^{-\alpha} + (\alpha-1) \right].
    \label{eq:firstderivative}
\end{equation}
Rewriting $F(\mathbf{X})$ and $G(\mathbf{X},\mathbf{X}')$ as:
\begin{equation}
   F(\mathbf{X})= \sum_{it} y_{it} f\left( \sum_{j}\frac{ a_{ij} x_{jt} }{y_{it}}\right).
\end{equation}

\begin{equation}
   G(\mathbf{X},\mathbf{X}')=\sum_{itj} y_{it} \zeta_{itj} f\left(\frac{a_{ij} x_{jt}}{y_{it}\zeta_{itj}} \right).
\end{equation}

We minimize $G(\mathbf{X}, \mathbf{X}')$ by setting the gradient to zero:

\begin{equation}
    \frac{\partial G(\mathbf{X}, \mathbf{X}')}{\partial x_{jt}} = \sum_{i} y_{it} \zeta_{itj} f'\left(\frac{a_{ij} x_{jt}}{y_{it}\zeta_{itj}}\right) \cdot \frac{\partial}{\partial x_{jt}} \left( \frac{a_{ij} x_{jt}}{y_{it}\zeta_{itj}} \right)=0.
\end{equation}

Since \( \frac{\partial}{\partial x_{jt}} \left( \frac{a_{ij} x_{jt}}{y_{it}\zeta_{itj}} \right) = \frac{a_{ij}}{y_{it}\zeta_{itj}} \), we get:

\begin{equation}
    \frac{\partial G(\mathbf{X}, \mathbf{X}')}{\partial x_{jt}} = \sum_{i} y_{it} \zeta_{itj} f'\left(\frac{a_{ij} x_{jt}}{y_{it}\zeta_{itj}}\right) \frac{a_{ij}}{y_{it}\zeta_{itj}}=0.
\end{equation}

Substituting \( f'(z) \) from Equation~\ref{eq:firstderivative} into the expression:

\begin{align}
    \frac{\partial G(\mathbf{X}, \mathbf{X}')}{\partial x_{jt}} = & \sum_{i} y_{it}\zeta_{itj} \cdot\frac{1}{\alpha(\alpha-1)} \left[ (1-\alpha) \left( \frac{a_{ij} x_{jt}}{y_{it}\zeta_{itj}} \right)^{-\alpha} + (\alpha-1) \right] \frac{a_{ij}}{y_{it}\zeta_{itj}} \nonumber \\
    & = \frac{1}{\alpha}\sum_{i}  a_{ij}\left[ 1- \left( \frac{a_{ij} x_{jt}}{y_{it}\zeta_{itj}} \right)^{-\alpha} \right] =0.
\end{align}

Rearranging the equation for $\alpha \neq 0$:

\begin{equation}
    \sum_{i} a_{ij} = \sum_{i} a_{ij} \left( \frac{a_{ij} x_{jt}}{y_{it} \zeta_{itj}} \right)^{-\alpha}= \sum_{i} a_{ij} \left( \frac{y_{it} \zeta_{itj}}{a_{ij} x_{jt}} \right)^{\alpha}.
\end{equation}

Dividing both sides by \( \sum_{i}a_{ij} \):

\begin{equation}
    1 = \frac{\sum_{i} a_{ij} \left( \frac{y_{it} \zeta_{itj}}{a_{ij} x_{jt}} \right)^{\alpha}}{\sum_{i} a_{ij}}.
\end{equation}

Substituting \( \zeta_{itj} \) from Equation~\ref{eq:zeta} into the expression:

\begin{equation}
    1 = \frac{\sum_{i} a_{ij} \left( \frac{y_{it}a_{ij}x'_{jt}}{a_{ij}x_{jt}\sum_{i} a_{ij} x'_{jt}} \right)^{\alpha}}{\sum_{i}a_{ij}} = \frac{\sum_{i} a_{ij} \left( \frac{y_{it}}{\sum_{i} a_{ij}x'_{jt}} \right)^{\alpha}\cdot \left( \frac{x'_{jt}}{x_{jt}} \right)^{\alpha}}{\sum_{i}a_{ij}},
\end{equation}

which leads to:

\begin{equation}
    \left(\frac{x_{jt}}{x'_{jt}}\right) = \left[ \frac{\sum\limits_{i} a_{ij} \left( \frac {y_{it}} {\sum\limits_{i} a_{ij}x'_{jt}}\right)^{\alpha}}{\sum\limits_{i} a_{ij}} \right]^{1/\alpha},
\end{equation}

which suggests the following update rule for \( x_{jt} \):

\begin{equation}
    x_{jt} \leftarrow x_{jt} \left( \frac{\sum\limits_{i} a_{ij} \left( \frac{y_{it}}{[\mathbf{A}\mathbf{X}]_{it}} \right)^{\alpha}}{\sum\limits_{i} a_{ij}} \right)^{\frac{1}{\alpha}}.
\end{equation}

Since \( G(\mathbf{X}, \mathbf{X}') \) is an auxiliary function for \( F(\mathbf{X}) \), minimizing \( G(\mathbf{X}, \mathbf{X}') \) at each step ensures that \( F(\mathbf{X}) \) is non-increasing: 

\begin{equation}  
    F(\mathbf{X}^{(t+1)}) \leq G(\mathbf{X}^{(t+1)}, \mathbf{X}^{(t)}) \leq G(\mathbf{X}^{(t)}, \mathbf{X}^{(t)}) = F(\mathbf{X}^{(t)}).
\end{equation}  

\qedhere \text{ QED.}

\end{proof}

\section{Lemma 4.4.2}
\label{sec:lm442}
\begin{proof}
Each layer solves the following minimization problem:
\begin{equation}
    (\mathbf{A}^{(l)}, \mathbf{X}^{(l)}) = \arg \min_{\mathbf{A}, \mathbf{X}} D_{\alpha}(\mathbf{X}^{(l-1)} \parallel \mathbf{A}\mathbf{X}).
\end{equation}
This ensures:
\begin{equation}
    D_{\alpha}(\mathbf{X}^{(l-1)} \parallel \mathbf{A}^{(l)} \mathbf{X}^{(l)}) \leq D_{\alpha}(\mathbf{X}^{(l-1)} \parallel \mathbf{A}^{(l-1)} \mathbf{X}^{(l-1)}).
    \label{eq:this}
\end{equation}
Applying the non-increasing property in Equation~\ref{eq:Gfunc} to two consecutive layers, we get:
\begin{equation}
     D_{\alpha}(\mathbf{X}^{(l-1)} \parallel \mathbf{A}^{(l-1)} \mathbf{X}^{(l-1)})\leq D_{\alpha}(\mathbf{X}^{(l-2)} \parallel \mathbf{A}^{(l-1)} \mathbf{X}^{(l-1)}).
     \label{eq:that}
\end{equation}

Combining Equation~\ref{eq:this} and Equation\ref{eq:that}, we derive:
\begin{equation}
    D_{\alpha}(\mathbf{X}^{(l-1)} \parallel \mathbf{A}^{(l)} \mathbf{X}^{(l)}) \leq D_{\alpha}(\mathbf{X}^{(l-1)} \parallel \mathbf{A}^{(l-1)} \mathbf{X}^{(l-1)})\leq D_{\alpha}(\mathbf{X}^{(l-2)} \parallel \mathbf{A}^{(l-1)} \mathbf{X}^{(l-1)}).
\end{equation}

Thus, by definition,
\begin{equation}
    D_l \leq D_{l-1}.
\end{equation}
\qedhere \text{ QED.}
\end{proof}
\section{Theorem 4.4.3}
\label{sec:thm443}
\begin{proof}
Let $M_l$ be the maximum divergence along the optimization path at layer $l$,
and let $D_l$ denote the divergence at layer $l$. Assume that
\begin{enumerate}
    \item $M_l$ is non-increasing for all sufficiently large $l$;
    \item $D_l$ is non-increasing for all $l > 1$ (by Lemma~4.4.2);
    \item there exists $L_\Delta \in \mathbb{N}$ such that for all $l \ge L_\Delta$,
    \begin{equation}
        0 \le D_{l-2} - D_{l-1} \le M_{l-1} - M_l.
    \end{equation}
\end{enumerate}
Define
\begin{equation}
    \mu_l = M_{l-1} - M_l,\qquad
    \delta_l = D_{l-1} - D_l,\qquad
    \xi_l = M_l - D_{l-1}.
\end{equation}
Then, for all $l > 1$,
\begin{equation}
\begin{aligned}
\xi_l - \xi_{l-1}
&= (M_l - D_{l-1}) - (M_{l-1} - D_{l-2}) \\
&= (M_l - M_{l-1}) + (D_{l-2} - D_{l-1}) \\
&= -\,\mu_l + \delta_{l-1}.
\end{aligned}
\end{equation}
By Lemma~4.4.2, $D_l$ is non-increasing, hence
\begin{equation}
    \forall\,l \ge 3:\quad \delta_{l-1} = D_{l-2} - D_{l-1} \ge 0.
\end{equation}
Since $M_l$ is non-increasing for all sufficiently large $l$, there exists
$L_M \in \mathbb{N}$ such that
\begin{equation}
    \forall\,l \ge L_M:\quad \mu_l = M_{l-1} - M_l \ge 0.
\end{equation}
By assumption (3), there exists $L_\Delta$ such that for all $l \ge L_\Delta$,
\begin{equation}
    0 \le \delta_{l-1} \le \mu_l.
\end{equation}
Set $L^* := \max\{L_M, L_\Delta, 3\}$. Then, for all $l \ge L^*$,
\begin{equation}
\begin{aligned}
0 \le \delta_{l-1} \le \mu_l
&\Longrightarrow -\,\mu_l + \delta_{l-1} \le 0 \\
&\Longrightarrow \xi_l - \xi_{l-1} \le 0 \\
&\Longrightarrow \xi_l \le \xi_{l-1}.
\end{aligned}
\end{equation}
Define the probability weights
\begin{equation}
    P_l := \frac{1}{Z}\, e^{-\beta \xi_l}, \qquad \beta > 0,
\end{equation}
where $Z$ is a normalization constant. Since $x \mapsto e^{-\beta x}$ is strictly
decreasing for $\beta > 0$, the monotonicity of $(\xi_l)$ implies that, for all
$l \ge L^*$,
\begin{equation}
    \xi_l \le \xi_{l-1}
    \ \Longrightarrow\ 
    \frac{1}{Z} e^{-\beta \xi_l} \ge \frac{1}{Z} e^{-\beta \xi_{l-1}}
    \ \Longrightarrow\ 
    P_l \ge P_{l-1}.
\end{equation}
\qedhere \text{ QED.}
\end{proof}

\section{Lemma 4.4.3}
\label{sec:lm443}
\begin{proof}
Since $D_l, M_l \ge 0$ are non-increasing and lower bounded, the sequences $M_l$ and $D_l$ converge to finite limits $M_\infty$ and $D_\infty$, respectively.
\begin{equation}
\lim_{l\to\infty} M_l = M_\infty,\qquad \lim_{l\to\infty} D_l = D_\infty.
\end{equation}
For any $\varepsilon>0,\;\exists\,N_M,N_D\in\mathbb{N}$ such that:
\begin{equation}
l\ge N_M \Rightarrow |M_l-M_\infty|<\tfrac{\varepsilon}{2},\qquad l\ge N_D \Rightarrow |D_l-D_\infty|<\tfrac{\varepsilon}{2}.
\end{equation}
Let $N_\xi=\max\{N_M,\;N_D\}$. Then for all $l\ge N_\xi$:
\begin{equation}
\begin {aligned}
&|\xi_l-(M_\infty-D_\infty)|
=|(M_l-M_\infty)-(D_{l-1}-D_\infty)| \\
&\le |M_l-M_\infty|+|D_{l-1}-D_\infty|
<\varepsilon.
\end {aligned}
\end{equation}
Which implies:
\begin{equation}
\lim_{l\to\infty}\xi_l = M_\infty - D_\infty. 
\end{equation}
By continuity property we have:
\begin{equation}
\lim_{l\to\infty} P_l
=\lim_{l\to\infty}\tfrac{1}{Z}e^{-\beta\xi_l}
=\tfrac{1}{Z}e^{-\beta\lim_{l\to\infty}\xi_l}
=\tfrac{1}{Z}e^{-\beta(M_\infty-D_\infty)}
=P_\infty.
\end{equation}
\qedhere \text{ QED.}
\end{proof}

\section{Lemma 4.4.4}
\label{sec:lm444}
\begin{proof}
At layer $l$, the escape probability is $P_l$. This implies:
\begin{equation}
\mathbb{P}(\text{no escape at layer }l \mid S_{l-1}) = 1 - P_l.
\end{equation}
For $l=1$,
\begin{equation}
\mathbb{P}(S_1)= 1- P_l \vert_{l=1}=1-P_1.
\end{equation}
The law of conditional probability states:
\begin{equation}
\mathbb{P}(S_l)=\mathbb{P}(S_{l-1})\cdot \mathbb{P}(\text{no escape at layer }l \mid S_{l-1})=\mathbb{P}(S_{l-1})\cdot(1 - P_l).
\end{equation}
Assume for some $m \ge 1$:
\begin{equation}
\mathbb{P}(S_m)=\prod_{j=1}^{m}(1-P_j).
\end{equation}
Thus:
\begin{equation}
\mathbb{P}(S_{m+1})
=\mathbb{P}(S_m)(1-P_{m+1})
=\Big(\prod_{j=1}^{m}(1-P_j)\Big)(1-P_{m+1})
=\prod_{j=1}^{m+1}(1-P_j).
\end{equation}
By induction, the claim holds for all $n \ge 1$. Therefore:
\begin{equation}
   \mathbb{P}(S_n)=\prod_{l=1}^{n}(1-P_l). 
\end{equation}
\qedhere \text{ QED.}
\end{proof}

		\setcounter{figure}{0}
		\setcounter{equation}{0}
		\setcounter{table}{0}
\end{appendix}


\begingroup
\footnotesize
\printbibliography
\endgroup

\label{NumDocumentPages}

\end{document}